\newcommand{\hide}[1]{}
\definecolor{darkgreen}{rgb}{0,0.5,0}
\theoremstyle{plain}
\newtheorem{theorem}{Theorem}
\newtheorem{lemma}[theorem]{Lemma}
\newtheorem{corollary}[theorem]{Corollary}
\theoremstyle{definition}
\newtheorem{definition}[theorem]{Definition}
\theoremstyle{remark}
\newtheorem{remark}[theorem]{Remark}
\setlist[itemize]{label=--}
\setlist[enumerate]{label=(\arabic*),labelindent=\parindent,leftmargin=*}
\DeclarePairedDelimiter\braces{\{}{\}}
\NewDocumentCommand\set{O{}mg}{\ensuremath{\braces[#1]{#2\IfNoValueTF{#3}{}{\,:\,#3}}}}
\DeclareMathOperator{\dist}{dist}
\DeclareMathOperator{\BFS}{BFS}
\newcommand{\mytitle}[1]{\noindent\textbf{#1}}
\newcommand{\myit}[1]{(\textit{#1})}
\newcommand{\algorithmInnerMargin}{0.3cm}
\newcommand{\logstar}{\log^{*}}
\newcommand{\Tnode}{\ensuremath{$T$\text{-node}}\xspace}
\newcommand{\Tnodes}{$T$-nodes\xspace}
\newcommand{\runtimegeneral}{\textbf{Runtime:~}}
\newcommand{\runtime}{\textbf{Runtime:~}}
\newcommand{\runtimes}{\textbf{Runtime for small $\Delta$:~}}
\newcommand{\runtimel}{\textbf{Runtime for large $\Delta$:~}}
\newclass{\lcl}{LCL\xspace}
\newclass{\local}{LOCAL\xspace}
\newclass{\SLOCAL}{SLOCAL\xspace}
\newcommand{\GraphI}{G_{\uparrow i}}
\newcommand{\GraphCR}{\mathcal{C}_{DCC}}
\newcommand{\GraphR}{\mathcal{G}_{DCC}}
\newcommand{\namedref}[2]{\hyperref[#2]{#1~\ref*{#2}}}
\newenvironment{mycover}
{\list{}{\listparindent 0pt
		\itemindent    \listparindent
		\leftmargin    1cm
		\rightmargin   1cm
		\parsep        0pt}%
	\raggedright
	\item\relax}
{\endlist}
\newcommand{\myemail}[1]{\,$\cdot$\, {\small #1}}
\newcommand{\myaff}[1]{\,$\cdot$\, {\small #1}\par\medskip}
\begin{document}

	\newgeometry{margin=1in,bottom=0.2in}

	\begin{mycover}
		{\huge\bfseries\boldmath Improved Distributed
                  $\Delta$-Coloring\par}
		\bigskip
		\bigskip

		\textbf{Mohsen Ghaffari}
		\myemail{ghaffari@inf.ethz.ch}
		\myaff{ETH Z\"{u}rich, Switzerland}

		\textbf{Juho Hirvonen\footnote{Supported by ERC Grant No.\ 336495 (ACDC) and Ulla Tuominen Foundation.}}
		\myemail{ juho.hirvonen@aalto.fi}
		\myaff{Aalto University, Finland}

		\textbf{Fabian Kuhn\footnote{Supported by ERC Grant No.\ 336495 (ACDC)}}
		\myemail{kuhn@cs.uni-freiburg.de}
		\myaff{University of Freiburg, Germany}

		\textbf{Yannic Maus\footnotemark[2]}
		\myemail{yannic.maus@cs.uni-freiburg.de}
		\myaff{University of Freiburg, Germany}
	\end{mycover}

\begin{abstract}
		We present a randomized distributed algorithm that computes a $\Delta$-coloring in any non-complete graph with maximum degree $\Delta \geq 4$ in $O(\log \Delta) + 2^{O(\sqrt{\log\log n})}$ rounds, as well as a randomized algorithm that computes a $\Delta$-coloring in $O((\log \log n)^2)$ rounds when $\Delta \in [3, O(1)]$. Both these algorithms improve on an $O(\log^3 n/\log \Delta)$-round algorithm of Panconesi and Srinivasan~[STOC'1993], which has remained the state of the art for the past 25 years. Moreover, the latter algorithm gets (exponentially) closer to an $\Omega(\log\log n)$ round lower bound of Brandt et al.~[STOC'16].
	\end{abstract}

	\thispagestyle{empty}
	\setcounter{page}{0}
	\restoregeometry

\newpage


\section{Introduction and Related Work}
This paper presents faster distributed algorithms, in the \local\ model, for computing a $\Delta$-coloring of any non-clique graph with maximum degree $\Delta\geq 3$. Moreover, we also provide certain structural results on the locality of the $\Delta$-coloring problem. To formally present our results and put them in the context of the area, let us start with recalling the model.

\paragraph{The \local{} Model of distributed computing~\cite{linial92, peleg2000distributed}.} The graph is abstracted as an $n$-node network $G=(V, E)$ with maximum degree at most $\Delta$. Communications happen in synchronous rounds. Per round, each node can send one (unbounded size) message to each of its neighbors. At the end, each node should know its own part of the output, e.g., its own color.

\subsection{Background and State of the Art}
Graph coloring---assigning colors to the vertices of the graph such that no two adjacent vertices have the same color---has been a central problem in the study of distributed graph algorithms. We refer to the \emph{Distributed Graph Coloring} book by Barenboim and Elkin~\cite{barenboim2013distributed}.

Much of the focus in this area has been on computing a coloring with $\Delta+1$ colors. Notice that any graph has a $(\Delta+1)$ coloring, which can be computed via a trivial sequential greedy method: Iterate through the vertices in an arbitrary order and a node picks a color that is not used by any of its at most $\Delta$ already colored neighbors. Hence, in a sense, distributed $\Delta+1$ coloring algorithms can all be viewed as attempts at parallelizing this greedy method. We are getting a better and better understanding of the complexity of this problem, see e.g., the very recent work of Chang et al.~\cite{chang2017optimal}, which provides a $2^{O(\sqrt{\log\log n})}$-round randomized algorithm for $(\Delta+1)$-coloring, and the references therein.

On the other hand, $\Delta$-coloring is a problem of a very different nature. By a beautiful result of Brooks from 1941~\cite{brooks_1941,brooks2009colouring}, every connected graph admits a $\Delta$ coloring, unless it is exactly a complete graph or an odd cycle. The proof is of course far less trivial compared to that of $(\Delta+1)$-coloring. See the 1975 work of Lov\'{a}sz~\cite{lovasz1975three} for a simplified proof, which also supplies a polynomial-time centralized algorithm for computing a $\Delta$-coloring.

\paragraph{Why should we care about $\Delta$-coloring? General Aspects.}
One can argue that this single color of difference between $\Delta$-coloring and $(\Delta+1)$-coloring is not relevant in practice. While that is probably true, we believe that there is a strong enough theoretical interest in investigating $\Delta$-coloring. We view $\Delta$-coloring as a clean and classic graph problem which reaches just outside the problems that we understand, and thus hopefully enables us to extend our understanding of the \local{} model and to develop new algorithmic tools and techniques for it. The study of $\Delta$-coloring has previously provided theoretical insight: (1) In the existential sense, Brooks' theorem and proofs of it are widely studied and covered throughout graph theory textbooks
(see e.g., \cite[Theorem 1.4]{molloy2013coloring} and \cite[Theorem 14.4]{bondy1976graph}), while $(\Delta+1)$-coloring is usually passed over as a triviality. (2) There is a sizable literature on sequential and also parallel (PRAM) algorithms for computing $\Delta$-colorings. However, the sequential variant of $(\Delta+1)$-coloring is again ignored as being a mere triviality. Moreover, the study of $(\Delta+1)$-coloring in the PRAM model effectively stopped with the MIS algorithms of Luby~\cite{luby1986simple} and Alon et al.~\cite{alon1986fast}, which led to an $O(\log n)$-round algorithm for $(\Delta+1)$-coloring.

We also note that the relation between $(\Delta+1)$-coloring and $\Delta$-coloring is  similar to the relation between the two problems of $\Delta(1+o(1))$-coloring and $(\Delta+1)$-coloring. One can argue that practically both are equally useful. However, the former can be solved easily in $2^{O(\sqrt{\log\log n})}$ rounds using methods of Barenboim et al.~\cite{barenboim2012locality}, while there is still ongoing research on $(\Delta+1)$-coloring~\cite{harris2016distributed, chang2017optimal}, which only very recently led to a $2^{O(\sqrt{\log\log n})}$-round algorithm~\cite{chang2017optimal}.

\paragraph{Why should we care about $\Delta$-coloring? Technical Distributed Aspects.} A concrete way of pointing out the difference between the two problems of $\Delta$-coloring and $(\Delta+1)$-coloring is as follows: any partial coloring of vertices with $\Delta+1$ colors can be extended to a full coloring. However, this is not true for $\Delta$-coloring: we cannot extend any partial $\Delta$-coloring to a full coloring without changing the colors of some of the already colored vertices. This issue is one of the roots of our interest in understanding the complexity of this problem.

More concretely, many of the fast randomized algorithms for local graph problems developed over the past few years rely on the so-called \emph{shattering} technique~\cite{barenboim2012locality, ghaffari16improved, harris2016distributed, ghaffari2017splitting, chang2017optimal, FGLLL17}. In a rough sense, this method performs some randomized step which computes a partial solution such that the remaining part of the problem is made of several (disconnected) components, each of which is small, e.g., think of size $\poly(\log n)$. Then, one can solve these smaller connected components using deterministic algorithms for graphs of size $\poly(\log n)$. A crucial part here is that the partial solution is such that one can readily extend it to a full solution, in fact independently in each component, without needing to alter the already computed partial solution. The problem of $\Delta$-coloring gives us one clean local problem that reaches outside this circle. In particular, it is not clear if one can do shattering for $\Delta$-coloring, i.e., it is not clear whether there is a way of computing a partial $\Delta$-coloring such that the remaining components are small and they can be colored on their own without altering the already colored part.

Furthermore, in contrast to $(\Delta+1)$-coloring, $\Delta$-coloring has an $\omega(\log^* n)$ lower bound, even for constant-degree graphs~\cite{brandt2016LLL,chang2016exponential}. The nature of this problem is very different from $(\Delta+1)$-coloring which can be computed in $O(\log^* n)$ rounds in bounded degree graphs. Recently, in the context of lower bounds for the \emph{Lov\'asz Local Lemma problem}, Brandt et al.~\cite{brandt2016LLL} proved that $\Omega(\log\log n)$-rounds are needed by any randomized $\Delta$-coloring algorithm, even in constant-degree graphs. These results led to two problems which exhibit an exponential separation between their randomized and deterministic complexity. Sinkless orientation has an $\Omega(\log \log n)$ randomized lower bound~\cite{brandt2016LLL} and an $\Omega(\log n)$ deterministic lower bound~\cite{brandt2016LLL,chang2016exponential}, with matching randomized and deterministic upper bounds~\cite{ghaffari2017splitting}. The other problem is $\Delta$-coloring, which also has an $\Omega(\log \log n)$ randomized lower bound~\cite{brandt2016LLL} and an $\Omega(\log n)$ deterministic lower bound~\cite{chang2016exponential}; however, finding matching upper bounds has remained mostly open.

\paragraph{State of the Art for distributed $\Delta$-coloring.}
Panconesi and Srinivasan gave a randomized distributed algorithm for computing a $\Delta$-coloring in $O(\log^3 n/\log \Delta)$ rounds~\cite{panconesi1992improved, Panconesi1995}. They also provided a deterministic variant of their algorithm with complexity $O(\Delta \log^2 n)$. Recently, Aboulker et al.\ \cite{aboulker18sparse} gave a more general algorithm for $d$-list coloring graphs of maximum average degree $d$ in time $O(\Delta^4 \log^3 n)$. In the special case of trees of large enough maximum degree, Chang et al.~\cite{chang2016exponential} give an $O(\log\log n)$-round randomized algorithm for computing a $\Delta$-coloring. This, combined with their deterministic lower bound $\Omega(\log n)$ \cite{chang2016exponential}, gives an exponential separation on trees. Our algorithms establish this separation in the general bounded-degree case.


\subsection{Our Results}
 Our first result is tailored to $\Delta$-coloring constant-degree graphs.
\begin{theorem} \label{thm:mainDelta}
There is a randomized distributed algorithm that, in any non-complete graph  with maximum degree $\Delta\geq 3$, computes a $\Delta$-coloring in $O\left(\sqrt{\Delta\log\Delta}\cdot\logstar\Delta \cdot\log^2 \log n\right)$ rounds, w.h.p.\footnote{As standard, we use the phrase \emph{with high probability} (w.h.p.) to indicate that an event happens with probability $1-1/n^{c}$ for a desirably large constant $c\geq 2$}
\end{theorem}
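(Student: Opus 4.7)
The plan is to follow a two-stage strategy that has become standard for randomized \local{} algorithms, adapted to the extra rigidity of $\Delta$-coloring: a randomized \emph{shattering} stage that produces a partial $\Delta$-coloring whose uncolored subgraph breaks into small components, followed by a deterministic stage that colors each such component. The complexity target $O\!\left(\sqrt{\Delta\log\Delta}\cdot\logstar\Delta\cdot\log^2\log n\right)$ suggests shattering down to components of size $\mypoly(\log n)$ in $O(\log\log n)$ rounds and then running a deterministic list-$\Delta$-coloring subroutine on each component in a further $O(\log\log n)$ rounds, with the $\sqrt{\Delta\log\Delta}\cdot\logstar\Delta$ factor arising from a per-round cost of a local subroutine (e.g.\ a defective coloring or conflict-resolution step) whose cost scales like $\sqrt{\Delta\log\Delta}$.

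First I would design the randomized stage. Because $\Delta$-coloring has no free extendability, I cannot simply color random independent sets and hope to finish later. Instead, the stage must maintain an explicit slack invariant: at every moment, each still-uncolored vertex $v$ has a palette $L(v)$ of available colors (colors not used by a colored neighbor) satisfying $|L(v)| \ge \deg_H(v) + 1$ on the current uncolored subgraph $H$, or equivalently the remaining list-coloring instance must be guaranteed to be colorable by a Brooks-type theorem. Under this invariant, I would run a biased random coloring step where each vertex tries a uniformly random free color and keeps it if no conflict arises; by standard concentration, after $O(\log\log n)$ such steps the uncolored components have size $\mypoly(\log n)$ with high probability, analogously to the shattering bounds of Barenboim et al. and Ghaffari. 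The $\sqrt{\Delta\log\Delta}\cdot\logstar\Delta$ factor per meta-round would enter through a subroutine that either repairs local violations of the slack invariant or locally refines the palette structure; this is where the Brooks-theorem algorithm of Lov\'asz is most naturally invoked.

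Second I would handle each leftover component. Each component has size $N=\mypoly(\log n)$, is a non-complete subgraph of maximum degree $\le \Delta$, and comes equipped with lists preserving the slack invariant. On such an instance I would run the deterministic Panconesi--Srinivasan routine (or a list-coloring variant of it), whose running time is polylogarithmic in $N$; plugging in $N=\mypoly(\log n)$ yields the claimed $O(\log^2\log n)$ term, multiplied by the $\Delta$-dependent overhead of the augmenting-path / DAG-decomposition primitive that underlies that routine. Since the components of the uncolored subgraph are disjoint, all of them can be processed in parallel.

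The main obstacle is the first stage: guaranteeing that the shattering output is actually extendable, which is precisely the property that distinguishes $\Delta$-coloring from $(\Delta{+}1)$-coloring. A random coloring step can easily create a small uncolored component in which all remaining palettes are too tight (say, an uncolored $K_\Delta$ whose boundary has burnt one extra color), and such components are \emph{not} $\Delta$-list-colorable. The analysis must therefore combine (i) a union bound showing that no such forbidden local configuration appears with constant probability in any step, and (ii) a rerandomization or local repair mechanism that fixes the bad events when they do occur. Carrying this argument through while keeping the per-step cost at $\sqrt{\Delta\log\Delta}\cdot\logstar\Delta$ and the number of steps at $O(\log\log n)$ is the delicate part, and it is where the specific structural properties coming from Brooks' theorem (e.g.\ the existence of a useful DFS tree or a vertex with two non-adjacent neighbors) are most likely to be used.
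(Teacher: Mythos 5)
Your plan is the classic shattering template: randomly extend a partial coloring while maintaining a $(\deg+1)$ slack invariant, argue that the uncolored residue shatters into $\mypoly(\log n)$-size components, and finish each component deterministically. The step you flag as ``the delicate part'' --- guaranteeing that the random partial coloring stays extendable, i.e.\ that no uncolored component whose lists have collapsed to a non-degree-choosable configuration (an uncolored $K_\Delta$ with a burnt spare color, an odd cycle with lists of size $2$, more generally a Gallai tree) is ever created --- is not a technical detail to be patched by ``a union bound plus a rerandomization or local repair mechanism''; it is the entire difficulty of the problem, and no such mechanism is provided or known. These bad configurations are not rare events that a union bound kills: a random color trial near a clique-like neighborhood burns the last spare color with constant probability, and repairing it can require recoloring along a path of length $\Omega(\log_{\Delta} n)$ (this is exactly what the distributed Brooks' theorem / token-moving argument quantifies), far beyond your per-step budget. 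Your second stage inherits the same flaw: the leftover components carry lists, not full palettes, and such list instances need not be solvable at all; the Panconesi--Srinivasan routine $\Delta$-colors a whole nice graph and does not complete arbitrary partial colorings. Also, nothing in your outline handles $\Delta=3$, where the theorem still has to hold and slack is scarcest.

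The paper's proof is organized in essentially the reverse way, and that reversal is the missing idea: nothing is colored during shattering; instead nodes are \emph{removed} in layers and colored last. First all degree-choosable components of small radius are identified, a ruling set of them forms a base layer $B_0$, and BFS layers $B_1,\dots,B_s$ around it are peeled off. The remaining graph has no small DCCs, and the structural lemmas (uniqueness of BFS trees, the degree-sum expansion lemma, and its variant surviving the marking process) show it expands exponentially; a random marking step then plants \Tnodes (a node whose two non-adjacent neighbors receive the same color), and every node with an uncolored path to a nearby \Tnode is peeled off into layers $C_0,\dots,C_{2r}$. For $\Delta=O(1)$ one takes $r=\Theta(\log\log n)$ (with backoff distance $12$ when $\Delta=3$), and the expansion guarantees that w.h.p.\ \emph{every} node of the residual graph finds a \Tnode, so no leftover components remain; for larger $\Delta$ the leftovers are small and are colored first by yet another layering inside each component. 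Coloring then proceeds in reverse order: the $C$-layers, then the $B$-layers, each layer being a genuine $(\deg+1)$-list-coloring instance because every node has an uncolored neighbor one layer below (or, for $C_0$, two same-colored neighbors), solved in $O(\sqrt{\Delta\log\Delta}\cdot\logstar\Delta)$ rounds per layer over $O(\log\log n)$ layers, with the ruling-set step contributing the $O(\log^2\log n)$ factor; finally $B_0$ is colorable by degree-choosability and brute force. Extendability is thus guaranteed structurally by the order of operations, rather than maintained probabilistically --- which is precisely what your proposal lacks.
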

\Cref{thm:mainDelta} immediately implies an $O\big((\log\log n)^2\big)$ round algorithm for constant-degree graphs.
\begin{corollary}\label{thm:main1} There is a randomized distributed algorithm that, in any non-complete graph $G$ with maximum degree $\Delta\in [3, O(1)]$, computes a $\Delta$-coloring of $G$ in $O((\log\log n)^2)$ rounds, w.h.p..
\end{corollary}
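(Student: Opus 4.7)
The plan is to derive the corollary directly from \theoremref{thm:mainDelta} by substituting the assumption $\Delta=O(1)$ into the round complexity and observing that all $\Delta$-dependent factors collapse into absolute constants. Concretely, since $\Delta\ge 3$ and $\Delta$ is upper bounded by some constant $c$, each of the quantities $\sqrt{\Delta\log\Delta}$ and $\logstar\Delta$ is bounded by a function of $c$ only, hence $O(1)$. Multiplying these constant factors into the $O(\log^2\log n)$ factor then yields $O((\log\log n)^2)$, which is the desired bound. The non-completeness assumption in the corollary is exactly the hypothesis required by \theoremref{thm:mainDelta}, so the algorithm from that theorem applies verbatim.

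Because the corollary is really a syntactic specialization of \theoremref{thm:mainDelta}, there is essentially no technical obstacle: the key step is simply invoking the main theorem and simplifying. The only thing worth double-checking is that the constants hidden in the $O(\cdot)$ of \theoremref{thm:mainDelta} do not depend on $\Delta$ in a problematic way; from the statement of \theoremref{thm:mainDelta} all dependence on $\Delta$ is already made explicit, so there is nothing further to verify. Finally, the high-probability guarantee transfers directly from \theoremref{thm:mainDelta} to the corollary.
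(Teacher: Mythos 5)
Your proposal is correct and matches the paper's own derivation: the paper states that \theoremref{thm:mainDelta} ``immediately implies'' the corollary, and your argument---substituting $\Delta = O(1)$ so that $\sqrt{\Delta\log\Delta}\cdot\logstar\Delta = O(1)$ and the bound collapses to $O((\log\log n)^2)$---is exactly that specialization.
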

We comment that the condition of $\Delta\geq 3$ is necessary as $2$-coloring graphs with $\Delta=2$ needs $\Omega(n)$ rounds, even if possible, e.g., in the case of an even cycle \cite{linial92,panconesi1992improved}. The round complexity of \Cref{thm:main1} gets significantly closer to the $\Omega(\log\log n)$ round lower bound of Brandt et al.~\cite{brandt2016LLL}. Even in constant-degree graphs, the previous best known bound was the $O(\log^2 n)$-round algorithm of Panconesi and Srinivasan~\cite{panconesi1992improved, Panconesi1995}.

Our second result applies to all graphs with $\Delta\geq 4$ and improves on the $O(\log^3 n/\log \Delta)$ round complexity of Panconesi and Srinivasan~\cite{panconesi1992improved, Panconesi1995}:
\begin{theorem}\label{thm:main2} There is a randomized distributed algorithm that, in any non-complete graph $G=(V, E)$ with maximum degree $\Delta \geq 4$, computes a $\Delta$-coloring in $O(\log \Delta) + 2^{O(\sqrt{\log\log n})}$ rounds, w.h.p.
\end{theorem}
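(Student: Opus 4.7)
The plan is to reduce $\Delta$-coloring to a $(\Delta+1)$-coloring instance followed by a controlled color-reduction phase whose residue is handed to \theoremref{thm:mainDelta}.

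\textbf{Phase I: initial $(\Delta+1)$-coloring.} I would first invoke a state-of-the-art randomized $(\Delta+1)$-coloring algorithm (e.g.\ Chang--Li--Pettie~\cite{chang2017optimal}), which finishes in $O(\log\Delta)+2^{O(\sqrt{\log\log n})}$ rounds w.h.p.\ and therefore already fits the target budget. Let $\varphi\colon V\to\{1,\ldots,\Delta+1\}$ be the resulting proper coloring; it remains to recolor every vertex of color $\Delta+1$ with some color in $\{1,\ldots,\Delta\}$ without disturbing the rest.

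\textbf{Phase II: Brooks-style recoloring.} Call a vertex $v$ with $\varphi(v)=\Delta+1$ \emph{easy} if the multiset of colors on its neighbors misses some element of $\{1,\ldots,\Delta\}$, and \emph{hard} otherwise. Every easy $v$ has a locally free color, so a short conflict-resolution on the independent set $\varphi^{-1}(\Delta+1)$ removes all easy vertices in $O(1)$ rounds. Each hard $v$ has degree exactly $\Delta$ and a rainbow neighborhood; to free a color one performs a Kempe swap, exchanging colors $a\leftrightarrow b$ along the $(a,b)$-component $K_{a,b}(v)$ rooted at the $a$-neighbor of $v$, which succeeds whenever $v\notin K_{a,b}(v)$. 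Brooks' theorem together with the assumptions $\Delta\geq 4$ and $G$ non-complete ensures that for every hard vertex at least one good pair $(a,b)$ exists in its closed neighborhood; I would let each hard vertex pick such a pair at random and attempt the swap in parallel.

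\textbf{Phase III: shattering and post-shattering.} I would then argue that, w.h.p., the vertices whose Kempe swaps fail -- because their chain is too long or because overlapping chains collide -- induce connected components of size $\mypoly(\log n)$. This should follow from a Beck--Alon--Molloy-style shattering analysis leveraging two facts: a hard vertex is locally ``almost-clique'', so hard vertices are relatively rare and well separated in any non-complete graph, and a uniformly random $(a,b)$-Kempe chain from a hard vertex concentrates on short length. On each surviving component the leftover instance is still a non-complete $\Delta$-coloring problem, so invoking \theoremref{thm:mainDelta} finishes it in $2^{O(\sqrt{\log\log n})}$ rounds, yielding the overall bound $O(\log\Delta)+2^{O(\sqrt{\log\log n})}$.

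The main obstacle I anticipate is Phase~III: because a failed Kempe swap can ripple along a long chain, standard shattering -- which treats bad events as essentially local -- does not directly apply. The crux is to localise the bad events, e.g.\ by truncating every chain attempt at radius $r=\Theta(\log n/\log\Delta)$ and deferring any truncated chain to the post-shattering step, so that a vertex's bad event depends only on its bounded-radius neighborhood. Only then does the usual LLL/shattering machinery produce residual components of size $\mypoly(\log n)$ whose closure is still a non-complete subgraph amenable to \theoremref{thm:mainDelta}. Additional care is likely needed to schedule simultaneous swaps in levels (priority classes), so that conflicting chains are handled one level at a time without blowing up the round complexity.
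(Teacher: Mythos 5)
Your plan founders on a point the paper itself identifies as the heart of the problem: a partial $\Delta$-coloring cannot in general be extended without recoloring already-colored vertices. After your Phase~I, every ``hard'' vertex (color $\Delta+1$, degree $\Delta$, rainbow neighborhood) sits inside a fully colored neighborhood, so what remains is not ``a non-complete $\Delta$-coloring problem'' on the residual components --- it is an \emph{extension} problem with the boundary colors fixed, i.e.\ a degree-list-coloring instance that may simply be infeasible without recoloring colored vertices far away. Invoking \theoremref{thm:mainDelta} on the surviving components is therefore a category error: that theorem $\Delta$-colors a graph from scratch, it does not complete a partial coloring that respects fixed colors outside the component. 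The correct tool for repairing a single uncolorable vertex is the distributed Brooks' theorem (\Cref{thm:local-brooks}), but its recoloring radius $2\log_{\Delta-1}n$ is essentially tight in the worst case (e.g.\ high-girth $\Delta$-regular graphs with no nearby degree-choosable components), and a radius of $\Theta(\log n/\log\Delta)$ --- which is also the truncation radius you propose for the Kempe chains --- already exceeds the target budget $O(\log\Delta)+2^{O(\sqrt{\log\log n})}$, e.g.\ it is $\Theta(\log n)$ for constant $\Delta$. Moreover, the structural claim you lean on, that hard vertices are ``rare and well separated in any non-complete graph,'' is false: hardness is a property of the computed coloring, not of the graph, and in a $\Delta$-regular non-complete graph a $(\Delta+1)$-coloring can leave a constant fraction of vertices hard, with no control over where they appear. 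Finally, if you truncate chains at a radius small enough to be affordable, the probability that a chain escapes that radius (or collides with another chain) is not $\Delta^{-\Theta(1)}$ with bounded dependency radius, so the shattering lemma does not apply; these bad events are exactly the non-local ones you acknowledge but do not localise.

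The paper's proof avoids this trap by never attempting to extend a coloring in place. Before any colors are fixed it plants slack: it removes, in layers, everything near small degree-choosable components (Phase~I of its algorithm), then runs a marking process that creates \Tnodes (two neighbors pre-colored alike) and removes, again in layers, everything that has an uncolored path to a \Tnode or to a low-degree/boundary node. The structural results (\Cref{lem:deg-sum,lem:detExpansion,lem:dcc-or-expansion}) show that in the remaining graph BFS trees expand, so the probability of surviving this removal is $1/\poly(\Delta)$ and the leftover components are small and are colored \emph{first}; afterwards the removed layers are colored in reverse order, where each node still has an uncolored neighbor and hence a $(\deg+1)$-list-coloring instance, and the base layers (DCCs, \Tnodes) are colorable by construction. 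In other words, the paper guarantees a free color for every node at the moment it is colored, rather than trying to repair rainbow neighborhoods after the fact; your approach would need a substitute for that guarantee, and as stated it does not have one.
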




We also improve the deterministic complexity of $\Delta$-coloring for graphs with $\Delta=2^{o(\sqrt{\log n})}$.
\begin{restatable}[Deterministic $\Delta$-coloring]{theorem}{deterministicDelta} \label{thm:deterministicDelta}
  Non-clique graphs of maximum degree $\Delta \geq 3$ can deterministically be  $\Delta$-colored in $O\big(\sqrt{\Delta}\cdot\log^{-3/2}\Delta\cdot\logstar\Delta\cdot\log^2 n\big)$  rounds.
\end{restatable}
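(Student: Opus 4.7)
The plan is to keep the Panconesi--Srinivasan outer skeleton for deterministic $\Delta$-coloring, but to replace its bottleneck subroutine---an $O(\Delta)$-round $(\Delta+1)$-coloring pass that is invoked $\Theta(\log^2 n)$ times---with a faster deterministic list-coloring primitive of complexity $O(\sqrt{\Delta}/\log^{3/2}\Delta)$. Conceptually this is a derandomization of the algorithm behind \theoremref{thm:mainDelta}, in which the same outer structure is driven by random sampling: here we replace the random bits by a deterministic scheduling, at the price of going from $\log^2\log n$ to $\log^2 n$ phases.

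First I would carry out the structural preparation deterministically: compute a block--cut decomposition of $G$ in $O(\log n)$ rounds and, in each non-clique block, locate a Brooks anchor, i.e.\ a triple $(v,x,y)$ with $x,y$ non-adjacent neighbors of $v$ such that the block stays connected after deleting $v$. The edges of each block are then oriented along a BFS layering rooted at $v$, so that every non-anchor vertex has an out-neighbor; a reverse-BFS greedy pass would $\Delta$-color the block with $v$ and $\{x,y\}$ as the anchors. Since this pass is inherently sequential, it is unrolled into $O(\log^2 n)$ Panconesi--Srinivasan rounds of local re-colorings, where the conflict scheduling in each round reduces to a single call to a $(\Delta+1)$-list-coloring subroutine on a subgraph of maximum degree at most $\Delta$.

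For this inner primitive I would invoke the fastest available deterministic $(\Delta+1)$-list-coloring algorithm, whose complexity can be driven to $O(\sqrt{\Delta}/\log^{3/2}\Delta)$ rounds by combining an arbdefective coloring decomposition with a short greedy pass on the resulting small-degree fragments; any additive $\logstar n$ term is absorbed by the $\log^2 n$ factor in the outer schedule. Multiplying the outer $O(\log^2 n)$ cost by the inner $O(\sqrt{\Delta}/\log^{3/2}\Delta)$ yields precisely the stated bound.

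The main obstacle, and the place where actual new work beyond Panconesi--Srinivasan is required, is to verify that the inner list-coloring subroutine respects the very short palettes produced by Brooks re-coloring: along an augmenting path the allowed list can shrink to two or three colors in places, whereas the defective-coloring decomposition inside the subroutine is natively degree-based. Adapting that decomposition to honor two-element lists without sacrificing the $\sqrt{\Delta}/\log^{3/2}\Delta$ complexity is the crux; once that is in place, the rest is bookkeeping over the $O(\log^2 n)$ phases and standard derandomization of symmetry breaking via deterministic network decomposition.
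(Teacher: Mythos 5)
There is a genuine gap, and it sits exactly where you locate your ``crux''. Your round count multiplies an outer schedule of $O(\log^2 n)$ phases by an inner deterministic list-coloring primitive running in $O(\sqrt{\Delta}/\log^{3/2}\Delta)$ rounds, but no such primitive is known: the best deterministic $(\deg+1)$-list-coloring bound available (and the one the paper uses, \Cref{thm:listColoring}) is $O\big(\sqrt{\Delta\log\Delta}\cdot\logstar\Delta\big)$ given an $O(\Delta^2)$ coloring, which is a $\Theta(\log^2\Delta)$ factor slower than what your accounting needs; you would be claiming a list-coloring speedup well beyond the state of the art as a black box. Worse, the lists arising along Brooks-style augmenting re-colorings can shrink to size $\deg$ or below (two colors on a degree-two path), and then the instance is no longer a $(\deg+1)$-list coloring and in general cannot be completed locally at all (cf.\ $2$-coloring an even cycle); you flag adapting the defective-coloring decomposition to such lists as the remaining difficulty, but that is precisely the step on which the whole bound rests, and it is left unresolved. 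Separately, the preparatory step of computing a block--cut (2-connected component) decomposition in $O(\log n)$ rounds is not available in the \local{} model: blocks can have diameter $\Omega(n)$, so deciding block membership is a global problem.

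The paper reaches the stated bound by a different mechanism, and the $\log^{-3/2}\Delta$ factor comes from the \emph{number of phases}, not from a faster list-coloring subroutine. It computes an $(R, O(R^2))$ ruling set $B_0$ with $R=\Theta(\log_{\Delta} n)$ (\Cref{lem:rulingSets}), defines layers $B_1,\ldots,B_z$ as distance classes to $B_0$ with $z=O(\log^2 n/\log^2\Delta)$, and colors the layers in reverse order; every node of layer $B_i$, $i\geq 1$, has an uncolored neighbor in $B_{i-1}$, so each phase is a genuine $(\deg+1)$-list-coloring instance solved in $O(\sqrt{\Delta\log\Delta}\cdot\logstar\Delta)$ rounds, giving $O(\sqrt{\Delta}\cdot\log^{-3/2}\Delta\cdot\log^2 n)$ overall (up to the $\logstar\Delta$ term). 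The only place where lists could fall short is the base layer $B_0$, and there the paper avoids augmenting paths and short lists entirely: the ruling-set spacing makes the balls of radius $2\log_{\Delta-1} n<R/2$ around $B_0$-nodes disjoint, so each such node is colored independently by the distributed Brooks theorem (\Cref{thm:local-brooks}) via a purely local recoloring. If you want to salvage your plan, the fix is to replace the Panconesi--Srinivasan unrolling and the hypothesized fast list colorer by this ruling-set layering, which both restores the $(\deg+1)$ guarantee at every step and supplies the missing $\log^{-2}\Delta$ savings through the phase count.
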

Note that \Cref{thm:deterministicDelta} is only a logarithmic factor away from the $\Omega(\log_{\Delta}n)$ deterministic lower bound of \cite{chang2016exponential} when $\Delta=O(1)$.

\subsection{Our Methods} \label{ssec:methods}

Our algorithms are based on a structural result that essentially says that either a graph is easy to $\Delta$-color locally, or it expands locally. This also yields a new proof of the distributed Brooks' Theorem by Panconesi and Srinivasan.

\begin{restatable}[Distributed Brooks' Theorem]{theorem}{distBrooks} \label{thm:local-brooks}
  Let $G$ be a graph that is not a clique with maximum degree $\Delta \geq 3$, and let $G$ be $\Delta$-colored except for one node $v$. Now $G$ can be $\Delta$-colored by recoloring the $(2\log_{\Delta-1} n)$-neighborhood of $v$ and keeping the color of all nodes outside this neighborhood unchanged.
\end{restatable}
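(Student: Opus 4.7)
The plan is to locally reproduce Lov\'asz's proof of Brooks' theorem, performing the entire recoloring inside the ball $B := B(v, 2d)$, where $d := \log_{\Delta-1} n$. The argument proceeds in three phases.

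\emph{Trivial case.} If $v$'s neighbors collectively use fewer than $\Delta$ colors in the given partial coloring, I color $v$ with any unused color and stop. Henceforth assume $v$ has exactly $\Delta$ neighbors, taking $\Delta$ pairwise distinct colors.

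\emph{Finding an augmenting triple.} The next step is to locate two non-adjacent neighbors $u_1, u_2 \in N(v)$ joined to a common vertex by internally vertex-disjoint paths inside $B(v,d)$. Their existence follows from a ball-volume argument: a full $(\Delta{-}1)$-ary BFS tree of depth $d$ would contain more than $n$ vertices, contradicting $|V|\leq n$. Hence BFS from $v$ must produce a collision by depth $d$, yielding two internally disjoint paths from $v$ to some vertex $u$, starting at distinct neighbors $u_1, u_2$ of $v$. If $u_1 u_2 \notin E(G)$, take this pair; otherwise $\{v,u_1,u_2\}$ forms a triangle, and I invoke the non-clique hypothesis to pivot to a different pair within a slightly larger---but still $O(d)$-sized---neighborhood.

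\emph{Local Lov\'asz recoloring.} Uncolor every vertex of $B$ except those on the boundary $\partial B$ at distance exactly $2d$; the rest of $G$ is untouched, so its coloring stays valid. Pre-assign $u_1$ and $u_2$ a common color $c$: for $d \geq 2$, all neighbors of $u_1$ and $u_2$ lie strictly inside $B$ and are currently uncolored, so any $c$ is legal. Now compute a BFS in $G' := G[B] - \{u_1,u_2\}$ rooted at $v$ and process the remaining interior vertices greedily in reverse BFS order (farthest from $v$ first, $v$ last). When processing a vertex $y \neq v$, its $G'$-parent has strictly smaller $G'$-distance from $v$, hence is later in the order and still uncolored; therefore $y$ sees at most $\Delta-1$ colored neighbors, and some color is free. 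Finally $v$'s $\Delta$ neighbors display at most $\Delta-1$ distinct colors (since $u_1$ and $u_2$ share color $c$), so $v$ too receives a free color. All recoloring is confined to $B(v, 2d)$, as required.

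\emph{Main obstacle.} The delicate phase is the second one: I need non-adjacent $u_1, u_2$ inside $B$ \emph{and} enough connectivity of $G' = G[B] - \{u_1,u_2\}$ that BFS from $v$ in $G'$ reaches every interior vertex of $B$. The radius blow-up from $d$ to $2d$ is used precisely for the latter, since removing two vertices can roughly double effective distances in $G[B]$. Handling configurations where every short cycle through $v$ is a triangle (forcing the natural candidates $u_1, u_2$ to be adjacent) requires a careful invocation of the non-clique hypothesis, which is where most of the classical Brooks' theorem machinery resides.
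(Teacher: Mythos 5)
There is a genuine gap, and it sits exactly where you flag your ``main obstacle.'' Porting Lov\'asz's greedy recoloring into a ball with a \emph{fixed colored boundary} requires that every uncolored vertex, at the moment it is processed, still have an uncolored neighbor later in the order. Your argument for this is that the $G'$-BFS parent of $y$ is closer to $v$ and hence still uncolored --- but this fails in two ways. First, an interior vertex whose short paths to $v$ all pass through $u_1,u_2$ (or leave the ball) may have $G'$-distance larger than $2d$, so its BFS parent can be a \emph{boundary} vertex, which is colored; such a $y$ can face $\Delta$ colored neighbors and the greedy step dies. Second, deleting $u_1,u_2$ can disconnect parts of the interior from $v$ altogether; the claim that ``removing two vertices can roughly double effective distances'' is simply false in general, so the radius blow-up from $d$ to $2d$ does not buy the connectivity you need. (In the paper the factor $2$ in $2\log_{\Delta-1}n$ comes from an expansion rate of $(\Delta-1)$ per \emph{two} BFS levels, not from any distance-doubling.) In the classical Brooks/Lov\'asz proof this is precisely where one needs the choice of $u_1,u_2$ with $G-\{u_1,u_2\}$ connected, via the block/2-connectivity case analysis --- and you explicitly defer that machinery rather than carry it out locally, where it does not transfer verbatim because $u_1,u_2$ must additionally lie near $v$ and the relevant graph is the uncolored region, not $G$.

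For comparison, the paper avoids this issue entirely. It shifts the single uncolored ``token'' along a path (color $v$ with its neighbor's color, uncolor the neighbor), which keeps the coloring proper at every step, until within distance $2\log_{\Delta-1}n$ it reaches either a vertex of degree $<\Delta$ (which has a free color) or a degree-choosable component, i.e.\ a $2$-connected induced subgraph that is neither a clique nor an odd cycle. The dichotomy ``low-degree vertex or DCC in the ball'' is forced by the expansion lemma (\Cref{lem:deg-sum,lem:no-dcc-expansion,lem:dcc-or-expansion}): without them the BFS tree would exceed $n$ vertices. The DCC is then uncolored and recolored by degree-choosability (the Gallai-tree characterization, \Cref{thm:degree-choosablity}), with lists induced by the surrounding fixed coloring --- so no global greedy ordering over the ball, and no connectivity of a punctured ball, is ever needed. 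If you want to salvage your route, you would have to prove a local analogue of the ``$G-\{u_1,u_2\}$ connected'' selection inside the uncolored region, which is essentially as hard as the structural lemmas the paper proves.
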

Our algorithms are based on a \emph{layering technique}. In this technique we carefully choose a \emph{base layer} $B_0\subseteq V$ that is easy to color after everything else is colored, and layers $B_1,\ldots,B_s$ where $B_i$ consists of the nodes in distance $i$ to $B_0$. To $\Delta$-color all layers one can iteratively color the layers in \emph{reverse order} while always respecting the already fixed colors. To $\Delta$-color layer $B_i, i\neq 0$ we solve list coloring on the graph $G[B_i]$. Lists are of size $(\deg_{G[B_i]}+1)$ as each node has an uncolored neighbor on a lower index layer.
At the end layer $B_0$ is (usually) colored  with different techniques.

The best way to understand the technique is the algorithm for \Cref{thm:deterministicDelta}. There the base layer $B_0$ consists of the nodes of a ruling set of $G$ with large enough distance between the nodes. 
The ruling property of $B_0$ implies that we only need few layers to cover the whole graph and due to their large distance the nodes in $B_0$ can be colored independently with \Cref{thm:local-brooks}.

Let $H$ denote the \emph{graph of remaining nodes} after the base layer and all remaining layers have been removed. In the algorithm explained above $H$ is empty.
In our randomized  algorithms the layers do not always cover the whole graph and thus $H$ might not be empty. However, the base layer is chosen such that our structural results (cf. \Cref{ssec:structuralProperties}) show that $H$ expands: In particular, we identify all \emph{small} node-induced subgraphs that are colorable regardless of colors outside the subgraphs, compute a ruling set of these subgraphs and put their nodes in the base layer $B_0$. Then the remaining graph $H$ does not have any small subgraphs that are easy to color and our structural results show that $H$ has to expand. We leverage the expansion by randomly placing \emph{'slack'} in the graph, i.e., so called \Tnodes (each \Tnode picks two of its neighbors (non adjacent) and colors them with the same color; this introduces slack at the \Tnode as it can always find a valid color after every other node of the graph is colored). Then we use those \Tnodes as a new base layer and remove -- again with the layering technique -- all nodes that have a \Tnode close by.  Due to the expansion we can show that the probability to remain after the slack placement is $1/n^{c}$ for  constant $\Delta$; for non constant $\Delta$ a node has to be much closer to a node with slack (than in the constant degree case) to be removed and we can only bound the probability to remain by $1/\poly(\Delta)$ for a suitable polynomial. However, then standard shattering techniques (cf. \Cref{lem:shattering}) show that only \emph{small} connected components remain which we color with a similar layering technique.


\medskip

We emphasize that---to the best of our knowledge---our shattering is different from all previous shattering algorithms. Previous shattering algorithms compute a partial solution to shatter the graph into small unsolved components which are then solved to complete the partial solution. Here, the nodes in the small components are the last nodes to compute their output.  Our algorithms shatter in a fundamentally different way. We shatter the graph by removing nodes from it. The nodes in remaining components are the first to compute their output. Only afterwards we add the removed nodes to the graph and let them compute their output last. The idea of putting nodes away to be colored in the end has already been used in the deterministic coloring algorithm in \cite{barenboim2013distributed} where graphs with bounded arboricity are colored. However, we are not aware of any randomized algorithm that uses this technique.

\subsection{Outline} 

 \Cref{sec:structural} provides our  core structural results for $\Delta$-coloring and its proofs are most involved. In particular, we show that 
\begin{itemize}
\item A partial $\Delta$-coloring of a graph with a single uncolored vertex $v$ can be completed to a $\Delta$-coloring of all vertices by only recoloring the vertices in a $O(\log_{\Delta} n)$ neighborhood of $v$ (Distributed Brooks Theorem, \Cref{thm:local-brooks}),
\item a partial $\Delta$-coloring of a graph with an uncolored connected component $C$ can be completed to a $\Delta$-coloring of the whole graph without changing the colors of already colored vertices if $C$ is a so called degree choosable component,
\item graphs that do not contain small diameter degree choosable components expand quickly, ()
\item the uncolored part of a graph without small diameter degree choosable components expands quickly even if we randomly place \Tnode's. (\Cref{ssec:expansionMarking}) 
\end{itemize}
We recommend to skip \Cref{ssec:expansionMarking} when reading the paper for the first time. 

\Cref{sec:preliminaries} introduces algorithmic preliminaries and state of the art results for problems such as network decomposition, $(\deg+1)$ list coloring and ruling sets that we use as subroutines in our algorithms. 
In \Cref{sec:detColoring} we use the Distributed Brooks Theorem to provide two deterministic algorithms for $\Delta$-coloring. These algorithms already contain much of the high level structure of our randomized algorithms that we present in \Cref{sec:delta-alg}. We end in \Cref{sec:conclusion} with a conclusion.


\section{Graph Colorability and Structural Results}\label{sec:structural}

In this section we study structural properties of graphs that are \emph{not} degree-list colorable, at least locally. We will show several structural results about such graphs, which essentially tell that these graphs must expand exponentially.
This will lead to a simplified proof of the ``distributed'' Brooks' theorem due to Panconesi and Srinivasan~\cite{Panconesi1995} in \Cref{ssec:distributed-brooks}.

\subsection{Gallai-Trees and Degree Choosability} \label{ssec:dcc-definitions}
\begin{definition}[Degree-Choosability]
	A graph $G$ is \emph{degree-choosable}, if for every assignment of lists $L$, such that $|L(v)| \geq \deg(v)$ for all $v$, there exists a proper coloring of $G$ with colors from $L$.
\end{definition}
\begin{figure}
\begin{subfigure}{0.99\textwidth}
  \centering
  \includegraphics[width=0.6\textwidth]{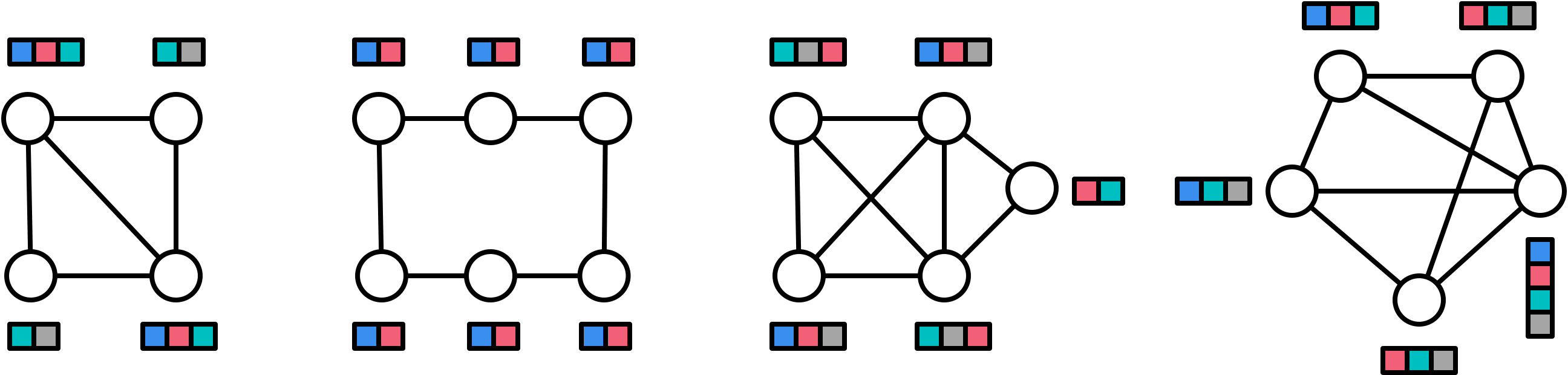}
  \label{figure:DCC1}
	\caption{Four examples of degree choosable components that can be degree list colored. }
\end{subfigure}
\begin{subfigure}{0.99\textwidth}
  \centering
  \includegraphics[width=0.6\textwidth]{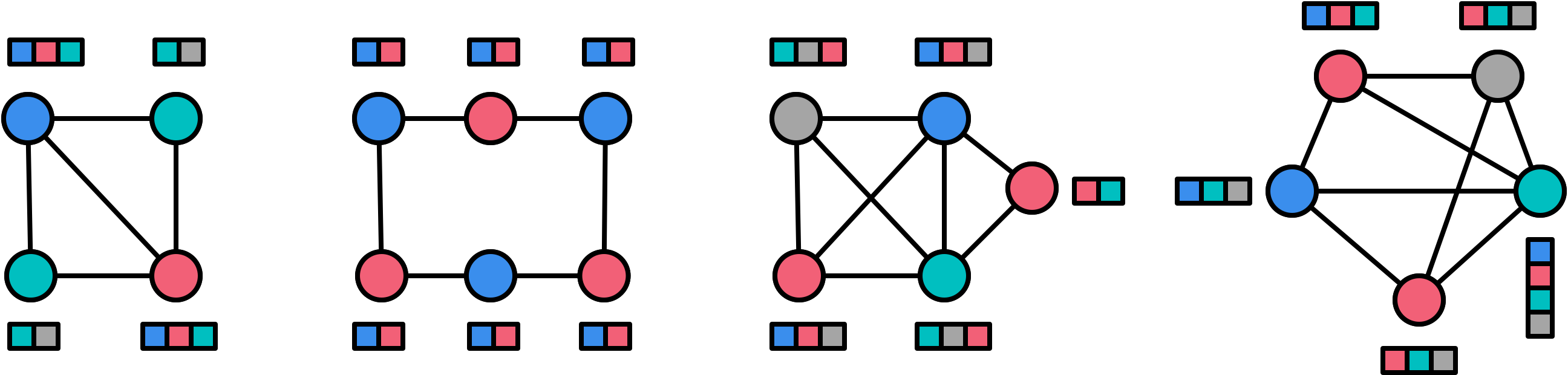}
	\caption{A valid coloring of the degree choosable components where every node has a color from its list.}
 \label{figure:DCC2}
\end{subfigure}
\caption{Degree list coloring of degree choosable graphs.}
\label{figure:DCC}
\end{figure}

\begin{definition}[Gallai-Trees]
	A graph is a \emph{Gallai-tree} if each of its maximal 2-connected components is a clique or an odd cycle.
\end{definition}

Gallai-trees are exactly those graphs which are not degree-choosable.
\begin{theorem}[\cite{erdos79choosability,vizing76vertex}] \label{thm:degree-choosablity}
	A graph is not degree-choosable if and only if it is a Gallai-tree.
\end{theorem}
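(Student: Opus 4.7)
The statement is an if-and-only-if, so my plan is to prove the two directions separately, with the forward direction (Gallai-tree $\Rightarrow$ not degree-choosable) being the easier one and the reverse direction being the real work.

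For the forward direction, I would construct, for any Gallai-tree $G$, an explicit list assignment $L$ with $|L(v)| = \deg(v)$ that admits no proper $L$-coloring. The construction proceeds block by block on the block-cut tree. For a clique block $K_r$, I assign a common list of $r-1$ fresh colors to all its vertices; inside this block alone, any proper coloring needs $r$ colors, so the block is not $L$-colorable. For an odd-cycle block $C_{2k+1}$, I use the classical fact that odd cycles are not $2$-choosable, choosing three colors $\{a,b,c\}$ and assigning the lists so that the parity of the cycle obstructs a proper coloring. At a cut vertex $v$, its degree decomposes as $\deg(v) = \sum_{B \ni v} \deg_B(v)$ over the blocks $B$ containing $v$, so I can take $L(v)$ to be the disjoint union of the block-local lists and still meet the degree lower bound. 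A short induction on the block-cut tree, combining per-block obstructions, shows that $G$ as a whole admits no proper $L$-coloring.

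For the reverse direction, I need to show that if $G$ is not a Gallai-tree, then it is degree-choosable. It suffices to prove this when $G$ is $2$-connected and is neither a clique nor an odd cycle; the general case follows from a block-by-block greedy: process leaf blocks of the block-cut tree first, where the cut vertex has one less degree-constraint, and propagate. So the core lemma is: \emph{every $2$-connected graph $G$ that is neither a complete graph nor an odd cycle is degree-choosable.} My plan is to establish the standard structural ingredient that in such a $G$ there exist two vertices $u,v$ at distance exactly $2$ (sharing a common neighbor $w$) such that $G - \{u,v\}$ is still connected, and such that $uv \notin E$. Given any list assignment $L$ with $|L(x)| \ge \deg(x)$, I then split into two cases: if $L(u) \cap L(v) \ne \emptyset$, I pick a common color for both $u$ and $v$, which reduces the list of $w$ by one but still leaves $|L(w)| - 1 \ge \deg_{G-\{u,v\}}(w)$, and all other vertices lose at most one neighbor so their degree constraints continue to hold, and I finish by a greedy ordering along a spanning tree of $G - \{u,v\}$ rooted at $w$. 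If $L(u) \cap L(v) = \emptyset$, then I pick distinct colors $c_u \in L(u)$, $c_v \in L(v)$, remove them from neighbors' lists appropriately, and again orient/traverse so that each vertex is colored with at least one remaining color in its list.

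The main obstacle is the structural lemma extracting two distance-$2$ non-adjacent vertices whose removal keeps $G$ connected; handling it requires a careful case analysis on the possible obstructions in $2$-connected graphs, treating separately the cases where $G$ has small diameter, where $G$ is an even cycle or odd cycle with chords, and where $G$ contains a vertex of degree $\ge 3$ with two non-adjacent neighbors. This is precisely where the hypotheses ``not a clique'' and ``not an odd cycle'' get used; in a clique no such non-adjacent pair exists, and in an odd cycle there is no common-neighbor pair whose removal leaves connectivity together with the needed parity slack. A convenient unified way to carry this out is to find an ear decomposition of $G$ starting from a non-triangle induced cycle (if one exists) or from an even cycle, and read off the vertices $u, v, w$ from the last ear; verifying the connectivity of $G - \{u,v\}$ then reduces to checking that the last ear has length at least $2$, which is forced by the assumption that $G$ is neither a clique nor a single odd cycle.
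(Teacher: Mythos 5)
The paper itself offers no proof of this statement: it is imported wholesale from Erd\H{o}s--Rubin--Taylor and Vizing (the hard direction is essentially Borodin's list version of Brooks' theorem), so your attempt has to be judged on its own terms rather than against an in-paper argument. Your forward direction (Gallai-tree $\Rightarrow$ not degree-choosable) is the standard construction and is fine as a sketch: fresh palettes per block, lists of cut vertices formed as disjoint unions over their blocks, and an induction over the block-cut tree in which a hypothetical coloring either violates a leaf block or restricts to a coloring of the remaining Gallai tree. Your reduction of the reverse direction to the $2$-connected core lemma (color everything outside a non-clique, non-odd-cycle block greedily towards that block, then finish inside the block) is also the standard route.

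The genuine gap is in your treatment of the core lemma when $L(u) \cap L(v) = \emptyset$. The whole point of choosing non-adjacent $u,v$ with a common neighbour $w$ and $G-\{u,v\}$ connected is to give $u$ and $v$ the \emph{same} colour, so that when the greedy order (decreasing distance to $w$) reaches $w$ last, its $\deg(w)$ coloured neighbours use at most $\deg(w)-1$ distinct colours and a colour of $L(w)$ survives. If you give $u$ and $v$ \emph{distinct} colours, no vertex ever has slack: every vertex may see $\deg$ distinct colours on its neighbours by the time it is coloured, lists have size exactly $\deg$ in the critical case, and the greedy can fail at the final vertex. ``Orient/traverse so that each vertex is colored with at least one remaining color'' is precisely the claim that needs proof, and it is false as stated; this disjoint-list case is exactly what separates degree-choosability from Brooks' theorem, so it cannot be waved through. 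A correct repair needs a new idea, e.g.\ if $L(u) \not\subseteq L(w)$ colour $u$ with a colour of $L(u)\setminus L(w)$ to create surplus at $w$ and greed towards $w$ in the connected graph $G-u$, and then handle the residual subcase $L(u)\cup L(v)\subseteq L(w)$ (which can occur when $\deg(w)\geq \deg(u)+\deg(v)$) by a separate argument -- or follow the classical Erd\H{o}s--Rubin--Taylor/Borodin inductions outright. Separately, note that your structural lemma ($u,v$ at distance $2$ with $G-\{u,v\}$ connected) fails for cycles, so even cycles must be handled by the independent fact that they are $2$-choosable; you gesture at this in your case analysis but the sketch does not actually cover it.
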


Now, consider the problem of $\Delta$-coloring. Assume that we color the graph partially but leave a 2-connected subgraph that is neither a clique nor an odd cycle uncolored. Then the coloring can be completed in this subgraph due to \Cref{thm:degree-choosablity}. These $2$-connected subgraphs are called \emph{degree-choosable components} (see \Cref{figure:DCC,figure:DCC3}).
\begin{definition}[Degree-Choosable Component (DCC)]
	A node-induced subgraph is a \emph{degree-choosable component} if it is 2-connected and not a clique nor an odd cycle.
\end{definition}
We often write \emph{DCC} instead of degree choosable component and the usual graphs notions can be extended to degree-choosable components. For example, the \emph{diameter} of a degree-choosable component is the diameter of the node-induced subgraph.
A connected graph is a \emph{nice graph} if it is neither a path,
a cycle, nor a clique \cite{Panconesi1995}. Note that a degree choosable subgraph (DCC) can be $\Delta$ colored regardless of how other nodes outside the DCC are colored (see \Cref{figure:DCC3}).
All nice graphs are $\Delta$-colorable and we assume that all graphs throughout the paper are nice graphs.
A \emph{\Tnode} is a node with two neighbors that have the same color. In a partially colored graph, node $u$ is a \emph{\Tnode of $v$} if $u$ is a \Tnode and there is an uncolored path from $u$ to $v$.
For several (centralized) proofs of Brooks' Theorem and further work on Gallai trees and degree choosability we refer to \cite{cranston15brooks}.
We also want to point out that degree choosable components recently became important for the distributed coloring of sparse and planar graphs \cite{aboulker18sparse,chechik18}.

\begin{figure}
\begin{subfigure}{1.0\textwidth}
  \centering
  \includegraphics[width=0.8\textwidth]{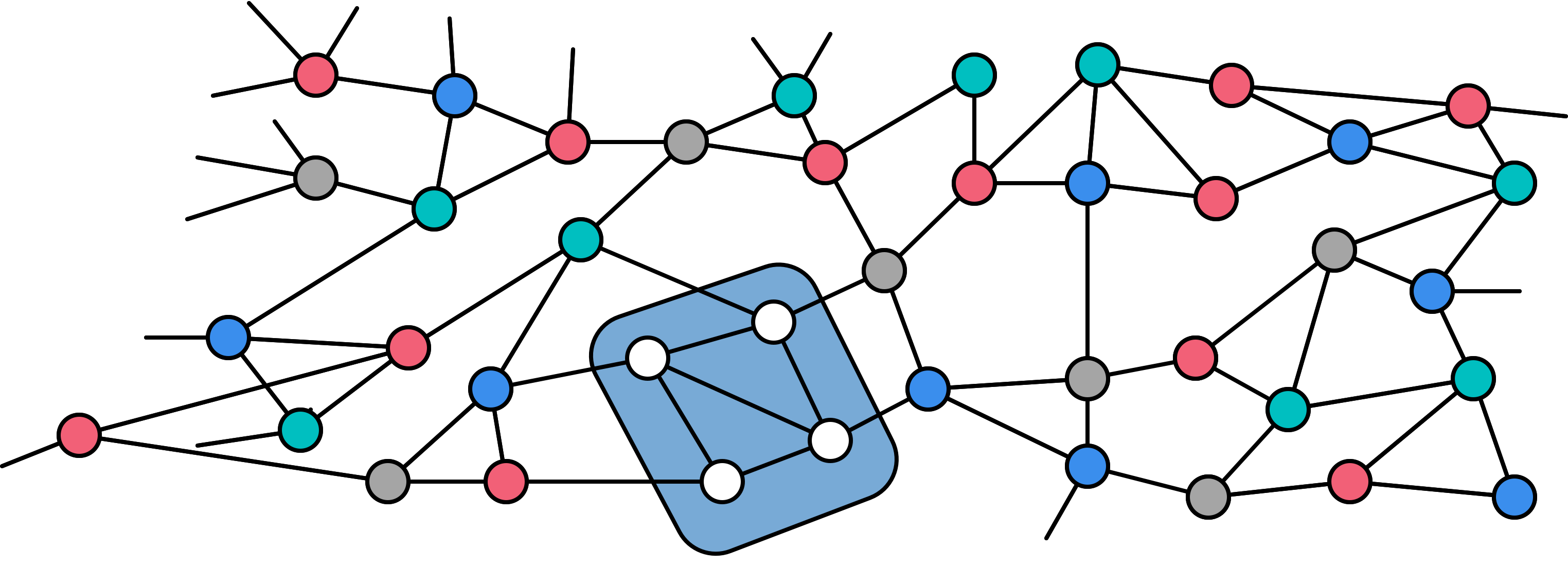}

\end{subfigure}
\caption{If a degree choosable component (DCC) appears in a graph and all nodes around it are colored then this induces a degree list coloring problem for the degree choosable component. By the definition of a DCC one can always find a valid coloring of the DCC.}
\label{figure:DCC3}
\end{figure}

\subsection{Graphs with no Small Degree-Choosable Components} \label{ssec:structuralProperties}

\begin{figure}
	\centering
	\includegraphics[width=0.4\textwidth]{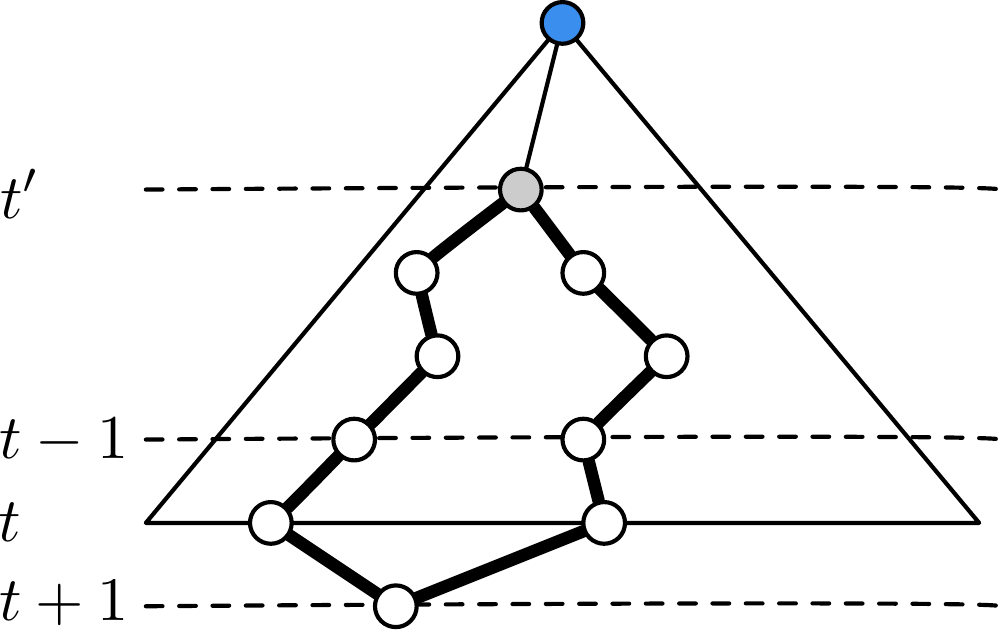}
	\caption{Uniqueness of the BFS-tree. Assuming a unique BFS-tree up to level $t$, assume that a node on the next level $t+1$ can be reached via two distinct nodes $u, v$ on level $t$. This creates an even cycle via the last common ancestor of $u$ and $v$ on some level $t' < t$. The cycle does not induce a clique since nodes of the cycle are on at least three different levels of the BFS-tree.}
	\label{fig:bfs_uniqueness}
\end{figure}

In this section we study graphs with no small degree-choosable components. Our goal is to prove that if such graphs are locally regular (and thus not easy to color locally), then these graphs must expand. Our general tool is to count the number of nodes in breadth-first search trees inside these graphs.
Given a BFS tree $\BFS(v)$ rooted at node $v$ of a graph $G$, we denote by $B_t(v)$ the set of nodes at distance $t$ from $v$ in this tree and for two nodes $u,w$ of a BFS tree let $P_{u,w}$ denote the unique path from $u$ to $w$ in the BFS tree.

\begin{lemma}[Unique BFS Tree]
\label{lemma:bfsunique}
Let $G$ be a graph with no degree-choosable components of radius $r$ or less. The depth-$r$ BFS tree rooted at an arbitrary $v\in V(G)$ is unique. In particular, any node $u\neq v$ on level $t \leq r$ has exactly one edge to the nodes on level $t-1$ of the BFS tree.
\end{lemma}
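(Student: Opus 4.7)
The plan is to prove both claims simultaneously by contradiction: if some node $u \neq v$ on level $t \leq r$ had two neighbors on level $t-1$, I will exhibit a degree-choosable component of radius at most $r$, contradicting the hypothesis on $G$. Once every non-root node on level $t \leq r$ is forced to have a single parent in the BFS structure, the depth-$r$ BFS tree is automatically unique.

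So suppose for contradiction that $u \neq v$ lies on level $t \leq r$ and has two distinct neighbors $u_1, u_2$ on level $t-1$. Fix any depth-$r$ BFS tree $T$ rooted at $v$ and let $w$ be the last common ancestor of $u_1$ and $u_2$ along the tree paths $v \to u_1$ and $v \to u_2$. Write $d$ for the depth of $w$ in $T$; since $u_1 \neq u_2$ lie on the same level, $w$ can be neither of them, so $d \leq t-2$ and in particular $t - d \geq 2$.

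Concatenating the tree path from $w$ down to $u_1$, the edge $u_1 u$, the edge $u u_2$, and the tree path from $u_2$ back up to $w$ traces a cycle $C$ of even length $2(t-d) \leq 2r$ with $|V(C)| \geq 4$. Let $H := G[V(C)]$. Since $H$ contains a Hamilton cycle of even length at least $4$, it is $2$-connected and not an odd cycle, regardless of whichever chords $G$ may contribute. Moreover $H$ is not a clique: if $w$ and $u$ were adjacent in $G$, then $u$ would have a neighbor on level $d \leq t-2$, contradicting that $u$ is at BFS-distance exactly $t$ from $v$. Therefore $H$ is a degree-choosable component, and its radius is at most the radius of its Hamilton cycle, namely $t-d \leq r$, contradicting the hypothesis that $G$ has no DCC of radius $\leq r$.

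The only subtle point is that the induced subgraph $H$ on the cycle vertices may contain chords, so $H$ need not be exactly $C$. Two-connectedness and the ``not an odd cycle'' property both survive the addition of chords to an even Hamilton cycle for free; the only real danger is that the chords could conspire to turn $H$ into a clique. The shortest-path property of BFS resolves this: $u$ sits at distance exactly $t$ from $v$, so $u$ has no neighbor on any level $\leq t-2$, in particular none equal to $w$, and therefore $H$ fails to be a clique.
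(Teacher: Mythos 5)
Your proof is correct and follows essentially the same route as the paper's: a node with two neighbours on the previous level yields, via the last common ancestor of those two neighbours, an even cycle of length at least $4$ whose vertex set cannot induce a clique (levels differ by at least two), hence a degree-choosable component of radius at most $r$. You merely spell out more carefully than the paper that chords in the induced subgraph preserve 2-connectedness and rule out the odd-cycle case, and you make the radius bound explicit.
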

\begin{proof}[Proof of \Cref{lemma:bfsunique}]
It is immediate that the zero and depth one BFS trees are unique. For larger depth consider the following proof by contradiction.
For $t<r$ assume that $u$ and $u'$ are two nodes on the $t$-th level of the BFS tree that connect to the same node $w\notin B_{t-1}\cup B_t$, i.e., we assume that the next level of the BFS tree cannot be built uniquely. Then $w$ is on level $t+1$. Let $v'$ be the least common ancestor of $u$ and $u'$. Then there is the even cycle $\{w,u'\},P_{u',v'},P_{v',u}, \{u,w\}$ that does not induce a clique as the nodes $u,w,u'$ and $v'$ lie on three different levels, a contradiction.
\end{proof}

The following simple but useful result shows that  the neighborhood of a node decomposes into cliques  if there are no small degree-choosable components.
\begin{lemma}\label{lem:disjointCliques}
Let $G$ be a graph with no degree-choosable components of radius $1$. Then for every $v\in V(G)$ the connected components of $G[\Gamma(v)]$ are cliques.
\end{lemma}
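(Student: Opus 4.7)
The plan is to argue by contradiction: suppose some connected component $C$ of $G[N(v)]$ fails to be a clique, and exhibit a small degree-choosable component in $G$ that uses $v$ together with a few vertices of $C$, violating the radius-$1$ assumption.

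Concretely, I would first reduce to a minimal witness. If $C$ is not a clique, there exist two non-adjacent vertices in $C$, and since $C$ is connected I can take a shortest path between them inside $C$. The first three vertices $u_0, u_1, u_2$ of such a shortest path satisfy $u_0 \sim u_1$, $u_1 \sim u_2$, and $u_0 \not\sim u_2$ (otherwise the path could be shortened). Then I consider the subgraph $H := G[\{v, u_0, u_1, u_2\}]$. Because $u_0,u_1,u_2 \in N(v)$, the vertex $v$ is adjacent to all three, so the edge set of $H$ is exactly $\{v u_0, v u_1, v u_2, u_0 u_1, u_1 u_2\}$, i.e., $K_4$ minus the edge $u_0 u_2$.

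Next I verify the three conditions that make $H$ a degree-choosable component. It is not a clique since $u_0 \not\sim u_2$, and it is not an odd cycle since $v$ and $u_1$ both have degree $3$ in $H$. For $2$-connectivity, I just check that removing any single vertex leaves a connected graph: deleting $v$ leaves the path $u_0 u_1 u_2$; deleting $u_1$ leaves the path $u_0 v u_2$; deleting $u_0$ (or symmetrically $u_2$) leaves a triangle on $\{v,u_1,u_2\}$ (resp. $\{v,u_0,u_1\}$). Hence $H$ is $2$-connected and therefore a DCC by \Cref{thm:degree-choosablity} (equivalently by the definition of a degree-choosable component).

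Finally, since $v$ is adjacent to every other vertex of $H$, the eccentricity of $v$ in $H$ is $1$, so $H$ has radius $1$. This contradicts the hypothesis that $G$ contains no degree-choosable components of radius $1$, completing the proof. I do not expect any real obstacle here; the only thing to be careful about is choosing the path of length two from a shortest path so that the non-edge $u_0 u_2$ is guaranteed, which is exactly what makes $H$ fail to be a clique and forces it to be a DCC.
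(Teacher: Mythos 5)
Your proof is correct and uses essentially the same argument as the paper: both exhibit the induced diamond ($K_4$ minus an edge) on $v$ together with a non-transitive triple $u_0\sim u_1\sim u_2$, $u_0\not\sim u_2$ in $N(v)$ as a degree-choosable component of radius~$1$. The paper phrases this as transitivity of adjacency among neighbors of $v$, while you extract the triple from a shortest path inside a non-clique component and verify $2$-connectedness explicitly, which is just a more detailed write-up of the same idea.
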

\begin{proof}
Let $u_1, u_2$ and $w$ be neighbors of $v$ with $\{u_1,u_2\}\in E$. If $\{u_2,w\}\in E$ we also have that $\{u_1,w\}\in E$ as otherwise the graph induced by $v,u_1,u_2$ and $w$ would be a  degree choosable component of radius one, a contradiction.
\end{proof}
Note that all neighbors of any node $v$ with degree $\Delta$ have to induce more than one clique as otherwise $v$ and its neighbors would form a $(\Delta+1)$-clique and the graph does not admit a $\Delta$-coloring.

For a node $u$ in a BFS tree let $N_+(u)$ denote the set of children of $u$ in the BFS tree and let $d(u)=|N_+(u)|$~ be the number of children in the tree.

\begin{lemma}[BFS Expansion Lemma] \label{lem:deg-sum}
Let $G$ be a graph without any DCC of radius at most $r$ and $BFS(v)$ the unique depth-$r$ BFS tree rooted at some node $v\in V(G)$.  Let  $u'$ be a node of $BFS(v)$ with $\deg(u')\geq 3$ and $u$ its parent.
Then $d(u)+d(u')\geq \min\{\deg(u), \deg(u')\}$ holds.
\end{lemma}
\begin{proof}
Assume that the node $u$ is on some level $j\leq r-1$ of the BFS tree. Then  $u'$ is on level $j+1\leq r$. If $u=v$ the statement holds trivially as $d(u)=\deg(u)$. So assume that $u\neq v$.
Due to the uniqueness of the BFS tree (cf. \Cref{lemma:bfsunique}) $u\neq v$ has a single neighbor on level $j-1$ and $\deg(u)-1$ neighbors on level $j$ and $j+1$. Similarly $u$ is the only neighbor of $u'$ on level $j$.
For two nodes $u,u'$ of the BFS tree let $P(u,u')$ denote the unique path in the BFS tree between $u$ and $u'$. Let $\alpha=\min\{\deg(u),\deg(u')\}$. The result holds if $d(u)\geq \alpha$. We consider two cases for $d(u)<\alpha$.

\noindent\textit{Case $d(u)=\alpha-1$:} Assume that $u'$ has at least one neighbor in $N(u)$ that is on level $j+1$ and let $u''$ be such a neighbor of $u'$. We show that $u'$ does not have a neighbor on its level that is not connected to $u$. For contradiction, assume that $u'$ also has a neighbor $w$ on level $j+1$ that is not a child of $u$.
Let $v'$ denote the least common ancestor of $u'$ and $w$ in the BFS tree. The subgraph induced by the union of $\{u',w\}\cup \{u',u''\} \cup\{u,u''\}\cup P(v',w) \cup P(v',u)$ is a DCC of radius at most $r$, a contradiction.
Thus, if $u'$ has one neighbor in $N(u)$ it has at most $\alpha-2$ neighbors on its own level and we have $d(u')\geq \deg(u')-1-(\alpha-2)=\deg(u')-\alpha+1$. This implies $d(u)+d(u')\geq \deg(u')\geq \alpha$.

 Now assume that $u'$ has no neighbor in $N(u)$ on level $j+1$. We show that it can have at most one neighbor on its own level. Assume that it has two neighbors $w, w'$ on its own level. Let $v'$ denote the least common ancestor of $w$ and $w'$. Then the union of edges $P(v,w) \cup \{w,u'\} \cup \{u',w'\} \cup P(w',v)$ forms an even cycle that does not induce a clique, a contradiction. In this case we have $d(u')\geq \deg(u')-2\geq 1$ where the last inequality holds due to $\deg(u')\geq 3$. This implies $d(u)+d(u')\geq \alpha$.

\medskip

\noindent\textit{Case $d(u)<\alpha-1$:} Because $d(u)<\deg(u)-1$ node $u$ must have a neighbor on level $j$ that we denote by $w$. We first prove the following subclaim.

\medskip

\noindent\textbf{Subclaim:} Node $u'$ has no neighbor on level $j+1$ which is not connected to $u$.
\begin{proof}
Assume by contradiction that such a neighbor denoted with $w'$ exists. Note that the edge $\{w,w'\}$ does not exist because otherwise the nodes $u,u', w'$ and $w$ would form a 4-cycle which does not induce a clique as $\{u,w'\}\notin E$.

Let $v'$ be the least common ancestor of $w$ and $w'$.  Then \[\{u',w'\}, P_{w'v'}, P_{v'w}, \{w,u\}, \{u,u'\}\] form an even cycle.  The path $P_{w'v'}$ neither goes through $w$ nor through $u$ as the edges $\{w,w'\}$ and $\{u,w'\}$ do not exist. Thus the even cycle actually is a proper cycle and does not collapse. Furthermore, it does not induce a clique as the nodes $u,u'$ and $v'$ lie on three different levels of the BFS tree. Thus it induces a short even cycle, i.e., a DCC with radius at most $r$, a contradiction.
\renewcommand{\qedsymbol}{$\blacksquare$}
\end{proof}

Due to the subclaim all neighbors of $u'$ in $G$ that are on the same level as $u'$ in the BFS tree are also connected to $u$ in the BFS tree. As $u$ has $d(u)$ children  $u'$ has at most $d(u)-1$ neighbors on the same level and one neighbor in the level above. This implies $d(u')\geq \deg(u')-1-(d(u)-1)=\deg(u')-d(u)$. That is, $d(u)+d(u')\geq \deg(u')\geq \alpha$.
\end{proof}

Informally, this lemma holds because if there are too many edges inside the local neighborhood of a node, then these edges must create a degree-choosable component. \Cref{lem:deg-sum} implies that the BFS tree in the graph $G$ (without the marking process) expands exponentially in $\Delta-1$ with every two hops (see the proof of \Cref{lem:no-dcc-expansion} for more details).

\subsection{Exponential Expansion after the Marking Process}
\label{ssec:expansionMarking}
Next, we define a randomized marking process and show that the BFS tree of unmarked nodes either contains a \Tnode or expands exponentially as well.

\paragraph{Marking process.} In our algorithms we apply the following marking process. Each node $v$ \emph{selects} itself independently and uniformly at random with some probability $p$. Then, if there is another selected node within distance $b$ (the \emph{backoff distance}), the node unselects itself. Otherwise $v$ picks two non-adjacent neighbors and colors them with color \emph{one}. We call these neighbors \emph{marked} and say $v$ \emph{marked} them. In this case node $v$ becomes a \Tnode.
\Cref{lem:detExpansion,lem:detExpansionDelta3} show that if we apply the marking process, the graph of the unmarked nodes expands regardless of the randomness in the marking process or there was a DCC. The proof is based on the previous \Cref{lem:deg-sum}. Due to the backoff distance $b$ marked nodes cannot exist too close to each other if they are not neighbors of the same \Tnode. Thus, for every node that is blocked from expanding due to marked nodes, there are many other nodes that are not blocked on that level of the BFS tree.
These expansion results are used in the randomized algorithms for \Cref{thm:mainDelta,thm:deterministicDelta}; in particular we use the two main results of this section \Cref{lem:detExpansion} and  \Cref{lem:detExpansionDelta3}  in \Cref{sec:shatterinProbability} (to bound the failure probability in a shattering type argument) and \Cref{ssec:smallDelta} (to bound the failure probability in a union bound type argument).

We begin with a useful lemma on the structure of BFS trees in graphs without small degree choosable components. Note that the lemma can be applied to $G$ itself but also to the graph that one obtains by removing marked nodes from $G$. \Cref{lem:disjointCliques} shows that the connected components of the neighborhood of a node are cliques; the next lemma shows related but even stronger statements about the graph induced by the children in an BFS tree in such graphs. Let $w$ be a node of a BFS tree. We say that a maximal clique of children of $w$ in the BFS tree is of the \emph{odd cycle type} if it has one node that has a neighbor on the same level of the BFS tree which is not connected to $w$. We say a degree choosable component is \emph{small} if it is of radius at most $r$.

\begin{lemma}\label{lem:neighborsClique}
Let $G=(V,E)$ be a graph with maximum degree $\Delta\geq 3$ such that $G$ does not contain any DCCs of radius at most $r$ and consider an induced subgraph $H$ of $G$ and a depth-$r$ BFS tree in $H$.  Then the following hold:
\begin{enumerate}[label=(\arabic*)]
 \item If $u$ is a node in the BFS tree that has at least two neighbors on the same level of the BFS tree, then $\Delta>3$ and $u$ together with all of its neighbors on the same level of the BFS tree form a clique and all these nodes have the same parent as $u$.

\item 
If a node $u$ in the BFS tree is adjacent to a child of its parent $w$, then every neighbor of $u$ on the same level of the BFS tree is a child of $w$.
 \label{item:second}
\item Any clique of the odd cycle type consists of at most one node.
If $w$ is a node on some level $t<r$ of the BFS tree, then among its children there is at most one clique of the odd cycle type. \label{item:oddCycleType}
\end{enumerate}
\end{lemma}
\begin{proof}
\begin{enumerate}[label=(\arabic*)]
\item
Let $u_1$ and $u_2$ be two neighbors on the same level of the BFS tree and let $w$ be the parent of $u$. Further, let $v'$ be the least common ancestor of $u,u_1$ and $u_2$. If any of the edges $\{u_1,u_2\}, \{w,u_1\}, \{w,u_2\}$ does not exist then the graph induced by the edges $\{w,u\}, \{u,u_1\}, \{u,u_2\}$  and the paths $P_{v',w},P_{v',u_1}$ and $P_{v',u_2}$ implies a small DCC that is neither a clique nor an odd cycle, a contradiction.
\item Assume that $u_2$ is a second neighbor on the same level of the BFS tree and let $v'$ be the least common ancestor of $u,u_1$ and $u_2$. Assume that $u_2$ is not a neighbor of $w$. Then the nodes $u,u_1,u_2$ and the paths $P_{v',u_1}, P_{v',u_2}$ induce a small DCC, a contradiction.
\item  The first claim follows as \ref{item:second} implies that a clique with more than two nodes cannot have a neighbor with a different parent on the same level of the BFS tree.

To prove the second claim, assume for contradiction that there are two odd cycle type cliques (on level $t+1$ of the BFS tree) with parent $w$. Due to the first claim both consist of a single node that we denote by $v_i$ and $v_j$, respectively. Let $u_i$ ($u_j$) be $v_i$'s ($v_j$'s) neighbor on level $t+1$ that exist by the definition of a clique of the odd cycle type.  Then $v_i,v_j,u_i,u_j$ and the respective path to their least common ancestor form a DCC of radius at most $r$.
\qedhere
\end{enumerate}
\end{proof}

 We now show, that the BFS tree around a node after the marking process has been applied expands exponentially  in $\Delta-2$ with every two hops if we do not encounter a \Tnode.

\begin{lemma}
\label{lem:detExpansion}
Let $G=(V,E)$ be a graph with maximum degree $\Delta\geq 4$ and  such that $G$ does not contain any DCCs of radius at most $r$, for an even $r$.
Let $H$ be a graph obtained from $G$ by applying the marking process to $G$ with $b=6$ and removing all marked nodes, and let  $v \in V(H)$ be a node in $H$. If  $\deg_G(u) = \Delta$ for each $u \in N_r(v)$,
then the $r$-hop BFS tree  around node $v\in V$ has at least $(\Delta-2)^{r/2}$ nodes on level $r$ that are reachable from $v$ through paths of lengths $r$ consisting of unmarked nodes or this depth-$r$ BFS tree contains a \Tnode reachable from $v$ through a path of unmarked nodes.
\end{lemma}
\begin{proof}[Proof of \Cref{lem:detExpansion}]
Consider the BFS tree in $H$ around an arbitrary node $v$.
By Lemma~\ref{lemma:bfsunique} (applied to $H$) this BFS tree is unique and due to \Cref{lem:deg-sum} (applied to $H$) we have $d(u)+d(u') \geq \min\{\deg_H(u), \deg_H(u')\}$ for any node $u$ of the BFS tree and its child $u'$. If we encounter a node with two marked neighbors in $G$ during this process the lemma holds as the node is a \Tnode. Thus we assume that any encountered node has at most one marked neighbor.

We perform an induction on the depth of the BFS tree. The root $v$ has at least $\Delta-1$ unmarked neighbors. Let $B_t(v)$ denote the set of nodes at distance $t$ from $v$ in the BFS tree in $H$ and let $u\neq v$ be a node in $B_t(v)$.
The proof is divided into two cases based on the number of children of $u$ in the BFS tree.
\begin{enumerate}[label=(\arabic*)]
  \item \emph{$d(u) = 0$.}
	\textit{Claim: $u$ has a marked neighbor.} As $u\neq v$ it has a parent $w$ in the BFS-tree. If $u$ had no marked neighbor it would have $\Delta-1\geq 2$ neighbors on level $t$ of the BFS tree and its parent $w$ on level $t-1$. By \Cref{lem:neighborsClique} $u$, $w$ and these neighbors would  form a clique of size $\Delta+1$, a contradiction.

	Instead $u$ has a marked neighbor and only $k=\Delta-2$ neighbors on level $t$ of the BFS tree. Let $u_1,\ldots,u_k$ be these neighbors. As $k\geq 2$ \Cref{lem:neighborsClique} implies that all these neighbors form a clique and are neighbors of $w$ as well.

	\textit{Claim: None of the nodes $u_1,\ldots,u_k$ has a marked neighbor.}
		Let  $v_T$ be the \Tnode with the marked neighbors $v_{M_1}$ and $v_{M_2}$ such that $v_{M_1}$ is neighbor of $u$. Let $j\in \{1,\ldots,k\}$.
		As the distance between two $\Tnodes$ is at least $b=6$ node $u_j$ cannot have a marked neighbor other than $v_{M_1}$ or $v_{M_2}$. Any $u_j$ is not a neighbor of $v_{M_2}$ as then $v_T, v_{M_1}, v_{M_2},u,u_j,w$ induces a small DCC (the edge $\{v_{M_1}, v_{M_2}\}$ does not exist).
		If $w$ is not a neighbor of $v_{M_1}$ then none of the nodes $u_j, j=1,\ldots,k$ can have  $v_{M_1}$ as a neighbor because otherwise $w,v_{M_1}, u, u_j$ would induce a small DCC.
		If $w$ is a neighbor of $v_{M_1}$ then $v_{M_1}$ is on the same level of the BFS tree as $u$ and the other nodes and, by \Cref{lem:neighborsClique}, they all form a ($\Delta+1)$ clique with $w$, a contradiction.

 Due to the claim each $u_j, j=1,\ldots,k$ does not have a marked neighbor. We deduce that $d(u_j)=1$ holds for $j=1\ldots,k$ as $u_j$ only has $\Delta-2$ neighbors on level $t$, i.e., $u, u_1,\ldots,u_{j-1}, u_{j+1},\ldots u_k$,  (any further neighbor on this level would also be part of the clique, cf. \Cref{lem:neighborsClique}) and $w$ is the only neighbor of $u_j$ on level $t-1$. Let $u_j'$ denote the single unmarked neighbor of $u_j$ on level $t+1$ of the BFS tree.

\textit{Claim: Node $u'_j$ cannot have a marked neighbor.} With the notation from before $u$ has the marked neighbor $v_{M_1}$. Node $u_j'$ cannot have a marked neighbor that was created by a $\Tnode$ other than $v_T$ due to $b=6$. If $u'_j$ was connected to $u_{M_2}$ the nodes $v_T, v_{M_1}, v_{M_2},u,u_j, u'_j$ would form a small DCC. If $u_j'$ was connected to $u_{M_1}$ the nodes $u,u_{M_1}, u,j,u'_j$ would form a small DCC.

\smallskip

We have $\deg_H(u_j)=\Delta$ and $\deg_H(u_j')=\Delta$ and $d(u_j)=1$ for all $j=1,\ldots,k$.
  Now, we apply \Cref{lem:deg-sum} to each $u_j$ and each of their children $u_j'$ and obtain that $d(u_j') \geq \Delta - d(u_j) = \Delta-1$, i.e., $u_j'$ has $\Delta-1$ unmarked neighbors on level $t+2$.

	Therefore these nodes ($u$ and its neighbors on the same level of the BFS tree) contribute a total of at least $(\Delta-1)(\Delta-2)$ nodes to the BFS tree at level $t+2$.

	\item \emph{$d(u) \geq 1.$} There are two subcases based on whether the children of $u$ form a clique of size $\Delta-1$ or not; here we consider the topology of the children of $u$ including the children of $u$ that potentially are marked.

  First assume that they do form a clique of size $\Delta-1$. Each node in the clique has $\Delta-2$ nodes on the same level of the BFS tree, one parent and one further unique neighbor (on the next level) that potentially is marked.  Due to the uniqueness of the BFS tree (cf. \Cref{lemma:bfsunique})  all these further neighbors are distinct. At most one of the nodes of the clique can have its subtree blocked because it is marked itself or because its unique child is marked. In any case there are at least $\Delta-2$ nodes in the clique that each have one distinct unmarked child on level $t+2$ of the BFS tree.

  Now, assume that the children of $u$ do not form a $(\Delta-1)$-clique (this paragraph is still considering the topology including children of $u$ that potentially are marked). Instead, let $C_1, C_2, \dots, C_k$ form the maximal cliques formed by the children of $u$ (cf. \Cref{lem:disjointCliques}). Recall that a clique is of the odd cycle type if it is of size one and closes an odd cycle, i.e., it has $1$ edge to another node on the same level that is not a neighbor of its parent $u$. Also recall that if $|C_i| \geq 2$, there are no edges from $C_i$ to other nodes on the same level of the BFS tree (cf. \Cref{lem:neighborsClique}).

Let $C_i$ be a clique that is not of the odd cycle type. Then the only edges of node of the clique $C_i$ to nodes on level $t+1$ are the $|C_i|-1$ edges inside $C_i$. Furthermore, each vertex of the clique has one edge to  level $t$ (to $u$). Thus we obtain that any $u'$ in such a clique has $d(u') \geq \deg_H(u')-(|C_i|-1) - 1\geq \Delta-1 -(|C_i|-1) - 1 = \Delta-|C_i|-1$. Thus any such clique contributes
\[
    \sum_{u' \in C_i} d(u') \geq |C_i|(\Delta-|C_i|-1)
  \]
unmarked nodes on level $t+2$ of the BFS tree. This value is always greater than $\Delta-2$. Thus as soon as there is a clique that is not of the odd cycle type there are at least $\Delta-2$ unmarked nodes on level $t+2$ of the BFS tree. Now, assume that all cliques are of the odd cycle type.
Then, there is actually just one clique due to \Cref{lem:neighborsClique}, \ref{item:oddCycleType}.
Thus $d(u)=1$ holds. Let $u'$ be the only node of the single clique of odd cycle type. \Cref{lem:deg-sum} implies that $d(u')\geq \min\{\deg_H(u'),d_H(u)\}-d(u)\geq \Delta-2$, i.e., $u'$  has at least $\Delta-2$ unmarked children on level $t+2$ of the BFS tree.
\end{enumerate}
To bound the total number of nodes in $B_{t+2}(v)$ let $x$ denote the number of nodes $u\in B_t(v)$ of the first type, that is, with $d(u)=0$. The analysis for those nodes contributes $(\Delta-1)(\Delta-2)$ nodes on level $t+2$ of the BFS tree but also uses the other $(\Delta-2)$ nodes on level $t$ of the BFS tree that form a clique with $u$.  Thus there are at least $|B_t(v)|-x(\Delta-1)$ nodes $\tilde{u}\in B_t(v)$ of the second type, that is, nodes with $d(\tilde{u})\geq 1$, for which we can independently count their contribution of at least $(\Delta-2)$ nodes on level $t+2$ per node of the second type. Thus the total number of nodes on level $t+2$ can be bounded by
\[
|B_{t+2}(v)| \geq x(\Delta-1)(\Delta-2) + (|B_t(v)|-x(\Delta-1))(\Delta-2) = (\Delta-2)|B_t(v)|.\qedhere
\]
\end{proof}

If $\Delta=3$ an expansion that is exponential in $\Delta-2=1$ is not useful and we cannot guarantee that we expand by a factor of $\Delta-1$ every two hops after the marking process has been applied. Instead we prove that we expand by a factor of $\Delta-1$ every six hops or encounter a $\Tnode$. We first need a simple but useful lemma.

\begin{lemma}\label{lem:duOne}
Let $\Delta\geq 3$ and consider a depth-$r$ BFS tree with root $v$ in some graph without DCCs of radius at most $r$. Then the following hold:
\begin{enumerate}[label=(\arabic*)]
\item Any node of the BFS tree at distance at most $r-1$ from the root and with degree $\Delta$ in $G$ has at least one child in the BFS tree. \label{lem:oneOut}
\item Let $u$ be a node of the BFS tree and $k+dist(v,u)\leq r$. If all nodes of the $k$-hop subtree rooted at $u$ have degree $\Delta$ and are unmarked then the subtree contains at least $(\Delta-1)^{\lfloor k/2\rfloor}$ unmarked nodes on level $k$ and all of them are reachable from $u$ through paths of length at most $k$ consisting of unmarked nodes. \label{lem:noDieOut}
\end{enumerate}
\end{lemma}
\begin{proof}
\begin{enumerate}[label=(\arabic*)]
\item This is trivially satisfied for the root of the BFS tree. Let $u$ be a node that is not the root of the BFS tree and let $w$ be its parent in the tree.
To prove the claim we only need to show that $u$ cannot have $\Delta-1$ neighbors on the same level of the BFS tree. For contradiction, assume that it has neighbors $u_1,\ldots,u_{\Delta-1}$ on the same level of the tree. Due to \Cref{lem:neighborsClique} all these nodes must be also connected to $w$ which implies that $u,w,u_1,\ldots,u_{\Delta-1}$ form a $(\Delta+1)$-clique, a contradiction.

\item If $u=v$ the result follows with \Cref{lem:detExpansion}.
If $u\neq v$ let $B_t(u)$ be the nodes on level $t$ of the BFS tree rooted at $u$. Due to \ref{lem:oneOut} we have $1\leq d(w)\leq \Delta-1$ for all nodes the BFS tree rooted at $u$. The statement holds trivially for $k=0$ and we show the result by induction on $k$.
Now consider some level $k$. For each node $w \in B_k(u)$, we get via \Cref{lem:deg-sum} (we can apply the lemma as there are no marked nodes) that the number of descendants of $w$ in $B_{k+2}(u)$ is at least
	\[
		\sum_{w' \in N_+(w)} d(w') \geq d(w)(\Delta-d(w)) \geq \Delta-1.
	\]
	Since each node has a unique ancestor in the BFS tree, we get that
	\begin{align*}|B_{k+2}(v)| \geq (\Delta-1)|B_k(v)|~.& \qedhere \end{align*}
\end{enumerate}
\end{proof}

\begin{lemma} \label{lem:detExpansionDelta3}
Let $\Delta = 3$ and $G=(V,E)$ be a graph such that $G$ does not contain any DCCs of radius at most $r$, for an $r$ divisible by 6. Apply the marking process to $G$ with $b=15$. Let $v\in V$ be an arbitrary unmarked node. If $\deg_G(u) = \Delta$ for each $u \in N_r(v)$ then the $r$-hop connected component of unmarked nodes around $v$ contains a $\Tnode$ or at least  $4^{r/6}$ nodes.
\end{lemma}
\newcommand{\tree}{\ensuremath{\mathfrak{T}}\xspace}
\begin{proof}
We consider the tree $\tree$ around $v$ that one obtains by cutting the BFS tree around $v$ in $G$ whenever a marked node is encountered.
Let $\tree_t(v)$ denote the nodes on level $t$ of \tree. We show by induction on $t$  that $\tree_t(v)$ contains at least $4^{t/6}$ nodes if we do not encounter a \Tnode (note that the marked nodes that make a node of $\tree_t(v)$ a \Tnode are not part of $\tree_t(v)$). If we encounter a node with two marked neighbors during the induction the lemma holds as it forms a \Tnode. Thus we assume that any encountered node has at most one marked neighbor.

\smallskip

\emph{Base case: There are at least $4$ unmarked nodes on level 6 of \tree.} Due to $b=15$ at most two of the $5$-hop subtrees rooted at the children of $v$ are cut off due to a  marked node. The remaining $5$-hop subtree rooted at the third child cannot be cut off due to a marked node. Thus it contributes at least  $(\Delta-1)^2=4$ (unmarked) nodes on level $6$ of the tree \tree due to \Cref{lem:duOne}, \ref{lem:noDieOut}.

\smallskip

\emph{Inductive step.} Consider the set $\tree_t(v)$ for some $t \geq 6$. We split the proof in three cases: for each node $u \in \tree_t(v)$ either $d(u) = 2$, $d(u)=1$, or $d(u)=0$ holds where $d(u)$ denotes the number of children of $u$ in the tree $\tree$.

\smallskip

\emph{Case $d(u) = 2$.} As $u$ has a parent and two children, say $u_1$ and $u_2$, in \tree it does not have a marked neighbor. Due to $b = 15$ there can be at most two marked nodes in the $6$-hop subtree rooted at $u$ and both stem from the same \Tnode. If the $5$-hop subtree of any of $u$'s children has no marked nodes it contributes at least $4$ unmarked nodes to $\tree_{t+6}$ due to \Cref{lem:duOne}, \ref{lem:noDieOut}. If both $5$-hop subtrees rooted at $u_1$ and $u_2$ have a marked node, say $u_{M_1}$ and $u_{M_2}$, then they stem from the same $\Tnode$ $v_T$. Then $\{v_T,v_{M_1}\}, \{v_T,v_{M_2}\}, P_{u,v_{M_1}}, P_{u,v_{M_2}}$ must induce an odd cycle as otherwise it induces a small DCC, in particular, the edge $\{u_1,u_2\}$ does not exist. Thus $u_1$ has a neighbor $u_1'$ that lies not on the path $ P_{u,v_{M_1}}$ and the $4$-hop subtree rooted at $u_1'$ does not contain a marked node. Further, note that the $4$-hop subtree rooted at $u_1'$ is part of the subtree rooted at $u$ as $u_1'$ must be a child of $u_1$ in \tree: Assume $u'_1$ is not a child of $u_1$, that is it is on level $t+1$ of the BFS tree. Then $\{v_T,v_{M_1}\}, \{v_T,v_{M_2}\}, P_{u,v_{M_1}}, P_{u,v_{M_2}}$ and $P_{v',u'_1}, P_{v',u}, \{u_1,u'_1\}$ induce a small DCC where $v'$ is the least common ancestor of $u$ and $u'_1$.
Hence \Cref{lem:duOne}, \ref{lem:noDieOut} implies that the $4$-hop subtree rooted at $u_1'$ contributes at least $4$ unmarked nodes to $\tree_{t+6}$.

\smallskip

\emph{Case $d(u) = 1$.} First consider the case that $u$ has a marked neighbor: Let $u'$ be the single child of $u$ in \tree. If the $5$-hop subtree rooted at $u'$ does not contain a marked node it contributes at least $4$ nodes to $\tree_{t+6}(v)$ due to \Cref{lem:duOne}, \ref{lem:noDieOut}. So assume that it contains a marked node $v_{M_2}$ and let $v_{M_1}$ be the marked neighbor of $u$. Note that $v_{M_1}\neq v_{M_2}$ as the subtree rooted at $u'$ does not contain $v_{M_1}$. Then $u'$ cannot be a neighbor of $v_{M_1}$ as otherwise $D:=\{u,v_{M_1}\}, \{v_{M_1},v_{T}\}, \{v_T,v_{M_2}\}, P_{u',v_{M_2}}$ induces a small DCC. $u'$ can also not have another neighbor $u''$ on level $t+1$ of the BFS tree as then $D$ together with $\{u',u''\}, P_{v',u}, P_{v',u''}$ induces a small DCC where $v'$ is the least common ancestor of $u$ and $u''$. Thus $u'$ has a further child $w$ that does not lie on the path $P_{u', v_{M_2}}$ and the $4$-hop subtree rooted at $w$ does not contain any marked nodes.
 Hence \Cref{lem:duOne}, \ref{lem:noDieOut} implies that this subtree contributes at least $4$ nodes to $\tree_{t+6}$.

If $u$ has no marked neighbor, the $6$-hop subtree rooted at $u$ may contain a marked node. If the subtree does not contain a marked node, it contributes at least $(\Delta-1)^{3} = 8$ nodes to $\tree_{t+6}$ due to \Cref{lem:duOne}, \ref{lem:noDieOut}. If the subtree rooted at $u$ contains a marked node let $u'$ be the unique neighbor on level $t$ that $u$ must have. We show that the $6$-hop subtree rooted at $u'$ does not contain a marked node: Let $v_{M_1}$ be the marked node in the subtree of $u$ created by $\Tnode$ $v_T$. The subtree rooted at $u'$ cannot contain $v_{M_1}$ as otherwise $P_{u,v_{M_1}}, P_{u',v_{M_1}}, \{u,u'\},P_{v',u}, P_{v',u'}$ induces a small DCC where $v'$ is the least common ancestor of $u$ and $u'$. The subtree also cannot contain a marked node that stems from any $\Tnode$ other than $v_T$ due to $b=15$. Thus let $v_{M_2}$ be the other marked node that $v_T$ created. $v_{M_2}$ cannot be a node in the $6$-hop subtree rooted at $u'$ because then $P_{u,v_{M_1}}, \{v_{M_1},v_T\},\{v_{M_2},v_T\}, P_{u',v_{M_2}}, \{u,u'\},P_{v',u}, P_{v',u'}$ induces a small DCC where $v'$ is the least common ancestor of $u$ and $u'$. Hence the $6$-hop subtree rooted at $u'$ does not contain any marked nodes and contributes $8$ nodes to $\tree_{t+6}(v)$ due to \Cref{lem:duOne}, \ref{lem:noDieOut}. Therefore $u$ and $u'$ together contribute at least $8$ nodes to $\tree_{t+6}(v)$.

\smallskip

\emph{Case $d(u) = 0$.} Node $u$ must have one marked neighbor (on level $t+1$), one parent and one unmarked neighbor $u'$ on level $t$ of the BFS tree. The $6$-hop subtree rooted at $u'$ does not contain a marked node: Assume that it does contain a marked node $v_{M_2}$ and let $v_{M_1}$ be the marked neighbor of $u$. First note that $v_{M_1}\neq v_{M_2}$ and let $v_T$ be the $\Tnode$ that marked both nodes. Let $v'$ be the least common ancestor of $u$ and $u'$. Then $P_{v',u}, P_{v',u'}, \{u,u'\}, \{u,v_{M_1}\}, \{v_{M_1},v_{T}\}, \{v_T,v_{M_2}\},P_{v_{M_2},u'}$ induces a small DCC, a contradiction.

Then \Cref{lem:duOne}, \ref{lem:noDieOut} implies that the $6$-hop subtree rooted at $u'$ contributes at least $(\Delta-1)^{3}=8$ unmarked nodes on level $t+6$, that is, $u$ and $u'$ together contribute at least $8$ nodes.

\medskip

In every case we obtain at least $4$ unmarked reachable nodes on level $\tree_{t+6}$ per node on level $\tree_t$, which proves the claim.
\end{proof}
\subsection{A Simplified Proof for the Distributed Brooks' Theorem} \label{ssec:distributed-brooks}
Panconesi and Srinivasan proved a distributed version of Brooks' Theorem (cf. \Cref{thm:local-brooks}).
The goal of this section is to provide a simplified proof of the result. We begin by observing that if a graph does not have any small degree-choosable components, it is locally expanding. This result is easier to prove than \Cref{lem:detExpansion} as it does not include the marking process.

\begin{lemma} \label{lem:no-dcc-expansion}
	Let $G$ be a graph and $v \in V(G)$ be a node such that inside the $r$-radius neighborhood of $v$ there are no degree-choosable components and every node has degree $\Delta$. Then for each even $r$ there are at least $(\Delta-1)^{r/2}$ nodes at distance $r$ from $v$.
\end{lemma}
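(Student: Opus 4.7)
The plan is to prove the stronger two-step bound $|B_{t+2}(v)|\ge(\Delta-1)\,|B_t(v)|$ and then chain it by induction on even $t\in\{0,2,\ldots,r-2\}$ to obtain $|B_r(v)|\ge(\Delta-1)^{r/2}$. By \Cref{lemma:bfsunique} the depth-$r$ BFS tree $\BFS(v)$ is unique and each level-$(t+1)$ node has a unique parent at level $t$, so the sets of grandchildren of distinct $u\in B_t(v)$ are disjoint; it therefore suffices to lower bound the number of grandchildren of each $u\in B_t(v)$ in isolation.

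The technical heart, and the step I expect to be the main obstacle, is to show that no interior node is a dead end, i.e.\ that $d(u)\ge 1$ for every $u\in B_t(v)$ with $1\le t\le r-1$. For $t\ge 2$ this will follow directly from \Cref{lem:deg-sum}: the (unique) parent $p$ of $u$ sits at level $t-1\ge 1$ and is therefore not the root, so by \Cref{lemma:bfsunique} one of $p$'s $\Delta$ edges goes to level $t-2$, giving $d(p)\le\Delta-1$ and hence $d(u)\ge\Delta-d(p)\ge 1$. The case $t=1$ needs a separate argument because $d(v)=\Delta$ makes the deg-sum bound vacuous at the root. Here I would invoke \Cref{lem:disjointCliques}: if some $u\in B_1(v)$ had $d(u)=0$, then $u$ would be adjacent in $G[N(v)]$ to all of its $\Delta-1$ siblings, forcing the clique component of $u$ in $G[N(v)]$ to coincide with all of $N(v)$, so that $\{v\}\cup N(v)$ would be a $K_{\Delta+1}$. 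Since every node in this subgraph already spends its full degree $\Delta$ internally, this $K_{\Delta+1}$ would be a full connected component of $G$, contradicting the standing assumption that $G$ is a nice graph.

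Once $d(u)\ge 1$ is in hand, a second application of \Cref{lem:deg-sum} to each parent-child pair $(u,u')$ gives $d(u')\ge\Delta-d(u)$ for every child $u'$ of $u$, so $u$ contributes at least $d(u)(\Delta-d(u))$ grandchildren to $B_{t+2}(v)$. The map $d\mapsto d(\Delta-d)$ on $\{1,\ldots,\Delta-1\}$ is minimized at the endpoints with value $\Delta-1$, so summing over $u\in B_t(v)$ gives $|B_{t+2}(v)|\ge(\Delta-1)\,|B_t(v)|$ for every even $t\ge 2$. The remaining base case $t=0$ falls out of \Cref{lem:disjointCliques} as well: writing the clique decomposition of $G[N(v)]$ as $C_1,\ldots,C_k$ with sizes $s_i\le\Delta-1$ summing to $\Delta$, each node in $C_i$ contributes exactly $\Delta-s_i$ children in $B_2(v)$, giving $|B_2(v)|=\sum_i s_i(\Delta-s_i)\ge\sum_i s_i=\Delta\ge\Delta-1$. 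Chaining the base case with the inductive expansion then yields $|B_r(v)|\ge(\Delta-1)^{r/2}$ as required.
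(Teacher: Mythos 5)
Your proof is correct and takes essentially the same route as the paper: uniqueness of the BFS tree, the base case at level $2$ via \Cref{lem:disjointCliques}, and the two-level bound $\sum_{u'} d(u') \ge d(u)(\Delta-d(u)) \ge \Delta-1$ from \Cref{lem:deg-sum}, chained using the unique-parent property. Your extra care in verifying $d(u)\ge 1$ (ruling out a $K_{\Delta+1}$ component via the standing niceness assumption) only makes explicit what the paper's one-line inequality $d(u)(\Delta-d(u))\ge\Delta-1$ leaves implicit.
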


\begin{proof}
	Consider the BFS tree around node $v$. The $1$-hop neighborhood of $v$ consists of $\Delta$ nodes that form disjoint cliques due to \Cref{lem:disjointCliques}. As not all neighbors can form a single clique each such neighbor has at least one edge to level two of the BFS tree. This implies that $|B_2(v)| \geq \Delta\geq \Delta-1$.

	Now consider some level $t$. For each node $u \in B_t(v)$, we get via \Cref{lem:deg-sum} that the number of descendants of $u$ in $B_{t+2}(v)$ is at least
	\[
		\sum_{u' \in N_+(u)} d(u') \geq d(u)(\Delta-d(u)) \geq \Delta-1.
	\]
	Since each node has a unique ancestor in the BFS tree, we get that
	\begin{align*}|B_{t+2}(v)| \geq (\Delta-1)|B_t(v)|~.& \qedhere \end{align*}
\end{proof}

Now we can use previous lemmas to show that the uncolored node in the statement of \Cref{thm:local-brooks} can fix its color as it sees a degree-choosable component or a node of degree $< \Delta$ inside its $O(\log_{\Delta} n)$-neighborhood.

\begin{lemma}\label{lem:dcc-or-expansion}
	Let $G$ be a graph with maximum degree $\Delta\geq 3$. The $(2\log_{\Delta-1} n)$-neighborhood of any node contains a degree-choosable component or it contains a node of degree smaller than $\Delta$.
\end{lemma}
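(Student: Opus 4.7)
The plan is to argue by contradiction, and to directly invoke the expansion statement of \Cref{lem:no-dcc-expansion}. Suppose, toward a contradiction, that there is a node $v\in V(G)$ such that its $r$-radius neighborhood (for $r := 2\log_{\Delta-1} n$, rounded down to the nearest even integer if necessary) contains neither a degree-choosable component nor a node of degree strictly less than $\Delta$. Since $G$ has maximum degree $\Delta$, the latter assumption forces every node in the $r$-neighborhood of $v$ to have degree exactly $\Delta$, which puts us exactly in the hypothesis of \Cref{lem:no-dcc-expansion}.

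Next, I would apply \Cref{lem:no-dcc-expansion} to $v$ with parameter $r$. This yields at least $(\Delta-1)^{r/2}$ nodes at distance exactly $r$ from $v$ in $G$. Plugging in $r = 2\log_{\Delta-1} n$ gives
\[
|B_r(v)| \;\geq\; (\Delta-1)^{\log_{\Delta-1} n} \;=\; n,
\]
and together with $v$ itself (at distance $0$) this already accounts for strictly more than $n$ distinct vertices of $G$, which is impossible. Hence the assumption fails and the claimed dichotomy holds.

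The one mild technical point to handle carefully is the parity/integrality of $r$: \Cref{lem:no-dcc-expansion} is stated for \emph{even} $r$, while $2\log_{\Delta-1} n$ need not be an integer. I would absorb this either by taking $r$ to be the smallest even integer at least $2\log_{\Delta-1} n$ (the neighborhood can only grow, so the conclusion is unchanged), or by noting that rounding up the radius by an additive constant changes nothing in the $O(\log_{\Delta-1} n)$ statement used later. I do not expect any substantive obstacle beyond this bookkeeping — all the real work already sits inside \Cref{lemma:bfsunique}, \Cref{lem:deg-sum}, and \Cref{lem:no-dcc-expansion}, and the present lemma is essentially the observation that exponential growth at rate $\Delta-1$ cannot be sustained for $2\log_{\Delta-1} n$ hops inside an $n$-vertex graph.
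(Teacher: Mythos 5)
Your proposal is correct and follows essentially the same route as the paper: assume no DCC and no low-degree node in the $r$-neighborhood, invoke \Cref{lem:no-dcc-expansion} to get $|B_r(v)|\geq(\Delta-1)^{r/2}\geq n$, and conclude a contradiction from the vertex count. The only difference is that you spell out the parity/integrality bookkeeping (prefer your rounding-up fix over rounding down, which would lose a factor $\Delta-1$), a point the paper's proof silently glosses over.
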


\begin{proof}
		Fix a node $v\in V(G)$ and assume that its $r = 2\log_{\Delta-1}n$ neighborhood does not contain a degree-choosable component and that nodes in this neighborhood have degree $\Delta$.
		By \Cref{lem:no-dcc-expansion}, the BFS tree has $|B_r(v)| \geq (\Delta-1)^{r/2} \geq n$ nodes. Therefore the BFS tree cannot expand, and there is an edge to a lower level of $\BFS(v)$ from $B_r(v)$, or there is a node of degree $< \Delta$ in $B_r(v)$.
\end{proof}

Now we are ready for the proof of the distributed Brooks' theorem.

\begin{proof}[Proof of Theorem~\ref{thm:local-brooks}]
Let $c$ denote the partial coloring $G$, with $c(v) = \bot$. We say that $v$ has a \emph{token}. We can always do the following operation: let $u$ be an arbitrary neighbor of $v$. If $v$ does not have a free color, that is, all of its $\Delta$ neighbors have $\Delta$ different colors, then we can move the token to $u$ and color the node $v$ with color $c(u)$. If $v$ has a free color, it can choose that color and the token is eliminated. Now, if the $(2\log_{\Delta-1} n)$-neighborhood of $v$ contains a node of smaller degree, we can move the token to that node, and it is guaranteed to have a free color.
	Now assume that no such node exists. By \Cref{lem:dcc-or-expansion}, there exists a degree-choosable component in the $(2\log_{\Delta-1} n)$-neighborhood of $v$. Let $u$ be one of the closest nodes to $v$ in the degree-choosable component $B$. We move the token from $v$ to $u$ by the shortest path. Next we uncolor the whole $B$. By definition there exists a $\Delta$-coloring of $B$ compatible with the existing coloring in the rest of the graph. \qedhere
\end{proof}

\begin{remark}
\Cref{thm:local-brooks} implies an $\SLOCAL(O(\log_{\Delta}n))$ algorithm (see \cite{SLOCAL17} for the model). This combined with \cite[Theorem 1.11]{SLOCAL17} immediately implies the existence of a randomized polylogarithmic round algorithm for $\Delta$-coloring. Note that \cite{Panconesi1995} explicitly gives such an algorithm and we provide a faster randomized algorithm in \Cref{thm:mainDelta}.
\end{remark}


\section{\texorpdfstring{Algorithmic Preliminaries \& Notation}{Algorithmic Preliminaries \& Notation}}
\label{sec:preliminaries}

Given a subset $C\subseteq V$ of nodes of a graph $G=(V,E)$,  $C$ has \emph{weak diameter} $d$ if $d_{G}(v,w)\leq d$ for all $v,w\in C$. 
\begin{definition}[Network Decomposition, \cite{awerbuch89}]
\label{def:decomposition}
  A weak \emph{$\big(d(n),c(n)\big)$-network-decomposition} of an
  $n$-node graph $G=(V,E)$ is a partition of $V$ into clusters such
  that each cluster has weak  diameter at most $d(n)$ and the
  cluster graph is properly colored with colors $1,\dots,c(n)$.
\end{definition}
One can compute a decomposition with $d(n), c(n)=2^{O(\sqrt{\log n})}$ in $2^{O(\sqrt{\log n})}$ rounds \cite{panconesi1992improved}.

In the $(\deg+1)$ list coloring problem each node $v$ has a list $L(v)$ of available colors with $|L(v)|\geq \deg(v)+1$. The objective is to properly color the graph such that each node picks a color from its list.
\begin{theorem}[\cite{fraigniaud15} +\cite{Elkin17}]
\label{thm:listColoring}
There is a deterministic distributed algorithm that solves the $(\deg+1)$ list coloring problem in time $O\big(\sqrt{\Delta\log\Delta}\cdot\logstar \Delta\big)$ given a $O(\Delta^2)$ coloring of the graph.
\end{theorem}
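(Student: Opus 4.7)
This theorem is the composition of two cited distributed algorithms, so the plan is to describe how they combine rather than to reprove either from scratch. The overall strategy is to run the \emph{local conflict coloring} algorithm of Fraigniaud, Heinrich, and Kosowski, and to implement its symmetry-breaking subroutine using the color-reduction technique of Elkin tailored to the given $O(\Delta^2)$-coloring.

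First, I would unpack the Fraigniaud--Heinrich--Kosowski framework: it reduces $(\deg+1)$-list coloring to a sequence of $O(\sqrt{\Delta\log\Delta})$ local decision phases, each of which requires every node to distinguish itself from its neighbors using a fresh coloring computed from the current state. In each such phase, nodes sparsify their lists down to $O(\sqrt{\Delta\log\Delta})$ ``candidate'' colors and then commit to a color that is locally consistent with their neighbors' candidates; a careful probabilistic argument (which can be made deterministic using a proper coloring as source of symmetry breaking) shows that the overall procedure terminates in $O(\sqrt{\Delta\log\Delta})$ phases.

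Second, I would replace the symmetry-breaking step inside each phase by Linial-style color reduction bootstrapped from the given $O(\Delta^2)$-coloring, following Elkin. Because the palette fed into each color-reduction invocation has size $\mypoly(\Delta)$ rather than $\mypoly(n)$, each invocation runs in $O(\logstar \Delta)$ rounds instead of $O(\logstar n)$. Summing over the $O(\sqrt{\Delta\log\Delta})$ phases then yields the claimed bound $O(\sqrt{\Delta\log\Delta}\cdot\logstar \Delta)$.

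The main technical obstacle is checking that the Fraigniaud--Heinrich--Kosowski invariants survive when its symmetry breaking is carried out on top of the externally supplied $O(\Delta^2)$-coloring: one has to verify that after each palette sparsification every node still holds a list of size at least $\deg+1$, and that the local conflict-resolution rule depends only on information visible through the $O(\Delta^2)$-coloring rather than through unique IDs. Once these compatibility conditions are established, no new combinatorial argument is needed and the stated runtime follows by adding up the two cited complexities.
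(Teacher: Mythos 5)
The paper does not prove this statement at all: \Cref{thm:listColoring} is imported as a black-box citation ([fraigniaud15] + [Elkin17]), so the only thing to get right is why the hypothesis of a given $O(\Delta^2)$-coloring turns the published bound $O\big(\sqrt{\Delta\log\Delta}\cdot\log^*\Delta + \log^* n\big)$ into $O\big(\sqrt{\Delta\log\Delta}\cdot\log^*\Delta\big)$, and your proposal does capture the essential point: in these algorithms the only $n$-dependence (and the only use of unique identifiers) is the initial Linial-style computation of an $O(\Delta^2)$-coloring, so once that coloring is supplied the remaining work depends on $\Delta$ alone. In that sense your approach is the same as the paper's. Be aware, though, that your description of the internals is not accurate: the Fraigniaud--Heinrich--Kosowski algorithm is not ``a probabilistic argument made deterministic using a proper coloring,'' nor does it proceed by $O(\sqrt{\Delta\log\Delta})$ phases each paying a separate $O(\log^*\Delta)$ color-reduction invocation; it is a deterministic conflict-coloring/arbdefective-coloring construction (with the [Elkin17] refinement), and the multiplicative $\log^*\Delta$ factor is already part of the cited $\Delta$-dependent term rather than something you create by swapping $\log^* n$ for $\log^*\Delta$ inside each phase. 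The additive $\log^* n$ term you are entitled to delete is exactly the one-time cost of producing the $O(\Delta^2)$-coloring. Since neither you nor the paper reproves the cited machinery, these inaccuracies do not invalidate the statement, but if you intend your sketch to stand as a justification you should either drop the speculative account of the phase structure or state it as it appears in the cited works.
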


By iterating through the color classes and greedily picking colors we obtain the following. 
\begin{theorem}[\cite{panconesi1992improved}]
Given a weak \emph{$\big(d(n),c(n)\big)$-network-decomposition} one can solve the $(\deg+1)$ list coloring problem in time $O(c(n)\cdot (d(n)+1))$. 
In particular $(\deg+1)$ list coloring can be solved in $2^{O(\sqrt{\log n})}$ rounds. 
\end{theorem}

\begin{theorem}[List Coloring \cite{ghaffari16improved}]\label{lem:randListColoring}
There is a randomized distributed algorithm that solves the $(\deg+1)$-list coloring problem in $O\big(\log \Delta +2^{O(\sqrt{\log \log n})}\big)$ rounds.
\end{theorem}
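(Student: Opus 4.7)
The plan is to combine a Ghaffari-style randomized shattering phase with a deterministic post-shattering solver, matching the architecture of every modern algorithm in this line. Concretely, I would run the randomized local trial procedure for $O(\log \Delta)$ rounds: at each round, every still-uncolored node $v$ tries a tentative color chosen uniformly at random from its current list $L(v)$ with some (slowly adjusting) probability $p_v$, and it permanently keeps that color provided no neighbor attempts the same color, after which the color is removed from all neighbors' lists. The essential quantity to track is the \emph{slack} $s_v := |L(v)| - \deg_H(v)$, where $H$ is the subgraph induced by currently uncolored nodes; the hypothesis $|L(v)| \ge \deg(v)+1$ ensures $s_v \ge 1$ throughout.

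For the analysis, I would reuse Ghaffari's two-regime invariant: either $|L(v)|$ is already much larger than $\deg_H(v)$, in which case a random trial succeeds with at least a constant probability, or there is a dense mass of edges through $v$'s neighborhood whose endpoints are losing list colors quickly; in both regimes a constant fraction of the local potential drops per round. After $O(\log \Delta)$ rounds this gives each node a probability $\ge 1 - 1/\mypoly(\Delta)$ of being permanently colored, and crucially the bad events ``$v$ is still uncolored'' depend only on an $O(\log \Delta)$-radius neighborhood. The standard shattering lemma (as used elsewhere in the paper) then yields that, w.h.p., every connected component of the uncolored subgraph has size $O(\mypoly \log n)$, and each such component is still a $(\deg+1)$-list-coloring instance because every uncolored node has retained at least one color per remaining uncolored neighbor.

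In the second phase I would apply the best known \emph{deterministic} distributed algorithm for $(\deg+1)$-list coloring inside each small residual component in parallel. On a graph of $N$ nodes, network-decomposition-based techniques give a deterministic complexity of $2^{O(\sqrt{\log N})}$ for list coloring; plugging in the shattered size $N = \mypoly \log n$ yields $2^{O(\sqrt{\log \log n})}$ rounds. Summing the two phases gives the claimed $O(\log \Delta) + 2^{O(\sqrt{\log \log n})}$ bound.

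The hard part will be the randomized phase. Unlike $(\Delta+1)$-coloring, where all palettes coincide and one can argue directly about single colors, here the lists may be adversarially chosen and essentially disjoint across neighbors, so the slack argument must be carried out in terms of the purely local pair $(|L(v)|, \deg_H(v))$ without appealing to any global palette structure. Once this invariant is in place, the concentration and locality properties needed for the shattering lemma follow from a standard Chernoff bound on the independent trial events together with a bounded-dependence argument, after which the post-shattering step is a black-box invocation of the deterministic list-coloring routine on components of polylogarithmic size.
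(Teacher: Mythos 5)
You should first note that the paper contains no proof of this statement at all: it is imported verbatim as a black box from the cited work, so there is no internal argument to compare against, and your sketch is effectively an attempt to reconstruct the cited proof. The architecture you describe (randomized list-color trials for $O(\log\Delta)$ rounds, shattering, then deterministic list coloring of polylogarithmic components via network decomposition) is indeed the standard one, but the part you defer to the end hides the one genuinely hard point, and the fix you propose (``a standard Chernoff bound on the independent trial events together with a bounded-dependence argument'') does not work as stated.

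The difficulty is the interplay between failure probability, dependence radius, and the shattering lemma. The constant per-round success probability of a node $v$ is an \emph{averaged} statement: it uses that, in expectation over the neighbors' coins, at most a constant fraction of $v$'s tried color is blocked. Conditioned on (or adversarially fixing) the neighbors' trials, $v$'s one-round success probability can drop to $\Theta(1/\Delta)$, so failure events of nearby nodes cannot simply be treated as independent, and the event ``$v$ is uncolored after $t=\Theta(\log\Delta)$ rounds'' genuinely depends on coins up to radius $t=\Theta(\log\Delta)$. The shattering lemma in the form used in this paper (\Cref{lem:shattering}) requires a constant dependence radius $c_2$, or else a failure probability like $\Delta^{-\Theta(c_2)}=\Delta^{-\Theta(\log\Delta)}$, which the simple ``constant progress per round'' analysis cannot deliver within $O(\log\Delta)$ rounds (it would force $\Theta(\log^2\Delta)$ rounds). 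Closing exactly this gap is the technical content of the cited work: one proves the stronger, locally robust bound that $v$ remains uncolored with probability at most $\Delta^{-c}$ \emph{even if the random bits outside a constant-radius neighborhood of $v$ are set adversarially}, restoring constant $c_2$. A secondary gap: after shattering, the components still carry $\Theta(\log n)$-bit identifiers, so the deterministic $2^{O(\sqrt{\log N})}$ bound with $N=\mathrm{poly}\log n$ only follows after recomputing $O(\log\log n)$-bit identifiers per component (or using a statement in the spirit of (P3)/(P4) of \Cref{lem:shattering}); as written your last step would give $2^{O(\sqrt{\log n})}$.
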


For some graph $G$ and an integer $R$ let $G^R=(V(G),\{\{u,v\}\mid \dist_G(u,v)\leq R\})$.
An $(\alpha, \beta)$ ruling set of a graph $G$ is a subset $M\subseteq V(G)$ of the nodes such that $\dist(v, M\setminus \{v\})\geq \alpha$ for all $v\in M$ and $\dist(v,M)\leq \beta$ for all $v\in V$. If $\alpha=1$, we also omit the parameter speak of \emph{$\beta$-ruling sets}.
Usually, when computing a ruling set it comes with a so called \emph{ruling forest}. 
\begin{definition}[Ruling Forest, \cite{awerbuch89}]
Given a graph $G = (V,E)$ and a set $V'\subseteq V$, we say that a spanning forest $F_r = (V_r, E_r)$, $V’ \subseteq V_r$ is an \emph{$(\alpha, \beta)$-ruling forest} with respect to $V'$ if the following three conditions holds:
\begin{enumerate}[label=(\arabic*)]
\item The root of the trees are in $V'$,
\item the distance in $G$ between any two roots is at least $\alpha$,
\item the depth of each tree in the forest is at most $\beta$.
\end{enumerate}
\end{definition}
The following lemma  summarizes the known distributed ruling set algorithms that we use.
\begin{lemma}
\label{lem:rulingSets}
For any integers $k,\beta\geq 1$ there are the following ruling set algorithms.

 \smallskip

		\begin{tabular}{cllll}
			(1) & $(2,\beta)$ & Det. &\cite{schneider13} \label{item:schneiderrulingDet} & $O(\beta\Delta^{2/\beta}+\logstar n)$ \\
			(2) & $(k,k^2\beta)$ & Det. & \cite{schneider13} + \cite{BEPSv3}& $O(k^2\cdot\beta\Delta^{2/\beta}+k\cdot\logstar n)$    \label{item:rulingDet}\\
			(3) & $(2, O(\log \log n))$ & Rand. &  \cite{Gfeller07} +\cite{schneider13} \label{item:gfeller} & $O(\log \log n)$ \\
			(4) & $(2,\beta)$ &  Rand. & \cite{ghaffari16improved} \label{item:ghaffari} & $O(\log^{1/\beta}\Delta)+2^{O(\sqrt{\log\log n})}$ \\
			(5) & $(2,1)$ &  Det. & \cite{awerbuch89,panconesi1992improved}  & $2^{O(\sqrt{\log n})}$ \\
		\end{tabular}
\end{lemma}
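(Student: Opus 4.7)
The plan is to establish each of the four items of \Cref{lem:rulingSets} separately. Items (1), (3), and (4) are direct invocations of previously published algorithms, so for these my proof amounts to recalling the relevant statement from the cited paper and checking that its parameters instantiate to the ones claimed here. Item (2) is a genuine composition of two results and requires a short bootstrapping argument.

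For item (1) I would invoke the deterministic construction of Schneider and Wattenhofer, whose algorithm starts from a Linial $O(\Delta^2)$-coloring (obtained in $O(\log^*n)$ rounds) and then iteratively shrinks the palette by locally promoting ``lowest color'' nodes, removing their neighbourhood, and repeating $\beta$ times; the running time is $O(\beta\Delta^{2/\beta})$ once the initial coloring is fixed, giving the stated bound. Item (3) follows from the Gfeller–Vicari sparsification, which produces a sparse candidate set in $O(\log\log n)$ rounds, followed by a constant number of Schneider–Wattenhofer refinement steps to enforce the independence constraint at distance~$2$; the final radius is the $O(\log\log n)$ diameter bound from Gfeller–Vicari. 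Item (4) is Ghaffari's randomized MIS-based algorithm, which already produces a $(2,\beta)$-ruling set within the claimed round complexity after replacing the graph $G$ by its square $G^2$ (so that independence at distance $2$ becomes independence in the usual sense), noting that $\Delta(G^2) = O(\Delta^2)$ only affects the leading $\log^{1/\beta}\Delta$ factor up to constants.

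The main work is in item (2). Here the plan is to bootstrap the $(2,\beta)$-algorithm of item (1) through $k$ iterations, each operating on a suitable power of the current candidate set. Start by running the $(2,\beta)$-algorithm to obtain $M_1\subseteq V$ with pairwise distance at least $2$ in $G$ and covering radius $\beta$. Then view $M_1$ as a virtual graph whose edges connect nodes within distance at most $3\beta$ in $G$, and run the $(2,\beta)$-algorithm on this virtual graph to obtain $M_2$ with pairwise virtual distance at least $2$; in $G$-distance this gives separation at least $3$, and covering radius roughly $3\beta$. Iterating $k$ times produces a set $M_k$ with pairwise $G$-distance at least $k+1$ and covering radius $O(k^2\beta)$, where the quadratic blow-up comes from summing $\sum_{i\leq k} i\beta = \Theta(k^2\beta)$ along the telescoping covering radii. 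Each iteration runs in $O(\beta\Delta^{2/\beta}+\log^*n)$ rounds on its virtual graph (the BEPSv3 input gives the initial $O(\Delta^2)$-coloring needed per iteration), totalling $O(k^2\beta\Delta^{2/\beta}+k\log^*n)$.

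The main obstacle I anticipate is bookkeeping for item (2): one must verify that each virtual graph indeed has degree polynomially related to $\Delta$ of the original graph, so that the $\Delta^{2/\beta}$ factor does not blow up over the $k$ iterations, and that the covering radii compose correctly (in particular that the ``$k^2$'' rather than a larger polynomial in $k$ suffices). Everything else is a careful but routine unpacking of the cited algorithms and their guarantees.
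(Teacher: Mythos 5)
Your handling of items (1), (3), and (4) matches the paper, which likewise treats them as citations; the only item the paper actually proves is (2), and there your argument has a genuine gap. As written, your iterative scheme does not give the claimed parameters: if the virtual graph on $M_1$ connects nodes within $G$-distance $3\beta$ and you compute a $(2,\beta)$ ruling set of it, the covering radius in $G$ becomes $\beta$ virtual hops each of length up to $3\beta$, i.e.\ $\Theta(\beta^2)$ already after the second level -- the radii compound \emph{multiplicatively} with the virtual edge length, not additively, so the telescoping sum $\sum_{i\leq k} i\beta$ does not describe your construction. If you instead shrink the virtual-edge threshold to roughly $i$ at level $i$ to make the radii add up to $O(k^2\beta)$, the virtual graph has degree up to $\Delta^{k}$, and running item (1) with parameter $\beta$ on it costs $O(\beta\Delta^{2k/\beta})$ per level, which destroys the $\Delta^{2/\beta}$ factor; you flag exactly this bookkeeping as unresolved, and indeed it does not close without a further idea.

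The missing idea, and the paper's entire proof, is to avoid iteration altogether: apply item (1) \emph{once} to the power graph $G^{k-1}$ with the enlarged domination parameter $\beta'=k\beta$. Distance $\geq 2$ in $G^{k-1}$ means distance $\geq k$ in $G$, and covering radius $\beta'$ in $G^{k-1}$ translates to at most $(k-1)\cdot k\beta\leq k^2\beta$ in $G$. The point of choosing $\beta'=k\beta$ is that the degree blow-up of the power graph is exactly cancelled in the runtime: $\Delta(G^{k-1})\leq \Delta^{k}$ gives $\big(\Delta^{k}\big)^{2/\beta'}=\Delta^{2k/(k\beta)}=\Delta^{2/\beta}$, and since each round on $G^{k-1}$ is simulated in $O(k)$ rounds of $G$, the total is $O\big(k\cdot \beta'\Delta^{2/\beta}+k\logstar n\big)=O\big(k^2\beta\Delta^{2/\beta}+k\logstar n\big)$. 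Your proposal never exploits this cancellation between the power of the graph and the enlarged $\beta$, which is why your runtime and radius accounting cannot simultaneously be made to match the statement.
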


\begin{proof}[Proof of \Cref{lem:rulingSets}]
We only provide a proof for the deterministic $(k,k^2\beta)$ algorithm which is alike the words in \cite[Section 1.1]{BEPSv3}. Consider $G^{k-1}$  and note that the maximum degree of $G^{k-1}$ can be upper bounded by $\Delta^k$.
Then apply \Cref{lem:rulingSets} (1) on $G^{k-1}$ with $\beta'=k\beta$. Note that each step of the algorithm can be executed in $k$ steps in $G$ leading to a runtime of 
\begin{align*}O(k\cdot (\beta'\Delta^{2k/\beta'} +\logstar n))=O(k^2\cdot\beta\Delta^{2/\beta}+k\cdot\logstar n)~.& \qedhere\end{align*}
\end{proof}

\section{\texorpdfstring{Deterministic $\Delta$-coloring (Theorem \ref{thm:deterministicDelta})}{Deterministic Delta-coloring (Theorem \ref{thm:deterministicDelta})}}
\label{sec:detColoring}
In this section we present our deterministic $\Delta$-coloring algorithm, exemplifying our layering technique.
\begin{figure}[ht]
\centering
\includegraphics[width=0.7\textwidth]{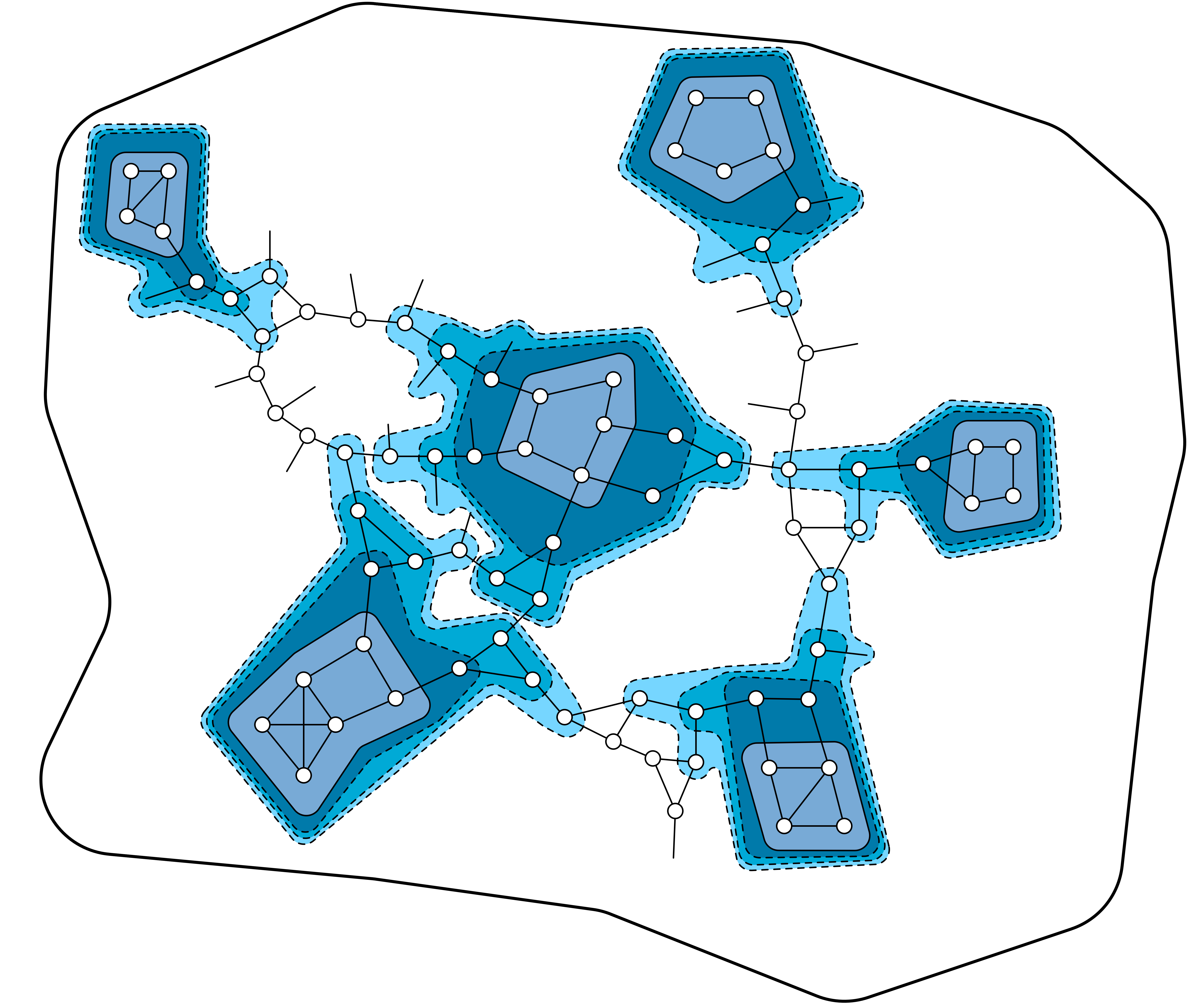}
\caption{The layering technique: First, the base layer is removed, then neighbors of the base layer are removed, then their neighbors and so on. Attention! The base layer in the illustrated example is build as in our main algorithm in \Cref{sec:delta-alg} and consists of degree choosable components. To illustrate the technique in the setting of \Cref{sec:detColoring} simply assume that each component of the base layer consists of a single node. This illustration only shows one connected component of the base layer and in general the illustrated layer building starts at several places of the graph (wherever nodes are in the base layer) at the same time.}
\label{figure:deltaColoring}
\end{figure}

\paragraph{Layering Technique.} In the layering technique there is a carefully chosen base layer $B_0$ that is easy to color and layers $B_1,\ldots,B_s$ where $B_i$ consists of the nodes in distance $i$ to $B_0$. This is particularly helpful for $\Delta$-coloring as we can $\Delta$-color the layers in \underline{reverse order} while respecting the colored neighbors in layers with a larger index. To $\Delta$-color layer $B_i, i\neq 0$ we need to solve a $(\deg+1)$ list coloring on the graph $G[B_i]$: A node $v\in B_i$ builds its list by removing the colors of neighbors in $B_{i+1}\cup\ldots \cup B_s$ from the set $\{1,\ldots,\Delta\}$. The size of this list is at least $\deg_{G[B_i]}(v)+1$ as $v$ has one neighbor in layer $B_{i-1}$. Then layer $B_0$ is colored after all other layers with different techniques as $\Delta$-coloring $B_0$ while respecting already colored neighbors might not be a $\deg+1$ list coloring instance. To make sure that we can still $\Delta$-color $B_0$ efficiently (we might have to recolor previously colored nodes) it has to be chosen carefully. Also consult \Cref{figure:deltaColoring} for an illustration of the technique. 

The deterministic algorithm in this section uses the layering technique in a simple setting with  $O(\log^2 n)$ layers. The layer $B_0$ is chosen to be a ruling set in which nodes have large distances such that \Cref{thm:local-brooks} can be applied to color the nodes in $B_0$ independently and in parallel. 
The total runtime is dominated by the $O(\log^2 n)$ iterations of list coloring due to the layering technique.


\paragraph{Algorithm.}
 First, color all nodes of $G$ with $O(\Delta^2)$ colors with Linial's algorithm \cite{linial92}. These colors are only used for symmetry breaking when applying list coloring algorithms and do not coincide with the desired $\Delta$-coloring. Let $R:=4\log_{\Delta-1} n +1\leq 7\log n /\log \Delta$ and $z=4\cdot R^2$.
\begin{enumerate}[label=(\arabic*)]
\item  \myit{Build layer $B_0$} Compute a $(R,z)$ ruling forest of $G$ with \Cref{lem:rulingSets}, (2). Add all nodes of the ruling set to layer $B_0\subseteq V$.
\item  \myit{Remove layers  $B_0,\ldots ,B_{z}$} Define layers  $B_1,\ldots ,B_{z}$ where $v\in B_i$ if the distance of $v$ to $B_0$ is $i$. Remove all layers from the graph.

\item \myit{Color layers $B_z, \ldots, B_1$ } Add the layers $B_z, \ldots, B_1$ to the graph one by one: When adding layer $B_i$ color the nodes of $B_i$  such that $\GraphI=G[\bigcup_{j=i}^{z} B_j]$ is validly $\Delta$-colored. Step $i=z,\ldots,1$ is a $\deg+1$ list coloring instance on $G_i=G[B_i]$ because a node $v\in B_i$ has an uncolored neighbor in $B_{i-1}$~.
We use  \Cref{thm:listColoring} to solve each list coloring instance.

\item \myit{Color layer $B_0$}  Use \Cref{thm:local-brooks} to independently color the nodes in $B_0$ through recoloring nodes within distance at most $2\log_{\Delta-1} n<R/2$.
\end{enumerate}

\begin{proof}[Proof of \Cref{thm:deterministicDelta}]
By the definition of a ruling set every node of $G$ is in distance at most $z$ from its root in the ruling forest. Thus every node is contained in  the $z+1$ layers and  is colored.

We formally show that coloring each layer is an instance of $\deg+1$ list coloring in the graph $G_i$. Assume that we are in step $i$ and want to color the nodes of $B_i$ such that $\GraphI$ is validly $\Delta$-colored. Pick a node $v\in B_i$. The list of available colors of $v$  is $\{1,\ldots,\Delta\}\setminus F_v$ where $F_v$ is the set of colors that have already been chosen by $v$'s colored neighbors in $\GraphI$.  The size of $F_v$ is at most $\deg_{\GraphI}(v)-\deg_{G_i}(v)$.
The degree $\deg_{\GraphI}(v)$ is upper bounded by $\Delta-1$ because at least one of $v$'s neighbors in $G$ is contained in $B_{i-1}$. Thus  the list of available colors of $v$ has size at least $\Delta-|F_v|\geq \Delta-(\deg_{\GraphI}(v)-\deg_{G_i}(v))\geq \deg_{G_i}(v)+1$~.

The runtime of the first step and the second step is $O\big(R^2\cdot\sqrt{\Delta}+R\cdot \logstar n+z+\logstar n\big)=O\big(R^2\cdot\sqrt{\Delta}\big)$. The third step takes $O(\sqrt{\Delta\log \Delta}\logstar \Delta)$ rounds for each of the $z=O(R^2)=O(\log_{\Delta}^2 n)$ iterations. The fourth step takes $O(R)$ rounds. In total the runtime is dominated by the third step.
\end{proof}

The following theorem appeared as \cite[Theorem 5]{Panconesi1995}. Our techniques can be used to give an alternative proof.
\begin{theorem}[\cite{Panconesi1995}, rephrased, reproved]
\label{thm:deterministicNetComp} Nice graphs can be $\Delta$-colored deterministically in the distributed model
  of computation in $2^{O(\sqrt{\log n})}$ rounds.
\end{theorem}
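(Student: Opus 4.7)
The plan is to reuse the same four-phase layering template used in the proof of \Cref{thm:deterministicDelta} verbatim, but to instantiate the two subroutines whose cost depends on $\Delta$ (the ruling-set construction and the per-layer $(\deg{+}1)$-list coloring) by faster black boxes that run in $2^{O(\sqrt{\log n})}$ deterministic rounds regardless of $\Delta$. No new structural statement is needed; the combinatorial correctness argument is word-for-word that of \Cref{thm:deterministicDelta}: every $v\in B_i$ has an uncolored neighbor in $B_{i-1}$ and hence its remaining list in $G[B_i]$ has size at least $\deg_{G[B_i]}(v)+1$, while the base layer $B_0$ is $\Delta$-coloured node-by-node via \Cref{thm:local-brooks}, using that nodes of $B_0$ sit at pairwise distance $\geq R := 4\log_{\Delta-1}n+1$ so the token-walks do not interact.

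First I would build $B_0$ as an $(R,\,4R^2)$-ruling set using \Cref{lem:rulingSets}(2), but with $\beta := \lceil\log\Delta\rceil$ instead of $\beta = 1$. Then $\Delta^{2/\beta} = O(1)$, and the runtime becomes
\[
    O\bigl(R^2\beta\Delta^{2/\beta} + R\log^* n\bigr) \;=\; O\bigl(R^2\log\Delta + R\log^* n\bigr) \;\leq\; O(\log^2 n),
\]
using $R\log\Delta \leq 2\log n + \log \Delta = O(\log n)$. Second, and this is the decisive change, I would replace \Cref{thm:listColoring} by a $2^{O(\sqrt{\log n})}$-round deterministic $(\deg{+}1)$-list coloring algorithm obtained from the Panconesi–Srinivasan network decomposition (compute a decomposition with $2^{O(\sqrt{\log n})}$ colour classes and clusters of diameter $2^{O(\sqrt{\log n})}$, then greedily colour the clusters of one class at a time). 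Colouring the $z = 4R^2 = O(\log^2 n)$ layers in reverse order then costs $O(\log^2 n)\cdot 2^{O(\sqrt{\log n})} = 2^{O(\sqrt{\log n})}$ rounds. The token phase for $B_0$ adds only $O(R) = O(\log n)$ rounds, so every phase fits in $2^{O(\sqrt{\log n})}$ and the total is the same.

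The main obstacle is the second modification: none of the list-coloring primitives quoted inside this excerpt (\Cref{thm:listColoring}, \Cref{lem:randListColoring}) achieves a $\Delta$-independent deterministic complexity of $2^{O(\sqrt{\log n})}$, so one must appeal to the Panconesi–Srinivasan network decomposition (or an equivalent tool) and show that it reduces $(\deg{+}1)$-list colouring to sequentially colouring cluster interiors of polylogarithmic diameter. Once this black box is in place, everything else is a direct re-run of the bookkeeping already done for \Cref{thm:deterministicDelta}, now with the $\sqrt{\Delta}$ factor eliminated and the $\log^2 n$ factor absorbed into $2^{O(\sqrt{\log n})}$.
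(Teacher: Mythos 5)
Your proposal is correct and follows essentially the same route as the paper: both rerun the layering scheme of \Cref{thm:deterministicDelta}, solve each per-layer $(\deg+1)$-list coloring via the Panconesi--Srinivasan network decomposition (iterating over its $2^{O(\sqrt{\log n})}$ color classes and brute-forcing clusters of diameter $2^{O(\sqrt{\log n})}$), and finish $B_0$ with \Cref{thm:local-brooks}, with any $\mathrm{poly}(\log n)$ number of layers absorbed into $2^{O(\sqrt{\log n})}$. The only harmless deviations are that the paper reuses the decomposition to compute an $(R,R+1)$ ruling set, giving just $O(\log n/\log\Delta)$ layers, while your call to \Cref{lem:rulingSets}(2) with $\beta=\lceil\log\Delta\rceil$ actually yields an $(R,R^2\lceil\log\Delta\rceil)$ ruling set rather than $(R,4R^2)$, so your layer count is $O(\log^2 n)$ --- still well within the claimed bound.
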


\paragraph{Algorithm.}
\begin{enumerate}[label=(\arabic*)]
\item \myit{Build layer $B_0$} Set $R=4\log_{\Delta-1} n +1\leq 7\log n /\log \Delta$ and compute a $\big(2^{O(\sqrt{\log n})}, 2^{O(\sqrt{\log n})}\big)$ network decomposition of $G^R$ with \cite{panconesi1992improved}. Then compute an $(R, R+1)$ ruling set with the help of the network decomposition in  $2^{O(\sqrt{\log n})}$ rounds.  Let $B_0$ be the nodes in the ruling set.

\item \myit{Remove layers  $B_0,\ldots ,B_{z}$} Define layers  $B_1,\ldots ,B_{z}$ where $z=R+1$ and $v\in B_i$ if the distance of $v$ to $B_0$ is $i$. Remove all layers from the graph.

\item \myit{Color layers $B_z, \ldots, B_1$} Add the layers $B_z, \ldots, B_1$ to the graph one by one: When adding layer $B_i$ color the nodes of $B_i$  such that $\GraphI=G[\bigcup_{j=i}^{z} B_j]$ is validly $\Delta$-colored. Step $i=z,\ldots,1$ is a $\deg+1$ list coloring instance on $G_i=G[B_i]$ because a node $v\in B_i$ has an uncolored neighbor in $B_{i-1}$.
Use the network decomposition to solve the list colorings (note that a network decomposition of $G^R$ is also a network decomposition of $G$ with an $R$ factor increase in the diameter of the clusters).
\item  \myit{Color $B_0$} Use \Cref{thm:local-brooks} to independently color the nodes in $B_0$ through recoloring nodes in distance at most $2\log_{\Delta-1} n<R/2$.
\end{enumerate}

\begin{proof}
The proof of correctness is along similar lines as the proof of \Cref{thm:deterministicDelta}.
The runtime of the third step dominates and is $O\big(z\cdot 2^{O(\sqrt{\log n})}\big)=2^{O(\sqrt{\log n})}$.
\end{proof}


\section{\texorpdfstring{Randomized $\Delta$-Coloring (Theorem \ref{thm:mainDelta} and \ref{thm:main2})}{Randomized Delta-Coloring ((Theorem \ref{thm:mainDelta} and \ref{thm:main2}))}} \label{sec:delta-alg}

The randomized algorithm is split into two slightly different versions based on $\Delta$: one version can handle any $\Delta \geq 4$ and the other any $3\leq \Delta = O(1)$. We refer to these two versions as the \emph{large-$\Delta$} version and the \emph{small-$\Delta$} version. In this section we present the algorithms  of \Cref{thm:mainDelta,thm:main2} and their proofs. We recommend to read the algorithms in a top-down manner beginning with the captions of the respective parts.

Both variants share the same basic structure. We decompose the graph into layers $B_0, \dots, B_{s},$ $C_0, \dots, C_{2r}$ (and in some cases also layers $D_0, \dots, D_{\alpha}$) of nodes such that all nodes are either colored or are in one of the layers. Then, the layers are iteratively colored in the reverse order that they were built. Coloring a single layer requires solving a $(\deg+1)$-list coloring instance since we will guarantee that each node has an uncolored neighbor in a lower layer.

In Phase~\ref{phase:remove-dccs} we build layers $B_0, \dots, B_{s}$: we identify the dense parts of the graph -- the parts which are easy to color after the rest of the graph has been colored.\footnote{\emph{Dense} in the sense that the part has a small DCCs that by definition can be colored easily after everything else is colored. The term \emph{dense} is inspired as the expansion results in \Cref{sec:structural} show that parts without DCCs are in some sense \emph{sparse}.} These are removed from the graph, along with the nodes around them, to be colored later. Let $H$ denote the remaining graph.

 In Phase~\ref{phase:shattering} we extract layers $C_0, \dots, C_{2r}$ from $H$:
 Phase~\ref{phase:remove-dccs} guarantees that $H$ does not contain any dense parts, and therefore the remaining graph must expand.
We take advantage of this by randomly inserting \emph{slack} into the graph. This means that we pick a set of well-separated nodes and color two of their neighbors with the same color: these nodes now effectively have decreased their degree and will be easy to color later. We again remove the nodes with slack along with the nodes around them to be colored later. Due to the expansion of $H$ we can prove that the probability of each node to remain after this process is small.

Actually, in the small-$\Delta$ case we prove by union bound that with high probability no node remains after this process.  In the large-$\Delta$ case we show that the graph formed by the remaining nodes has \emph{shattered}: remaining connected components are of small size and can be colored efficiently with a similar layering technique using layers $D_0, \dots, D_{\alpha}$ (cf. Phase \ref{phase:smallComponents} in \Cref{sec:highLevel}).

In Phase~\ref{phase:slack-coloring} we color the layers $C_0, \dots, C_{2r}$ in reverse order.
Then, in Phase~\ref{phase:color-dccs}, we color the layers $B_1, \dots, B_{s}$ in the reverse order. By definition $B_0$ consists of (dense) parts that are easy to color if the remaining graph is colored, actually $B_0$ consists of independent DCCs and we can color the components of $B_0$ at the very end.


\subsection{\texorpdfstring{The Randomized $\Delta$-Coloring Algorithms}{The Randomized Delta-Coloring Algorithms}}
\label{sec:highLevel}


First, we remove all degree-choosable components of radius $r$ or less from the graph. This implies that the graph must expand locally (\Cref{lem:dcc-or-expansion}). The two versions differ in the radius $r$: in the small version we choose $r = O(\log \log n)$ and in the large version $r = O(1)$. Let $b=6$ in the large version and $b=15$ in the small version. Set $p=\Delta^{-b}$.

First, color all nodes of $G$ with $O(\Delta^2)$ colors with Linial's algorithm \cite{linial92}. These colors are only used for symmetry breaking when applying list coloring algorithms and do in no way coincide with the desired $\Delta$-coloring.

\clearpage

	\begin{enumerate}[label=\textbf{\Roman*}, leftmargin=0.5cm]
	\item  \mytitle{Removing Degree Choosable Components and Layers around them} \label{phase:remove-dccs}
	\begin{enumerate}[leftmargin=\algorithmInnerMargin]
		\item 
		Each node that is contained in at least one degree choosable subgraph with radius at most $r$ selects one such subgraph. Let $\GraphR$ be the virtual graph that has a node for each selected degree choosable subgraph, and two subgraphs in $V(\GraphR)$ are connected in $\GraphR$ by an edge if they share a vertex or if they are connected by an edge in $G$. The graph $\GraphR$ has at most $n$ nodes as every node adds at most one degree choosable component, maximum degree at most $\Delta^{2r+1}\leq \Delta^{3r}$, and one round of a distributed algorithm in it can be simulated in $O(r)$ rounds in $G$.

\runtimel $O(r)=O(1)$.

		\runtimes $O(r)=O(\log \log n)$.

		\item \myit{Build layer $B_0$}	Find a $(2,\beta)$ ruling set $M$ of $\GraphR$ with $\beta=6\cdot r$. Add each node $v\in V(G)$ that is contained in a DCC  $C\in M$ to the base layer $B_0$.

 		\runtimel With \Cref{lem:rulingSets}, (4) $O\big(\log^{1/\beta} \Delta +2^{O(\sqrt{\log\log n})}\big)$.


    \runtimes With \Cref{lem:rulingSets}, (3) $O(r\cdot(O(\log \log n))) = O(\log^2 \log n)$.

		\label{phase:setR}

		\item \label{phase:grow} \myit{Remove layers $B_1,\ldots, B_s$}
		For $s =\beta\cdot (r+1)$ define layers $B_1,\ldots, B_s$.
		Layer $B_i$ consists of the nodes of $G$ that have distance $i$ (measured in $G$) from a node in $B_0$.
		Remove all layers $B_0,\ldots,B_s$ from the graph.

		\runtimel $O(s)=O(\beta\cdot r)=O(1)$.

		\runtimes $O(s)=O(\beta\cdot r)=O(\log^2 \log n)$.

		\end{enumerate}
		Note that (besides potentially some other nodes)  all nodes that are in a degree choosable component with radius at most $r$ are removed from the graph after phase \ref{phase:grow}.

		\item	\mytitle{Shattering of the Remaining Graph} \label{phase:shattering}
		\begin{enumerate}[resume,leftmargin=\algorithmInnerMargin]
		\item \myit{Random \Tnode creation}
		Consider the remaining graph \[H=G\setminus \big(\bigcup_{i=0}^sB_i\big)~.\]

		\label{phase:marking}
    Each node of $H$ becomes \emph{selected} independently with probability $p$. Then, if there is another selected node within distance $b$, both become unselected. If not, the selected node picks a random pair of non-adjacent neigbors and colors them with color \emph{one}. We call these neighbors \emph{marked}.

		\runtime $O(1)$.
		\item \label{phase:assigningHappyNodes}
		\myit{Remove layers $C_{0},\ldots,C_{2r}$}
			We call a node \emph{happy} if it has an uncolored path to a \Tnode in its $r$-neighborhood. By this definition we assign each happy node to its closest \Tnode in its $r$-neighborhood.

		We define the \emph{boundary} of graph $H$ as the set of nodes with degree less than $\Delta$ in $H$. Nodes  that are colored and have distance at most $r$ steps away from the boundary now remove their color and each such node is assigned to its closest boundary node, breaking ties using identifiers. It might happen that a node $v$ that is $r \leq \ell < 2r$ steps away from the boundary was assigned to a node $w$ that is at most $r-1$ steps away from the boundary. Due to the uncoloring $w$ might not be a \Tnode anymore. However, $w$ is assigned to a node $w'$ on the boundary. Then there is an uncolored path of length at most $2r$ from $v$ to $w'$ through $w$ and we assign $v$ to $w'$ as well.

	Define layers $C_0,\ldots, C_{2r}$, where $C_i$ consists of nodes of $H$ that are at distance $i$ from their respective assigned node. The layer $C_0$ consists of \Tnodes, and all nodes that have degree $<\Delta$ in $H$. Remove the layers $C_0,\ldots, C_{2r}$ and the marked nodes from the graph.

    In \Cref{ssec:smallDelta} we show that in the algorithm for small $\Delta$, all nodes are removed after this phase with high probability. Hence this algorithm proceeds directly to Phase~\ref{phase:colorHappy}.

		\runtimel $O(r)=O(1)$.

    \runtimes $O(r)=O(\log \log n)$.

		\item \myit{Color Small Components} \label{phase:smallComponents}
		Consider the remaining graph $L=H\setminus (C_0\cup \ldots\cup C_{2r}\cup C')$ where $C'$ are the marked nodes. In \Cref{ssec:shattering} we show that the probability for a node of $H$ to remain in $L$ is small and then the standard shattering technique (cf., e.g., \cite{BEPSv3} or \Cref{lem:shattering}) implies that $L$ consists of small connected components of size at most $N:=\poly \Delta\cdot \log_{\Delta}n$~.

		\Cref{sec:smallComponents} explains in detail how to color these small components if $\Delta\geq 4$. The core idea is that we can again handle the small components by constructing layers $D_0,\ldots,D_{\alpha}$ where $\alpha=O(\log^2\log n)$.   Besides some other nodes layer $D_0$ contains the nodes that have an uncolored neighbor in the layers $C_0\cup C_1\cup \ldots\cup C_{2r}$, i.e., they just did not get removed because the closest \Tnode was a little bit too far away. One can show that layer $D_0$ contains at least one node of each small component. However, then all nodes of the component are in one of the layers, because, assuming that a node $v$ of a small component does not see a node of the first few layers, the BFS tree of $v$ within the component expands so fast (basically due to \Cref{lem:detExpansion}) that it sees the whole component in $O(\log_{\Delta} N)=O(\log\log n)$ hops, a contradiction.
		
		\Cref{ssec:smallDelta} shows that for $\Delta=3$ this step can be omitted and $L$ does not contain any vertices.

		\runtimel $2^{O(\sqrt{\log \log n})}$ via \Cref{lem:smallComponents}.

		\runtimes $O\left(\sqrt{\Delta\log \Delta}\logstar\Delta\cdot\log^2\log n\right)$ via \Cref{lem:smallComponents}
		\end{enumerate}

	\item \mytitle{Color Happy Nodes From the Shattering Process} \label{phase:slack-coloring}
   		\begin{enumerate}[resume,leftmargin=\algorithmInnerMargin]
		\item \label{phase:colorHappy}
\myit{Color layers $C_{2r},\ldots,C_0$}
		Assume that the remaining small components are colored with $\Delta$ colors in Phase~\ref{phase:smallComponents}. Go through the layers $C_{2r},\ldots,C_0$ grown in step~\ref{phase:assigningHappyNodes} in reverse order and $\Delta$-color them one at a time while respecting the colors of nodes that are already colored. Coloring layer $C_i$ corresponds to a $(\deg+1)$-list coloring instance on $H[C_i]$, since for each $i=2r,\ldots, 1$ each node has a neighbor at a lower level and the nodes in $C_0$ have two neighbors of the same color. 

		\runtimel $O(\log \Delta + 2^{O(\sqrt{\log \log n})})$ with using \Cref{lem:randListColoring} $2r=O(1)$ times.


    \runtimes $O(\log \log n (\sqrt{\Delta\log \Delta}\cdot\logstar \Delta))$ with using \Cref{thm:listColoring} $2r$ times.
		\end{enumerate}
		\item \mytitle{Color Degree Choosable Components and Layers around them} \label{phase:color-dccs}
		\begin{enumerate}[resume,leftmargin=\algorithmInnerMargin]

		\item \label{phase:colorLayers}
\myit{Color layers $B_s,\ldots,B_1$}
		Go through the layers $B_s,\ldots,B_1$ grown in step~\ref{phase:grow} and color each layer with $\Delta$ colors while respecting nodes colored previously: Coloring layer $B_i$ forms a $(\deg'+1)$-list coloring instance on $G[B_i]$, since each node has an uncolored neighbor in $B_{i-1}$. 

		\runtimel $O(\log \Delta + 2^{O(\sqrt{\log \log n})})$ with using \Cref{lem:randListColoring} $s=O(1)$ times.


      \runtimes $O(\log \log n (\sqrt{\Delta\log \Delta}\cdot\logstar \Delta))$ with using \Cref{thm:listColoring} $s$ times.

		\item \myit{Color layer $B_0$} Each connected component of layer $B_0$ corresponds to one DCC in $M$ (selected in step~\ref{phase:setR}). Thus  each  connected component of $B_0$ is $\Delta$-list colorable and of  radius $\leq r$. We find a coloring by brute forcing each component independently.

		\runtimel $O(r)=O(1)$.

    \runtimes $O(r)=O(\log \log n)$.
	\end{enumerate}
\end{enumerate}
\begin{figure}
\begin{subfigure}{.5\textwidth}
  \centering
  \includegraphics[width=0.8\textwidth]{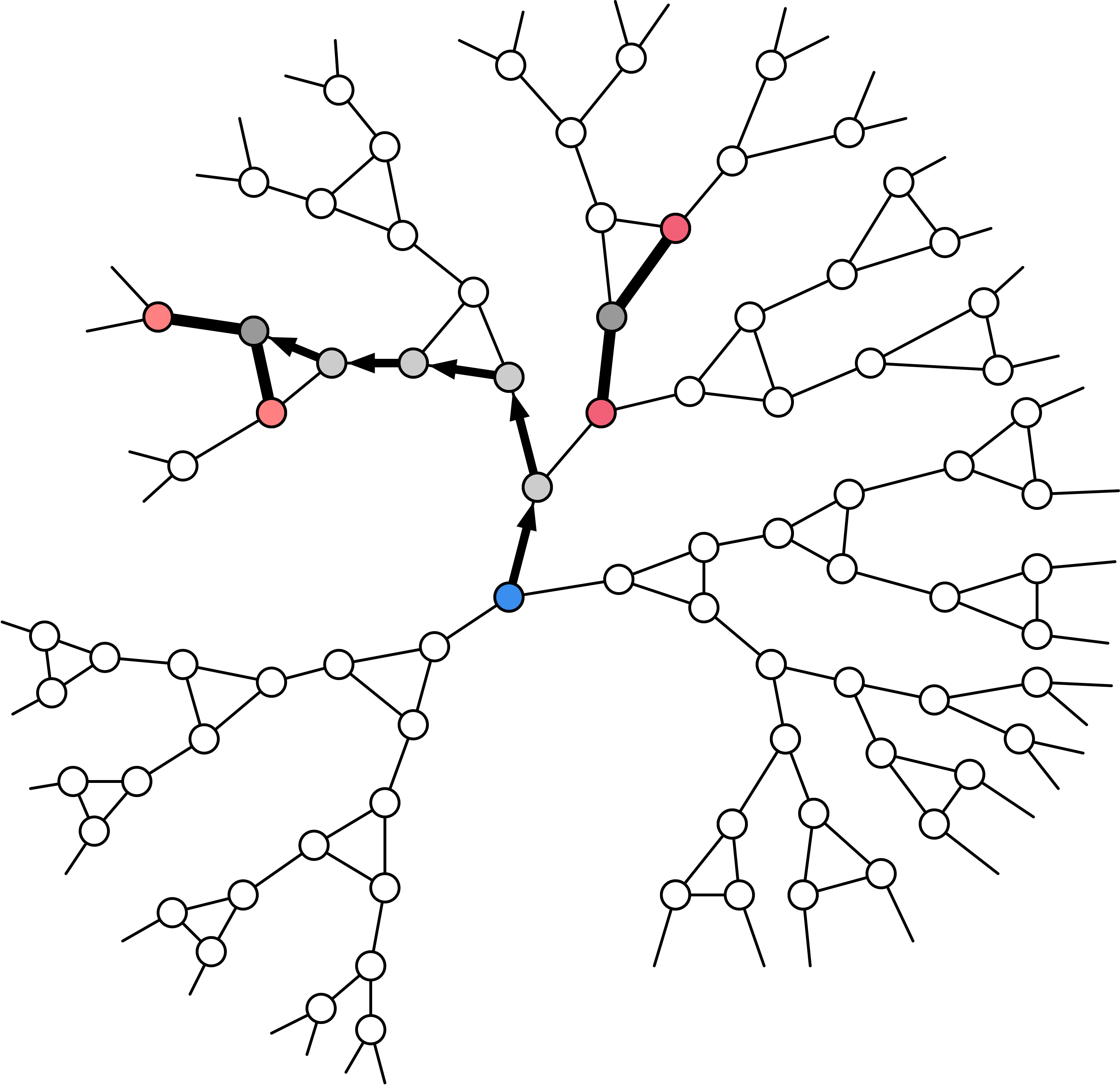}
  \label{figure:slackCreation1}
\end{subfigure}%
\begin{subfigure}{.5\textwidth}
  \centering
  \includegraphics[width=0.8\textwidth]{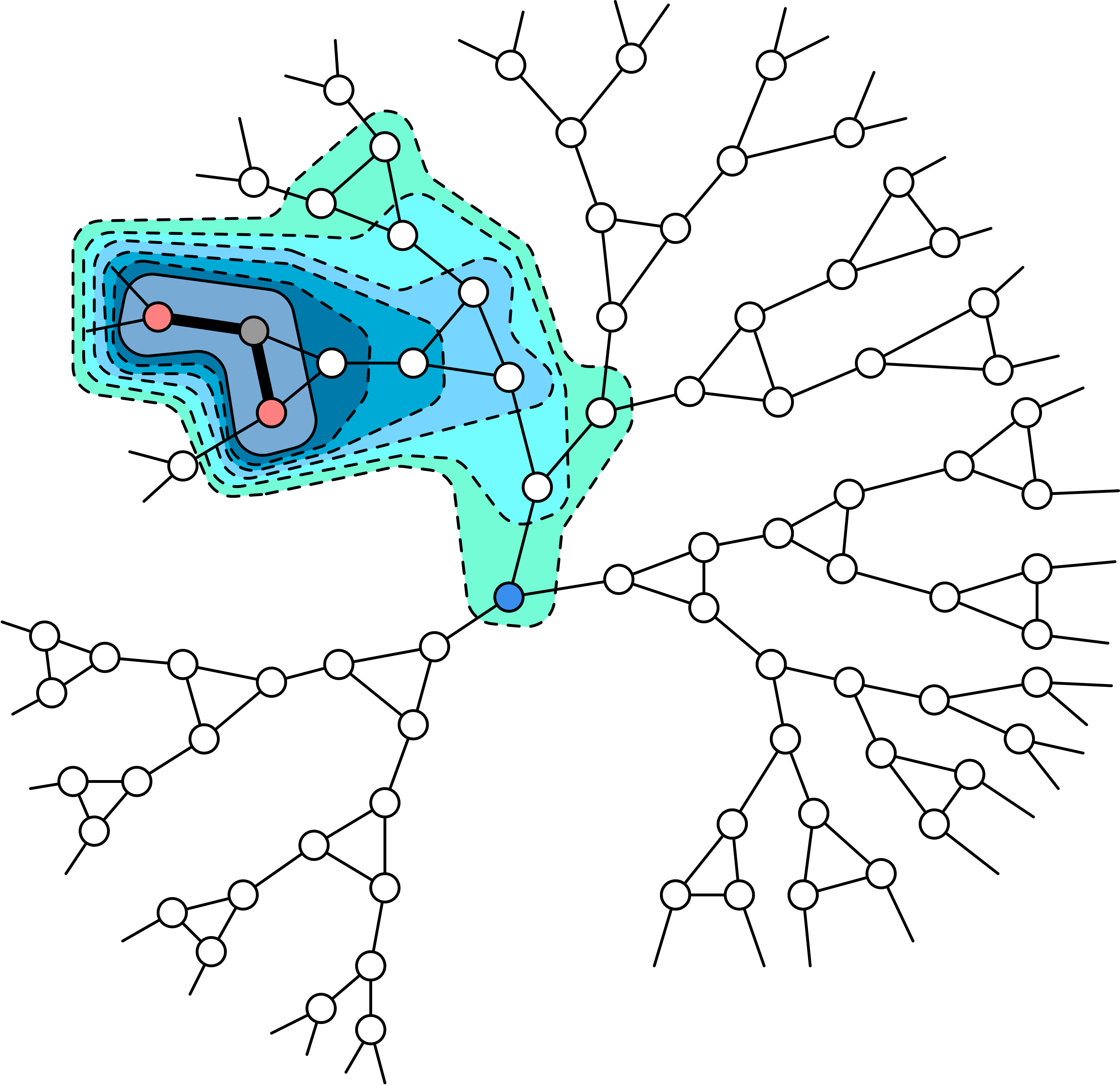}
 \label{figure:slackCreation2}
\end{subfigure}
\caption{The creation of \Tnodes in Phase \ref{phase:marking}. Note that the top right \Tnode is not reachable from the blue node in the center through a path of unmarked nodes. To be removed in the layer creation around the \Tnodes it is essential that a \Tnode is reachable through an unmarked path. The image is also helpful to get an understanding of how the locally Gallai tree like graph $H$ looks like (a tree of cliques and odd cycles). However, note that certain edges are left out to simplify the illustration.
In the right hand side illustration you can see how the \Tnode is removed from the layer creation around the \Tnode because it has a short path of unmarked nodes to the \Tnode node. 
}
\end{figure}

\begin{proof}[Proof of \Cref{thm:mainDelta}]
The runtime follows from summing up the runtimes for small $\Delta$ of all phases and is dominated by the runtime of phases (7) and (8) in which we need to solve $O(r+s)=O(\log^2\log n)$ list coloring instances in $O(\sqrt{\Delta\log\Delta}\logstar\Delta)$ rounds each. Solving the small components in phase (6) also has a significant contribution and can be done in $O(\sqrt{\Delta\log\Delta}\logstar\Delta\cdot \log^2\log n)$ rounds via \Cref{lem:smallComponents} if $\Delta\geq 4$. If $\Delta=3$ then \Cref{ssec:smallDelta} shows that phase (6) can be omitted as $L$ is the empty graph. The ruling set in phase (2) can be found in $O(\log^2\log n)$ rounds. All other phases take at most $O(r+s)=O(\log^2\log n)$ rounds. 

All nodes are colored at the end because any nodes that is in none of the layers \[B_0,\ldots,B_S, C_0,\ldots,C_{2r}, C'\]  is contained in a 'small component' and colored in phase (6), w.h.p. In \Cref{ssec:shattering} and \Cref{sec:smallComponents} we prove that this is indeed the case for $\Delta\geq 4$. For $\Delta=3$ we show in \Cref{ssec:smallDelta} that the aforementioned layers already contain all vertices of the graph w.h.p, i.e., w.h.p. the graph $L$ defined in phase (6) does not have any vertices. 
\end{proof}

\begin{proof}[Proof of \Cref{thm:main2}]
The runtime follows from summing up the runtimes for large $\Delta$ of all phases and is dominated by the runtime of phases (7) and (8) in which we need to solve $O(r+s)=O(1)$ list coloring instances in $O(\log \Delta) + 2^{O(\sqrt{\log\log n})}$ rounds. Solving the small components in phase (6) also has a significant contribution and can be done in $2^{O(\sqrt{\log\log n})}$ rounds via \Cref{lem:smallComponents}. The ruling set in phase (2) can be found in $O(\log \Delta) + 2^{O(\sqrt{\log\log n})}$ rounds. All other phases take $O(r)=O(1)$ rounds. 

All nodes are colored at the end because any nodes that is in none of the layers \[B_0,\ldots,B_S, C_0,\ldots,C_{2r},C'\]  is contained in a 'small component' and colored in phase (6), w.h.p. In \Cref{ssec:shattering} and \Cref{sec:smallComponents} we prove that this is indeed the case.   
\end{proof}

\subsection{Shattering of the Remaining Graph (Phases \ref{phase:marking}-\ref{phase:smallComponents})} \label{ssec:shattering}
In this section we show that the process of phase~\ref{phase:marking} and \ref{phase:assigningHappyNodes}  produces a graph with remaining components of size $O(\poly(\Delta)\cdot\log n)$. In \Cref{sec:smallComponents} we show how to color these small components fast. The nodes that are put into the layers in phase \ref{phase:assigningHappyNodes} are colored later in phase \ref{phase:colorHappy}.

For a node $v$ let $\mathcal{E}_v$ be the event that $v$ is removed in the graph in phases \ref{phase:marking}-\ref{phase:assigningHappyNodes}. Let  $t$ be the radius such that the event $\mathcal{E}_v$ only depends on the random bits of nodes in radius $t$ of $v$.
 The standard shattering technique (cf. \Cref{lem:shattering}) shows that the connected components of non-removed nodes are \emph{small} if the probability of $\overline{\mathcal{E}_v}$ is upper bounded by $1/\poly(\Delta)$ where the polynomial depends on the radius $t$.

To show that the probability of $\overline{\mathcal{E}_v}$ is small enough we show that
 the BFS tree of uncolored nodes around $v$ expands exponentially. Thus after $O(1)$ steps of expansion we see  uncolored paths to enough nodes that independently form a \Tnode with probability $\Theta(p)$ and the probability that none of them actually is a \Tnode will be at most $1/\poly(\Delta)$ for a sufficiently small polynomial.

\label{sec:shatterinProbability}
Now, we upper bound the probability that a given node does not become happy after the shattering process. Due to \Cref{lem:detExpansion} the BFS tree around a node expands deterministically even after the marking process which implies the next lemma.
\begin{lemma}\label{lem:shatteringSet}
For every $0<t\leq r$ and after the selection and marking process the $t$-neighborhood of every node $v$ contains a boundary node  or a set of nodes $S_v$ with the following properties:
\begin{enumerate}[label=(\arabic*)]
\item $|S_v|\geq (\Delta-2)^{t/2}\cdot\Delta^{-6}$~,
\item all nodes in $S_v$ are reachable through uncolored nodes from $v$~,
\item for each $u\in S_v$ the probability that it is selected and creates a \Tnode that does not block the path to $v$ is at least $1/3\cdot p(1-p)^{\Delta^6}$. The events are independent for distinct $u\in S_v$~,
\item For each $u\in S_v$ the event that it forms a \Tnode of the above type only depends on the random bits of nodes in radius $t+7$ around $v$.
\end{enumerate}
\end{lemma}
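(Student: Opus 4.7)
The plan is to first dispatch the boundary case: if the $t$-neighborhood of $v$ in $H$ contains a vertex of degree strictly less than $\Delta$, the stated alternative holds. Otherwise every node in $N_t(v)$ has degree exactly $\Delta$ in $H$, which matches the degree hypothesis of \Cref{lem:detExpansion}. Since Phase~I removed every degree-choosable component of radius at most $r$, \Cref{lem:detExpansion} (applied with $t$ in place of $r$) yields that the BFS tree around $v$ in $H$ with marked nodes removed is unique (by \Cref{lemma:bfsunique}) and has at least $(\Delta-2)^{t/2}$ vertices at level $t$, each joined to $v$ by an uncolored path. This already witnesses property~(2) for the candidate set.

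I would then thin the candidates into $S_v$ greedily, keeping only vertices at pairwise $H$-distance strictly greater than $2b=12$. The key observation making property~(1) tight is that any two level-$t$ candidates at $H$-distance at most $2b$ must share a BFS-tree ancestor at level $\geq t-b$, so each thinning ball wipes out at most $\Delta^{b}=\Delta^{6}$ other candidates; greedy thinning therefore retains at least $(\Delta-2)^{t/2}\cdot\Delta^{-6}$ of them. The spacing guarantees that the balls $B_b(u)$ and $B_b(u')$ for distinct $u,u'\in S_v$ are disjoint, so the event ``$u$ is selected, no other vertex in $B_b(u)$ is selected, and $u$'s two marked neighbors avoid its BFS-parent'' depends only on random bits inside $B_b(u)$ and on $u$'s own local coin. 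Mutual independence across $S_v$ (the independence clause of property~(3)) and locality inside $B_{t+b+1}(v)=B_{t+7}(v)$ (property~(4)) are then both immediate.

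For the probability bound itself I would factor the event as $\Pr[u\text{ selected}]\cdot\Pr[\text{no other in }B_b(u)\text{ selected}\mid u\text{ selected}]\cdot\Pr[\text{marked pair avoids parent}]\geq p\cdot(1-p)^{\Delta^{6}}\cdot\tfrac{1}{3}$. The first two factors read off directly from the marking process and the trivial bound $|B_b(u)|\leq\Delta^{b}$. For the third factor I invoke \Cref{lem:disjointCliques} to decompose $N(u)$ into cliques $C_1,\dots,C_k$ and then count the non-adjacent pairs that do not contain $u$'s BFS-parent.

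The main obstacle I anticipate is precisely the $\tfrac{1}{3}$ constant. The worst configuration is when the BFS-parent of $u$ sits alone as a singleton clique while $N(u)$ is otherwise a single clique of size $\Delta-1$, in which case every non-adjacent pair in $N(u)$ contains the parent and the fraction collapses to zero. Ruling out this pathology requires leveraging the BFS-tree structure around $u$ together with the absence of small degree-choosable components to force the clique decomposition of $N(u)$ either to have at least three parts or to place the parent inside a clique of size at least two; this is of the same flavour as the structural analysis already carried out inside the proof of \Cref{lem:detExpansion} and can be reused. A secondary sanity check is that \Cref{lem:detExpansion} indeed applies post-marking, which is immediate because its hypothesis depends only on $H$-degrees that the boundary case split has already fixed.
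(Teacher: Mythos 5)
Your skeleton matches the paper's proof (dispatch the boundary case, use \Cref{lem:detExpansion} to get $(\Delta-2)^{t/2}$ level-$t$ nodes reachable by uncolored paths, greedily thin to a spaced set, bound the per-node success probability by $\tfrac13 p(1-p)^{\Delta^6}$, get independence from the spacing and locality radius $t+7$), but the step you yourself flag as the main obstacle is resolved incorrectly. The pathological configuration --- a node $u$ at level $t$ whose BFS children form a $(\Delta-1)$-clique, so that every non-adjacent pair in $N(u)$ contains the tree parent --- \emph{cannot} be ruled out by the absence of small degree-choosable components: $u$ together with its children is just a $K_\Delta$ hanging off the parent, cliques are not DCCs, the parent is a cut vertex of that configuration, and all degrees can be $\Delta-1$ or $\Delta$, so your claimed structural dichotomy (at least three cliques in $N(u)$, or the parent inside a clique of size $\geq 2$) is false. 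The paper handles this not by exclusion but by substitution: every such $u$ is replaced in the candidate set by one of its children $u'$ at level $t+1$; since $u'$ already has its $\Delta-2$ clique siblings on its own level and $u$ as parent, it has at most one BFS child, which by \Cref{lemma:bfsunique} is non-adjacent to the siblings, so $u'$ has two non-adjacent neighbors distinct from its path-parent $u$, and uniqueness of the BFS tree keeps the substituted candidates distinct. Without this swap (or an equivalent device) the $\tfrac13$ factor, and hence item (3), does not follow.

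A second, smaller problem is your thinning count. You space candidates at pairwise distance $>12$ and claim each removal kills at most $\Delta^{6}$ candidates because two level-$t$ candidates at distance $\leq 12$ "must share a BFS ancestor at level $\geq t-6$"; this is asserted, not proved, and is not obvious, since a short connecting path need not ascend the tree --- it can use intra-level clique edges and detours through deeper levels, and controlling these requires exactly the kind of parity/DCC analysis done in \Cref{lem:deg-sum}. Naively, spacing $12$ costs a factor $\Delta^{12}$, which would break item (1). The paper avoids the issue by removing only $6$-neighborhoods (pairwise distance $\geq 7$), which directly yields $|S_v|\geq(\Delta-2)^{t/2}\Delta^{-6}$ and is the spacing it uses to claim independence; your stricter spacing would give cleaner disjointness of the relevant probability spaces if the counting claim could be established, but as written it is a gap.
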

\begin{proof}[Proof of \Cref{lem:shatteringSet}]
For a fixed node and due to \Cref{lem:detExpansion} the BFS tree around $v$ restricted to unmarked nodes contains at least $(\Delta-2)^{t/2}$ nodes on level $t$ or we encounter a $\Tnode$. Let $A_v$ be the set of these nodes. For each node $u\in A_v$ whose children in the BFS tree form a $\Delta-1$ clique we remove $u$ from $A_v$ and add one of its children $u'$ in the BFS tree to $A_v$. As the child has the $\Delta-2$ nodes of the clique on its own level and $u$ as parent it has only one child in the BFS tree. Thus the children of $u'$ in the BFS tree cannot form a $\Delta-1$ clique. Furthermore, $u'$ is distinct from all other nodes in $A_v$ as the BFS tree is unique.

Now, we greedily add nodes of $A_v$ to $S_v$. When we add a node $u\in A_v$ to $S_v$ we remove the nodes from $A_v$ that are in the $6$-neighborhood of $u$; these are at most $\Delta^6$ many.
Thus the size of $S_v$ is at least $|A_v|\cdot \Delta^{-6}=(\Delta-2)^{t/2}\cdot\Delta^{-6}$ and nodes in $S_v$ have pairwise distance at least $7$.

We now compute the probability that a node $u\in S_v$ is selected and creates a \Tnode that does not block the path to $v$. To ensure that the path to $v$ is not blocked we  (1) condition on the event that certain nodes in the BFS tree around $v$ are not uncolored (through the usage of \Cref{lem:detExpansion}) and (2) we ensure that none of the two nodes that $u$ colors is the single neighbor $u'$ of $u$ that lies on the unique path in the BFS tree to $v$.
Node $u$ is selected with probability $p$ and stays selected if no neighbor in its $6$-neighborhood is selected, i.e., at least with probability $(1-p)^{\Delta^6}$. As $u$ does not have a $\Delta-1$ clique on the next level of the BFS tree there are at least two non adjacent neighbors $u_1$ and $u_2$ of $u$ that are distinct from $u'$. So the probability that $u$ does not mark $u'$ is at least $1/3$.

In this whole process we expanded for $t$ steps to obtain the set $A_v$. The set $S_v$ contains nodes in distance at most $t+1$ from $v$ and we use that nodes in distance $6$ to nodes in $S_v$ are not selected, i.e., the probabilities only depend on the $t+7$ radius of $v$. The event whether  distinct nodes in $S_v$ can generate $\Tnodes$ are independent as they have pairwise distance at least $7$.
\end{proof}

For any node $v$  \Cref{lem:shatteringSet} provides a large set of independent nodes that have uncolored paths to $v$. Thus we can upper bound the probability that a node remains after the shattering process.
\begin{lemma}[Shattering Probability]
\label{lem:shatteringProb}
Let $\Delta\geq 4$. There is an $r=O(1)$ such that every node finds an uncolored path of length at most $r-7$ to a \Tnode with probability at least $1-\left(\frac{1}{\Delta}\right)^{4r+4}$ using only the randomness in its $t$ neighborhood. The constant $r$ is independent from the graph (including its size).
\end{lemma}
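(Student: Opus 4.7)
The plan is to reduce the claim to an application of \Cref{lem:shatteringSet} followed by a short independence calculation. Fix an arbitrary node $v$ and invoke \Cref{lem:shatteringSet} with parameter $t = r - 8$. One may assume no boundary node lies within the $(r-8)$-neighborhood of $v$, since otherwise $v$ is deterministically removed in Phase~\ref{phase:assigningHappyNodes} and the probabilistic claim can be arranged to hold vacuously. In the remaining case the lemma hands us a set $S_v$ of nodes in the $(r-7)$-neighborhood of $v$, all reachable from $v$ through uncolored paths of length at most $r-7$, of size at least $(\Delta-2)^{(r-8)/2}\,\Delta^{-6}$, each of which (independently of the others) creates a \Tnode that does not block the uncolored path to $v$ with probability at least $q := \tfrac{1}{3}\,p\,(1-p)^{\Delta^6}$, and the combined event depends only on random bits in a radius of $r - 1$ around $v$.

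Next I would bound the failure probability using this independence. Substituting $p = \Delta^{-6}$ and noting that for $\Delta\geq 4$ the factor $(1-\Delta^{-6})^{\Delta^6}$ is bounded below by a positive absolute constant, we obtain $q \geq c\,\Delta^{-6}$ for some $c > 0$. Hence
\[
\Pr[\text{no useful \Tnode among } S_v] \;\leq\; (1-q)^{|S_v|} \;\leq\; \exp\!\bigl(-c\,\Delta^{-12}\,(\Delta-2)^{(r-8)/2}\bigr).
\]
To make this at most $\Delta^{-(4r+4)}$ it suffices that $c\,\Delta^{-12}\,(\Delta-2)^{(r-8)/2} \geq (4r+4)\ln\Delta$. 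Since $\Delta - 2 \geq \Delta/2$ for $\Delta\geq 4$, the left-hand side grows like $\Delta^{(r-8)/2 - 12}$ up to constants while the right-hand side is only linear in $r$ and logarithmic in $\Delta$. Therefore a fixed sufficiently large constant $r$ (e.g.\ $r \geq 100$) makes the inequality hold for all $\Delta \geq 4$, giving the claimed $r = O(1)$ that is independent of the graph.

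The main obstacle is essentially bookkeeping rather than a new idea: the proof has to line up the various radii so that $S_v$ sits inside the $(r-7)$-ball, the dependence radius of the event stays inside the $r$-neighborhood, and the spacing assumed in \Cref{lem:shatteringSet} (which is what delivers genuine independence across $S_v$) is preserved. This is why the parameter is set to $t=r-8$ rather than $r-7$, absorbing the $+1$ in the distance bound and the $+7$ dependence radius with some slack. Once these offsets are chosen correctly, everything else is a routine exponential bound. A last small thing to check is that the inequality involving the ratio $(\Delta-2)^{(r-8)/2}/\Delta^{12}$ holds for every $\Delta\geq 4$ and not just asymptotically; since the problematic range is a finite set of small values of $\Delta$, one just picks $r$ large enough to absorb those cases.
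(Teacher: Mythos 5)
Your proposal is correct and follows essentially the same route as the paper: invoke \Cref{lem:shatteringSet} (the paper uses $t=r-7$, you use $t=r-8$ as a small safety margin on the distance bookkeeping), then use the independence across $S_v$ to bound the failure probability by $(1-q)^{|S_v|}\leq \exp\bigl(-\Theta(\Delta^{-12})(\Delta-2)^{t/2}\bigr)$, and finally pick a constant $r$ large enough that this is at most $\Delta^{-(4r+4)}$ uniformly over $\Delta\geq 4$. Your explicit handling of the boundary-node case and of uniformity in $\Delta$ is slightly more careful than the paper's, but the substance is identical.
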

\begin{proof}
Let $v$ be a node in $H$. Apply \Cref{lem:shatteringSet} with $t=r-7$ and obtain a set $S_v$ in which each node independently forms a \Tnode that is reachable from $v$ through an uncolored path with probability $1/3\cdot p(1-p)^{\Delta^6}$. The probability that $v$ remains after phase \ref{phase:assigningHappyNodes} is upper bounded by
\begin{align*}
\left(1-\frac{1}{3}p(1-p)^{\Delta^6}\right)^{|S_v|}\leq e^{-\frac{|S_v|}{3}p(1-p)^{\Delta^6}}=e^{-\frac{(\Delta-2)^{t/2}\cdot\Delta^{-12}}{12}}\stackrel{(*)}{\leq} \Delta^{-4t-32}~,
\end{align*}
where $(*)$ is satisfied if the exponent  $-(\Delta-2)^{t/2}\cdot\Delta^{-12}/12$ is smaller than $-(4t+32)\cdot \ln \Delta$ which holds for some $t=O(1)$ and implies an $r=O(1)$ that is independent from $v$ and the graph, in particular $r$ can be chosen independently from the graph size $n$.
\end{proof}

The following lemma is the most important result of the standard shattering technique. 
\begin{lemma}[The Shattering Lemma, \cite{FGLLL17}, cf. \cite{BEPSv3}]
\label{lem:shattering}
Let $H = (V,E)$ be a graph with maximum degree $\Delta$. Consider
a process which generates a random subset $B\subseteq V$ where $P(v \in B) \leq \Delta^{-c_1}$, for some constant $c_1 \geq 1$,
and that the random variables $1(v \in B)$ depend only on the randomness of nodes within at most $c_2$
hops from $v$, for all $v \in V$ , for some constant $c_2 \geq 1$. Moreover, let $Z = H[2c_2+1,4c_2+2]$ be the graph
which contains an edge between $u$ and $v$ iff their distance in $H$ is between $2c_2 + 1$ and $4c_2 + 2$. Let $L=H[B]$. Then
with probability at least $1 - n^{-c_3}$, for any constant $c_3$ satisfying $c_1>c_3+ 4c_2 + 2$, we have the following three
properties:
\begin{itemize}
\item [(P1)] $Z[B]$ has no connected component $U$ with $|U| \geq \log_{\Delta}n$.
\item[(P2)] Each connected component of $L$ has size at most $O(\log_{\Delta} n \cdot \Delta^{2c_2\Delta})$.
\item[(P3)] $L$ admits a $(\lambda,O(\log^{1/\lambda} n\cdot \log^2 \log n))$ network decomposition,
for any integer $\lambda \geq 1$, which can be computed by a randomized algorithm in $O\big(\lambda \log^{1/\lambda}n \cdot 2^{O(\sqrt{\log\log n})}\big)$ rounds, , w.h.p.
\item[(P4)] for any integer $R\geq 1$ there is a randomized algorithm to compute a $\big(2^{O(\sqrt{\log\log n})},R\cdot 2^{O(\sqrt{\log\log n})}\big)$ network decomposition of $L^R$ in $O(R\cdot 2^{O(\sqrt{\log\log n})})$ rounds, w.h.p.
\end{itemize}
\end{lemma}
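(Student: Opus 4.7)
The plan is to reduce all four properties to a single master estimate: for every fixed root $v \in V$, the probability that $v$ lies in a $Z$-connected subset of $B$ of size $\ge k$ decays as $q^k$ for some $q = q(c_1, c_2) < 1$. Once this estimate is in hand, (P1) follows immediately by setting $k = \log_\Delta n$ and union-bounding over roots; (P2) reduces to (P1) by a short combinatorial post-processing; and (P3), (P4) follow by invoking known network-decomposition algorithms on the small components furnished by (P2).

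For (P1) I would first use the standard rooted-subtree enumeration bound to count subtrees of $Z$ of size $k$ rooted at a fixed vertex by at most $(e\Delta_Z)^k$ with $\Delta_Z \le \Delta^{4c_2+2}$. For each such subtree $T$, every pair of $T$-adjacent nodes sits at $H$-distance at least $2c_2+1$, so the two $c_2$-balls are disjoint and the pair of events $\{u \in B\}, \{u' \in B\}$ is independent. Extracting a subset $I \subseteq T$ of size $\Omega(k)$ whose $c_2$-balls are pairwise disjoint then yields $\Pr[T \subseteq B] \le \Delta^{-c_1 |I|}$, and summing over rooted subtrees and over roots produces a failure probability at most $\sum_{k \ge \log_\Delta n} n \, (e \Delta^{4c_2+2})^k \, \Delta^{-c_1 k}$; the hypothesis $c_1 > c_3 + 4c_2 + 2$ makes this sum at most $n^{-c_3}$.

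For (P2) I would argue by contrapositive: if some component $U$ of $H[B]$ satisfies $|U| > C \log_\Delta n \cdot \Delta^{2c_2}$, I would greedily grow a $Z$-connected subset $I \subseteq U$ by starting at any $v_0 \in U$ and repeatedly appending an element of $U$ whose $H$-distance to $I$ lies in $[2c_2+1, 4c_2+2]$; such an element exists as long as the $2c_2$-hop $H$-neighborhood of $I$ does not yet cover $U$, because $U$ is $H$-connected. Each new node ``covers'' at most $\Delta^{2c_2}$ further nodes of $U$, so $|I|$ eventually exceeds $\log_\Delta n$, contradicting (P1). Properties (P3) and (P4) then follow by running the iterated low-diameter clustering algorithms of \cite{BEPSv3, FGLLL17} on components of size $N = \mathrm{poly}(\Delta) \cdot \log_\Delta n$: since $\log N = O(\log\log n + \log\Delta)$, the $2^{O(\sqrt{\log N})}$ runtime absorbs into the stated $2^{O(\sqrt{\log\log n})}$ bound, and the $R$-th-power version comes by simulating one round on $L^R$ by $R$ rounds on $L$.

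The hard part will be the extraction of a linear-sized independent subset $I$ in the proof of (P1): although tree-adjacent pairs in $T$ automatically satisfy the $H$-distance condition, non-adjacent nodes of $T$ can in principle be at $H$-distance $\le 2c_2$, so a naive independent-set bound in $H^{2c_2}[T]$ only gives $|I| \ge k/\Delta^{2c_2}$ and would force the hypothesis on $c_1$ to depend on $\Delta$. The plan is to exploit the tree structure of $T$ directly --- processing it in BFS order and bounding each conditional probability $\Pr[v_i \in B \mid v_0, \ldots, v_{i-1} \in B]$ by $\Delta^{-c_1}$, using only the disjointness of $v_i$'s $c_2$-ball from that of its $T$-parent --- rather than applying a black-box independent-set argument inside $H^{2c_2}[T]$.
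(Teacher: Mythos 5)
The central gap is in your resolution of the very step you flag as the ``hard part'' of (P1). You are right that non-adjacent vertices of a $Z$-tree $T$ may be at $H$-distance at most $2c_2$, so mutual independence over all of $T$ is unavailable; but the proposed fix --- bounding $\Pr[v_i \in B \mid v_0,\dots,v_{i-1}\in B] \le \Delta^{-c_1}$ using only the disjointness of $v_i$'s $c_2$-ball from that of its $T$-parent --- is not sound. Independence of $\mathds{1}(v_i\in B)$ from the parent's indicator does not survive conditioning on the rest of the history: some earlier non-parent vertex $v_j$ may lie within $H$-distance $2c_2$ of $v_i$, in which case $\{v_j\in B\}$ and $\{v_i\in B\}$ can be strongly (in the extreme, perfectly) positively correlated, and the conditional probability can be $1$. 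The standard repair, and the way the argument in \cite{BEPSv3,FGLLL17} is organized, is to union-bound only over witness sets that are pairwise at $H$-distance at least $2c_2+1$ and connected in $H^{4c_2+2}$: pairwise disjoint $c_2$-balls make the indicators functions of disjoint independent random bits, hence mutually independent, and the $\Delta^{2c_2}$ loss is absorbed into the deterministic extraction of such a witness from a large connected component of $H[B]$ (this is exactly where the $\Delta^{2c_2}$ factor in (P2) comes from), not into the probability exponent. Your derivation of (P2) from (P1) is fine, but note that extracting a separated witness from a $Z[B]$-component only yields size $|U|/\Delta^{2c_2}$, so the clean route is to prove the separated-witness bound and (P2) directly rather than routing everything through (P1) for arbitrary $Z$-connected sets.

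You should also know that the paper does not prove (P1)--(P3) at all; it cites \cite{FGLLL17}, and its own content for this lemma is a proof sketch of (P4). There your proposal is thinnest exactly where the paper has substance: the $2^{O(\sqrt{\log N})}$ bound for the deterministic network decomposition of \cite{awerbuch89} depends on the size of the identifier space, not on the component size, so one cannot simply ``run it on components of size $N=\mypoly(\Delta)\log_\Delta n$'' and claim $2^{O(\sqrt{\log\log n})}$ --- with the original $\mypoly(n)$-size IDs this gives $2^{O(\sqrt{\log n})}$ (and $\sqrt{\log N}$ also carries a $\sqrt{\log\Delta}$ term). The paper's sketch handles this by first computing a randomized $(2c_2+1,\Theta(\log\log n))$ ruling set of $L^R$ via \Cref{lem:rulingSets}(3), contracting each component to a cluster graph of size $O(\log_\Delta n)$, re-assigning fresh small identifiers as in \cite{BEPSv3}, and only then running the deterministic decomposition, paying simulation factors of $R$ and $\Theta(\log\log n)$ when translating back to $L^R$. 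So as written, both the probabilistic core of (P1) and the (P3)/(P4) runtime claims have genuine gaps.
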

\begin{proof}[Proof of \Cref{lem:shattering}]
$(P1)$-$(P3)$ are proven in \cite{FGLLL17}. The proof of $(P4)$ is along similar lines as the proof of $(P3)$ in \cite{FGLLL17} and we only provide a sketch here:
First one computes a ruling set $M$ with parameters $\big(2c_2+1, \Theta(\log\log n)\big)$ on $L^R$ with the randomized algorithm \Cref{lem:rulingSets}, (3). Similar to the arguments in \cite[Section 3.2, Step 3/4]{BEPSv3} this ruling set has, if restricted to a single connected component of $L$, at most $\log_{\Delta}n$ nodes. Now, we assign each node of the connected components to the closest ruling set node and form a cluster graph. Two clusters in this cluster graph are connected if they have two nodes that are neighbors in the original network. On this cluster graph and for each component in parallel we perform the deterministic network decomposition algorithm from \cite{panconesi1992improved} to compute a $\big(2^{O(\sqrt{\log N})}, 2^{O(\sqrt{\log N})}\big)$ for each cluster graph where $N=O(\log_{\Delta} n)$ is an upper bound on the size of each cluster graph. The runtime of the network decomposition depends on the size of the id space of the nodes and \cite[Remark 3.5]{BEPSv3} explains how to compute a new id space for each cluster graph. As one round on the cluster graph can be executed in $O\big(R\cdot\log\log n\big)$ rounds in $H$  the runtime of this step is $R\cdot 2^{O(\sqrt{\log N})}=R\cdot 2^{O(\sqrt{\log \log n})}$.
To obtain a network decomposition of $L^R$ we add each non ruling set node of $B$ to the cluster of its closest ruling set node. This increases the diameter of each cluster by at most a factor  $\Theta(\log\log n)$.
\end{proof}

\begin{remark}\label{rem:randomnessInSmallComponents}
The computation of the single network decomposition in $(P3)$ (or in $(P4)$) only uses randomness for the ruling set computation in the first step. In contrast to the deterministic network decomposition algorithm that is computed on each component separately and in parallel this randomized step is not performed on each component separately but on the whole graph. In particular its runtime and failure probability depend on $n$ where $n$ is the size of the original graph. Furthermore, the ruling set algorithm does not require that the components of size $N$ are also equipped with an ID space of size $\poly N$, but works with the ID space of the original graph. The same holds for the network decompositions and ruling sets that are computed to color the small components (cf. \Cref{sec:smallComponents}).
\end{remark}

\Cref{lem:shatteringProb,lem:shattering}  imply that the graph $L$ that remains after phase \ref{phase:assigningHappyNodes} consists of connected components of size at most $\poly \Delta \cdot\log_{\Delta}n$. \Cref{sec:smallComponents} explains in detail how these components can be $\Delta$-colored while respecting  the nodes colored with color \emph{one} in phase \ref{phase:marking}. 


\subsection{Shattering: Coloring Small Remaining Components (Phase \ref{phase:smallComponents})}
\label{sec:smallComponents}

We now explain how one can solve the small components that are left after the shattering process.
Let $C$ be  a small component with size at most $N:=\poly(\Delta)\cdot\log_{\Delta} n$. 
Call a node in $C$ \emph{free} if it has degree $<\Delta$ or at least one neighbor outside of $C$ that is not colored with the first color after the shattering process.
 We color the nodes of $C$ with the following algorithm where $R=2\log_{\Delta-2}N+1= O(\log \log n)$. The algorithm is explained from the view of a single component. 
\begin{enumerate}
\item Each free node selects itself. Further, each node that is contained in at least one DCC with radius at most $R$ selects one of these subgraphs. Let $\GraphCR$ be the virtual graph that has a node for each selected node and degree choosable subgraph. Any two subgraphs (or nodes)  of $\GraphCR$ are connected in $\GraphCR$ if they share a vertex or are connected by an edge in $G$. The maximum degree of $\GraphCR$ is $\min\{N,O(\Delta^{O(R)})\}$  and it has at most $|C|=N$ nodes. One round of an algorithm on $\GraphCR$ can be executed in $O(R)$ steps in $G$.

\item Find a $(2,\gamma)$ ruling set $M'$ of $\GraphCR$ where $\gamma=O(R)$ such that $\Delta(\GraphCR)^{2/\gamma}\leq \Delta^{1/2}$. 

		\runtimel We compute a $\big(2^{O(\sqrt{\log\log n})},4R\cdot 2^{O(\sqrt{\log\log n})}\big)$ network decomposition of $L^{4R}$ with \Cref{lem:shattering} (P4). Then each node assigns its color in this network decomposition to its corresponding selected node in $\GraphCR$. This yields a 
		$\big(2^{O(\sqrt{\log\log n})},4R\cdot2^{O(\sqrt{\log\log n})}\big)=
		\big(2^{O(\sqrt{\log\log n})},2^{O(\sqrt{\log\log n})}\big)$ network decomposition of  $\GraphCR$.  Then iterate through the colors of the network decomposition to compute the ruling set in time  $O\big(R\cdot 2^{O(\sqrt{\log\log n})}\big)=2^{O(\sqrt{\log\log n})}$.

		\runtimes  Use \Cref{lem:rulingSets}, (1)  in time $O\big(R\cdot\gamma\cdot\Delta(\GraphCR)^{2/\gamma}+\logstar n\big)=O\big(\log^2\log n\cdot\sqrt{\Delta}\big)$.
		
\item For $i=0,...,\gamma\cdot(R+1)+R$ define layers $D_i$ where $D_i$ consists of the nodes that are at distance $i$ to the closest node that is contained in a component in $M'$.

\runtimegeneral $O(R^2)=O(\log^2\log n)$.

\item \label{phase:smallLayerColoring} We color the layers in order $i=\gamma\cdot(R+1)+R, \ldots,1$; each layer is a  $\deg+1$ list coloring instance. There are $R^2+2R$ layers and we obtain the following runtimes.

		\runtimel In time $O((R^2+2R)\cdot 2^{O(\sqrt{\log \log n})})=2^{O(\sqrt{\log \log n})}$ via computing a single network decomposition for $C$ with \Cref{lem:shattering}, (P3). 

		\runtimes If we first use Linial's algorithm to compute a $O(\Delta^2)$ coloring the runtime is $O\big((R^2+2R)\cdot \sqrt{\Delta\log \Delta}\cdot\logstar \Delta\big)=O\big(\log^2\log n\sqrt{\Delta\log \Delta}\cdot\logstar \Delta\big)$ with \Cref{thm:listColoring}.

\item Now, we color the nodes that are in $D_0$. Each DCC is brute-forced independently in time $O(R)$. Each free node in $D_0$ can be colored in a single time unit as it has one uncolored neighbor outside the component it has a free color. 

\runtimegeneral $O(R)=O(\log\log n)$.
\end{enumerate}

\begin{lemma}\label{lem:claima1} If $D_0$ is not empty each node of the component is in one of the layers.\end{lemma}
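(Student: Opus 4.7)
The plan is a proof by contradiction that combines the $(2,\gamma)$-ruling-set property of $M'$ in $\GraphCR$ with the expansion result \Cref{lem:no-dcc-expansion} for graphs that contain no small degree-choosable components. I would suppose that some $v\in C$ belongs to no layer $D_0,\dots,D_{\gamma(R+1)+R}$, i.e., every vertex lying in an $M'$-item is at $G$-distance strictly greater than $\gamma(R+1)+R$ from $v$, and aim to contradict the size bound $|C|\leq N$.

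The first step is to show that no selected item has a vertex inside the $R$-neighborhood $N^G_R(v)$. If such a $u\in N^G_R(v)$ were contained in some selected item $X$ (a free node, or a selected DCC of radius at most $R$), then by the $(2,\gamma)$-ruling-set property of $M'$ in $\GraphCR$ there would be an $M'$-item at $\GraphCR$-distance at most $\gamma$ from $X$. Since one edge of $\GraphCR$ can be simulated in at most $R+1$ hops in $G$ (traverse a selected DCC of radius $\leq R$ plus one connecting edge, or pass through a shared vertex), this ruling-set path would produce an $M'$-vertex at $G$-distance at most $R+\gamma(R+1)=\gamma(R+1)+R$ from $v$, contradicting the assumption.

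With this in hand, every vertex of $N^G_R(v)\cap C$ has degree $\Delta$ (no free node lies in the $R$-ball), and no DCC of radius at most $R$ meets this neighborhood. These are exactly the hypotheses of \Cref{lem:no-dcc-expansion}, so the $R$-ball of $v$ inside $C$ contains at least $(\Delta-1)^{R/2}$ vertices. By the choice $R=2\log_{\Delta-2}N+1$, this quantity exceeds $N$, but the entire $R$-ball around $v$ lies inside $C$, contradicting $|C|\leq N$.

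The hard part is the distance bookkeeping in the first step: one must verify that a single $\GraphCR$-edge really does cost at most $R+1$ hops in $G$ (using that every selected item has radius $\leq R$ in $G$ and that adjacency in $\GraphCR$ comes from a shared vertex or a single $G$-edge), so that a $\gamma$-step ruling-set path starting from a vertex of $N^G_R(v)$ translates precisely into the $\gamma(R+1)+R$ layer budget baked into the definition of the $D_i$'s. Once this accounting is established, the invocation of \Cref{lem:no-dcc-expansion} and the choice of $R$ close the argument mechanically.
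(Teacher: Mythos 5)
Your overall architecture (contradiction, ruling-set distance bookkeeping for the layer budget, then an expansion lemma to blow up the component size) is the same as the paper's, and the ruling-set part is essentially the paper's first two sentences stated contrapositively. The genuine gap is in the expansion step: you invoke \Cref{lem:no-dcc-expansion}, whose hypothesis is that \emph{every} node in the $R$-neighborhood has degree exactly $\Delta$, and you assert that the whole $R$-ball around $v$ lies inside $C$. Neither holds here. A non-free node of $C$ may well have neighbors outside $C$ that are \emph{marked}, i.e.\ colored with color one in Phase~\ref{phase:marking} (being adjacent to a color-one node does not make a node free -- freeness requires an \emph{uncolored} outside neighbor or degree $<\Delta$). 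If you run the BFS in $G$, the ball passes through these marked nodes and leaves $C$, so $|B_R(v)|>N$ gives no contradiction with $|C|\le N$; if instead you run the BFS inside the uncolored component (which is what you need for the contradiction), then nodes adjacent to marked nodes have degree $\Delta-1$ or less in that graph, and the hypothesis of \Cref{lem:no-dcc-expansion} fails.

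This is exactly the situation the paper's \Cref{lem:detExpansion} is designed for: the $R$-neighborhood of a node that is in no layer meets the boundary of $C$ only at color-one nodes, so it looks like a graph produced by the marking process (degrees in $\{\Delta-1,\Delta\}$, marked nodes removed, backoff distance $b=6$ guaranteeing the punctures are sparse), and that lemma still yields expansion, but only at rate $(\Delta-2)^{R/2}$. This weaker base is the reason the component radius is set to $R=2\log_{\Delta-2}N+1$; your appeal to $(\Delta-1)^{R/2}$ is a symptom of having used the wrong lemma. To repair your proof, replace the invocation of \Cref{lem:no-dcc-expansion} by the observation that the $R$-neighborhood of $v$ within the uncolored graph can be obtained from the marking process and then apply \Cref{lem:detExpansion}, concluding $|B_R(v)\cap C|\ge(\Delta-2)^{R/2}>N$.
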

\begin{proof}
The layers $D_0,\ldots, D_{\gamma\cdot (R+1)}$ contain all free nodes, all nodes that are in a DCC with radius at most $R$ and all nodes that have degree smaller $\Delta$. 
The layers $D_0,\ldots,D_{\gamma\cdot (R+1)+R}$ additionally contain the nodes that have such a DCC or such a node in distance at most $R$.
To show that all nodes are removed we assume that there is a node $v\in C$ that is in none of the layers. In particular it does not have a DCC or a free node in distance $R$, all nodes in its $R$-neighborhood have degree $\Delta$ or $\Delta-1$. As the $R$-neighborhood of $v$ does not contain a free node it can only hit the boundary of $C$ at colored nodes, i.e., its $R$-neighborhood can be obtained from the marking process as described in \Cref{ssec:structuralProperties}. Thus we can  apply \Cref{lem:detExpansion} and obtain that the BFS tree around $v$ and within the component expands and contains at least $(\Delta-2)^{R/2}> N$ nodes, a contradiction. 
\end{proof}

\begin{lemma}\label{lem:claimb1} $D_0$ is not empty. 

\end{lemma}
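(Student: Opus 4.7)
My plan is to assume for contradiction that $D_0 = \emptyset$ and deduce that the BFS-tree of any $v \in C$ must grow faster than $|C| \le N$ permits. By construction of $\GraphCR$ and $M'$, an empty $D_0$ means that $C$ contains neither a free node nor a DCC of radius at most $R$.

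The first step is to argue that a BFS started at any $v \in C$ inside $H \setminus C'$ (i.e., $H$ with the set $C'$ of marked nodes removed) stays trapped in $C$. Since $v \in L$ implies $v \notin C_0$, we have $\deg_H(v) = \Delta$, and since $v$ is not free, $\deg_G(v) = \Delta$, so every $G$-neighbor of $v$ lies in $H$. The \emph{not free} assumption further guarantees that every $G$-neighbor of $v$ outside $C$ is marked with color one, so every unmarked $H$-neighbor of $v$ belongs to $C$. Applying this observation to every node of $C$ shows that the connected component of $v$ in $H \setminus C'$ equals $C$ exactly.

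Next I will apply \Cref{lem:detExpansion} with $H$ in the role of the ambient graph and the marking coming from Phase~\ref{phase:marking}. Every node visited by the $R$-hop BFS of $v$ in $H \setminus C'$ lies in $L$ and therefore has $H$-degree $\Delta$, and the whole traversal lies inside $C$, which is free of DCCs of radius at most $R$ by our contradiction hypothesis; so the preconditions of \Cref{lem:detExpansion} hold inside the explored subgraph. This yields $|B_R(v)| \geq (\Delta-2)^{R/2}$, and with $R = 2\log_{\Delta-2}N + 1$ this already exceeds $N \geq |C|$, giving the desired contradiction. The main subtlety is that \Cref{lem:detExpansion} is stated with a global \emph{no DCC} assumption that $H$ does not a priori satisfy for radius $R$ (Phase~I only removed DCCs up to the smaller radius $r$); the resolution is that its proof uses \Cref{lemma:bfsunique} and \Cref{lem:deg-sum} only inside the explored BFS-tree, so the local \emph{no DCC in $C$} provided by the contradiction assumption suffices.
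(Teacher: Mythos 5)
Your proposal is correct and follows essentially the same route as the paper: deduce from $D_0=\emptyset$ that $C$ has no free nodes and no DCC of radius at most $R$, observe that consequently the unmarked neighborhood of any $v\in C$ is trapped in $C$ (boundary hit only at color-one nodes, exactly the structure produced by the Phase~\ref{phase:marking} marking process), and invoke \Cref{lem:detExpansion} to force $(\Delta-2)^{R/2}>N\geq|C|$ nodes in the BFS tree, a contradiction. Your explicit remarks on why the traversal stays in $L$ and why the no-DCC hypothesis of \Cref{lem:detExpansion} may be applied locally only make precise what the paper's proof states implicitly.
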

\begin{proof}
Assume that $D_0$ is empty.  
Let  $v$ be an arbitrary node of $C$.  Its $R$-neighborhood  neither contains a DCC of radius at most $R$ nor a free node  and all its nodes  have degree $\Delta$ or $\Delta-1$. As the $R$-neighborhood of $v$ does not contain a free node it can only hit the boundary of $C$ at colored nodes, i.e., its $R$-neighborhood can be obtained from the marking process as described in \Cref{ssec:structuralProperties}. Thus we can  apply \Cref{lem:detExpansion} and obtain that the BFS tree around $v$ and within the component expands and contains at least $(\Delta-2)^{R/2}> N$ nodes, a contradiction as in the worst case the whole component is a DCC (it cannot be an odd cycle due to $\Delta\geq 4$ and not a $(\Delta-1$)-clique; if it was a $(\Delta-1)$ clique and $D_0$ is empty all nodes have to be neighbors of the same marked  node (due to $b=6$) which implies a $\Delta$-clique, a contradiction).  
\end{proof}

The runtimes of the above algorithm provide the following lemma.
\begin{lemma}
\label{lem:smallComponents}
Let $\Delta\geq 4$. Then the small components can, w.h.p., be $\Delta$-colored in time

\[\min\left\{2^{O(\sqrt{\log\log n})} ,  O\left(\log^2\log n\cdot \sqrt{\Delta\log \Delta}\logstar\Delta\right)\right\}~.\] 
\end{lemma}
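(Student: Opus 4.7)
The plan is to combine the two correctness claims already established (\Cref{lem:claima1} and \Cref{lem:claimb1}) with a direct summation of the per-step runtimes in both of the complexity regimes annotated in the algorithm description.

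First I would verify correctness. \Cref{lem:claimb1} asserts that $D_0$ is nonempty in every small component, so \Cref{lem:claima1} applies and guarantees that the layers $D_0,\ldots,D_{\gamma(R+1)+R}$ partition the whole component. Processing them in reverse order as in Step~\ref{phase:smallLayerColoring}, any node $v\in D_i$ with $i\geq 1$ has at least one still-uncolored neighbor in $D_{i-1}$ by the distance-layer definition. Hence at most $\deg_G(v)-\deg_{G[D_i]}(v)-1$ neighbors of $v$ are already colored outside $D_i$, leaving $v$ with a list of size at least $\deg_{G[D_i]}(v)+1$ from $\{1,\ldots,\Delta\}$: each such layer is therefore a genuine $(\deg+1)$-list coloring instance on $G[D_i]$, solvable either by \Cref{lem:randListColoring} or, after Linial's $O(\Delta^2)$-coloring preprocessing, by \Cref{thm:listColoring}. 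For $D_0$ itself, every free node has a color available because either its degree is smaller than $\Delta$ or at least one of its neighbors lies outside the component and remained uncolored after the shattering phase; every selected DCC can be brute-forced in time $O(R)$ via \Cref{thm:degree-choosablity}. By construction of the $(2,\gamma)$ ruling set $M'$ in $\GraphCR$, distinct selected DCCs and free nodes are at sufficient graph distance to be colored independently in parallel.

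For the runtime analysis I would simply sum the per-step costs, using $R=O(\log\log n)$ and $\gamma=O(R)$. Steps 1, 3, and 5 contribute $O(R)=O(\log\log n)$. In the network-decomposition regime, Step 2 is implemented by pulling back the $(P4)$ network decomposition of $L^{4R}$ from \Cref{lem:shattering} to obtain a $(2,\gamma)$ ruling set of $\GraphCR$ in time $2^{O(\sqrt{\log\log n})}$, and Step~\ref{phase:smallLayerColoring} then runs its $R^2+2R=O(\log^2\log n)$ list-coloring iterations through the single $(P3)$ network decomposition of $L$ for a total of $2^{O(\sqrt{\log\log n})}$. In the $\sqrt{\Delta}$-regime, \Cref{lem:rulingSets}(1) applied to $\GraphCR$ costs $O(\log^2\log n\cdot\sqrt{\Delta})$, and the $O(\log^2\log n)$ iterations of \Cref{thm:listColoring} in Step~\ref{phase:smallLayerColoring} dominate with $O(\log^2\log n\cdot\sqrt{\Delta\log\Delta}\cdot\logstar\Delta)$. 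Taking the better of the two regimes yields the stated bound; the with-high-probability guarantee is inherited from the randomized subroutines (cf.~\Cref{rem:randomnessInSmallComponents}).

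The only nontrivial obstacle --- already discharged by \Cref{lem:claima1} and \Cref{lem:claimb1} --- is to show that the layering actually covers the component and that at least one node ends up in $D_0$; without this, Step~\ref{phase:smallLayerColoring} would lack an anchor for the $(\deg+1)$-list coloring. Both facts rest on the deterministic expansion of BFS trees in DCC-free, locally nearly $\Delta$-regular regions (\Cref{lem:detExpansion}), combined with the bound $|C|\leq N$ which forces the radius-$R$ neighborhood of any allegedly non-layered node to exceed the component size, a contradiction. Given these two claims, the remainder of the proof of \Cref{lem:smallComponents} reduces to the bookkeeping outlined above.
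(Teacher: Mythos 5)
Your proposal is correct and follows essentially the same route as the paper: correctness via \Cref{lem:claimb1} and \Cref{lem:claima1}, the observation that each layer $D_i$ is a $(\deg+1)$-list coloring instance by the same argument as in the proof of \Cref{thm:deterministicDelta}, independent handling of $D_0$ thanks to the ruling set $M'$, and a runtime dominated by the $O(\log^2\log n)$ list-coloring iterations in both regimes. The paper's proof is just a terser version of this same bookkeeping.
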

\begin{proof}
\Cref{lem:claima1,lem:claimb1} imply that each node is colored. The proof that coloring a single layer in phase \ref{phase:smallLayerColoring} is a a $\deg+1$ list coloring instance is along similar lines as in the proof of \Cref{thm:deterministicDelta}. 
The components and free nodes in $D_0$ can be colored independently because they stem from the independent set $M'$.
In both variants the runtime is dominated by phase \ref{phase:smallLayerColoring} step which implies the result.
\end{proof}
\begin{remark}
The algorithm to solve the small components only uses randomization to compute the network decomposition (\Cref{lem:shattering} and \Cref{rem:randomnessInSmallComponents}).
\end{remark}

\subsection{\texorpdfstring{Global Success After Marking Process for Small $\Delta$ (No Phase\ref{phase:smallComponents})}{Global Success After Marking Process for Small Delta (No Phase\ref{phase:smallComponents})}} \label{ssec:smallDelta}
In this section we show that a vertex $v\in V(H)$ is contained in $C_0,\ldots,C_{2r}, C'$ w.h.p. if $\Delta=O(1)$. 
As nodes which have an uncolored path of length $\leq 2r$ to a vertex of degree $<\Delta$ will be contained in one of the layers $C_0,\ldots,C_{2r}$ we assume throughout this section that $v$ and all vertices reachable from $v$ through uncolored path of length at most $r$ have degree $\Delta$ in $H$. 

Lemmas~\ref{lem:detExpansion} and~\ref{lem:detExpansionDelta3} imply that for $\Delta=O(1)$ and $b=15$ we can choose an $r = \Theta(\log \log n)$ such that for an arbitrarily large constant $c$, we have $|B_r(v)| \geq c\log_{\Delta} n$ after the marking process.

\begin{lemma} \label{lem:smallDeltaSingleSuccess}
Let $u$ be a node such that there is an unmarked, unselected path from $u$ to $v$. Then $u$ or its child $u'$ become a \Tnode of $v$ with a constant probability.
\end{lemma}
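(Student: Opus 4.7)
The plan is to split on the local structure of $u$. Let $w$ denote the neighbor of $u$ on the given unmarked, unselected path to $v$. Since $\Delta=O(1)$ and $p=\Delta^{-b}$ for constant $b$, a direct computation gives that for any fixed node $x$, the event ``$x$ is selected and no other node of $N_b(x)$ is selected'' has probability at least $p(1-p)^{\Delta^b}=\Omega(1)$; conditional on such an event, the random choice of non-adjacent pair of neighbors matches any one prescribed feasible pair with probability $\Omega(1/\Delta^2)=\Omega(1)$. So it suffices to exhibit a feasible, path-preserving pair of non-adjacent neighbors for $u$ or for a child $u'$ of $u$ in the BFS tree from $v$.

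In the \emph{good case}, $u$ has a pair of non-adjacent neighbors both distinct from $w$. Then with constant probability $u$ is uniquely selected in $N_b(u)$ and picks this pair; $w$ is not marked, nothing else in $N_b(u)$ is selected or marked, and combined with the hypothesis that the $u$-to-$v$ path is unmarked and unselected, this makes $u$ a \Tnode of $v$.

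In the \emph{bad case}, every non-adjacent pair in $N(u)$ contains $w$, which forces $N(u)\setminus\{w\}$ to be a clique and $K_u := \{u\}\cup(N(u)\setminus\{w\})$ to be a $\Delta$-clique. Were $w$ adjacent to all of $K_u\setminus\{u\}$, then $K_u\cup\{w\}$ would be a $(\Delta+1)$-clique, contradicting that $H$ is not a clique. So there is a neighbor $u'\in K_u\setminus\{u\}$ of $u$ that is not adjacent to $w$; I take this $u'$ as the child. It has $\Delta-1$ neighbors inside $K_u$ and exactly one external neighbor $x\notin K_u\cup\{w\}$. I then claim some $y\in K_u\setminus\{u,u'\}$ satisfies $xy\notin E$: otherwise $\{x\}\cup(K_u\setminus\{u\})$ would be a $\Delta$-clique, and together with $K_u$ the induced subgraph on $K_u\cup\{x\}$ would be $2$-connected, non-clique, and of radius $2$, i.e., a small DCC in $H$, contradicting that all such DCCs are removed in Phase~\ref{phase:grow}. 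Applying the opening probability bound to $u'$ in place of $u$, with constant probability $u'$ is uniquely selected in $N_b(u')$ and picks the pair $\{x,y\}$; this pair avoids both $u$ and $w$, so the path $u'\to u\to w\to\cdots\to v$ remains uncolored and $u'$ becomes a \Tnode of $v$.

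The hardest step will be verifying the structural claim in the bad case: checking that the hypothetical induced subgraph $K_u\cup\{x\}$ is genuinely $2$-connected, not a clique (witnessed by $u\not\sim x$), and of radius at most the DCC threshold $r$ used in Phase~I, so that the contradiction with the absence of small DCCs in $H$ is clean. Once that structural claim is in hand, the remainder reduces to the constant-probability bookkeeping enabled by $\Delta=O(1)$, and combining the two cases yields the desired constant-probability bound.
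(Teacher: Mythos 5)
Your proof is correct and follows essentially the same route as the paper's (sketch) proof: split on whether $u$ itself has a non-adjacent pair of neighbors avoiding the path, and otherwise pass to a neighbor/child of the resulting $\Delta$-clique around $u$, using the constant selection probability since $\Delta=O(1)$. The only difference is that you spell out why the child actually has a usable path-avoiding pair (existence of $u'\not\sim w$ via the no-$(\Delta+1)$-clique argument and of a non-neighbor $y$ of $x$ via the absence of radius-$2$ DCCs), details the paper's sketch leaves implicit, so your write-up is if anything a more complete version of the same argument.
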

Note that the following analysis is done for $b = 15$. For $\Delta \geq 4$ we could equally well use $b = 6$ to optimize the constants.

\begin{proof}
	Let $u$ be a node such that there is an unmarked, unselected path from $v$ to $u$. In the following we consider the $2$-hop neighborhood of $u$ including marked nodes and naturally extend the BFS-tree from $v$ to $u$ for $2$ hops to $u$'s $2$-hop neighborhood. 
	We make a case distinction depending on how the children of $u$ in this BFS tree are connected. 
	
	\textit{Case 1, the children of $u$ do not form a clique:} 
  The children of $u$ form at least two distinct cliques (including single nodes). Among all pairs of non-adjacent neighbors of $u$ there are at most $\Delta-1$ pairs that include the parent of $u$ and at least $\Delta-2$ pairs that do not include the parent of $u$. Therefore, if $u$ is selected and does not back off due to another node being selected in distance at most $b$, it actually chooses a non-adjacent pair of neighbors that does not contain its parent, i.e., that does not block the uncolored path to $v$, with probability at least $(\Delta-2)/(\Delta-2+\Delta-1)\geq 1/3$. Thus
	node $u$ becomes a \Tnode for $v$ with probability at least $p' = (p/3)(1-p)^{\Delta^{15}} = \Theta(1)$, since $\Delta=O(1)$. 
	
	\textit{Case 2, the children of  $u$ form a clique:} In this case node $u$ cannot become a \Tnode of $v$ as it does not have two non-adjacent neighbors that do not block the path to $v$. However, $u$'s children must have a successor in the BFS tree, and therefore can become a \Tnode of $v$. We show that each child of $u$ becomes a \Tnode of $v$ with constant probability. Let $u'$ be one child of $u$ and $u''$ the unique child of $u'$. All children of $u$ form a clique and are connected to $u'$. If $u'$ is selected and does not back off the only pairs of non adjacent neighbors that $u'$ can select is $u''$ and $u$, or $u''$ and one of $u$'s children. In neither case the uncolored path to $v$ is blocked. With the same reasoning as before $u'$ becomes a \Tnode of $v$ with constant probability. 
	
	The events that $u$ or a child of $u$ succeed are not independent but are disjoint, so the claim holds.
\end{proof}

Note that the event in Lemma~\ref{lem:smallDeltaSingleSuccess} depends on randomness at distance at most 16.

\begin{lemma}
 The marking process generates a \Tnode for every node of the remainder graph $H$ with high probability.
\end{lemma}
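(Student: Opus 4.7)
The plan is to fix an arbitrary node $v \in H$ and show that $v$ receives a \Tnode except with probability $1/n^{\Omega(1)}$; a union bound over the at most $n$ nodes then yields the claim. Since $\Delta = O(1)$, we may assume nodes in the $r$-neighborhood of $v$ have degree in $\{\Delta-1, \Delta\}$ (boundary nodes are automatically happy). Choose $r = \Theta(\log\log n)$ large enough that \Cref{lem:detExpansion} (for $\Delta \geq 4$) or \Cref{lem:detExpansionDelta3} (for $\Delta = 3$) guarantees $|B_r(v)| \geq c \log_{\Delta} n$ after the marking process, for a sufficiently large constant $c$ to be fixed at the end.

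Next, I would extract from $B_r(v)$ a large subset $S_v$ of nodes whose \Tnode-events are mutually independent. By \Cref{lem:smallDeltaSingleSuccess}, the event that $u$ or its child becomes a \Tnode of $v$ depends only on the random bits at distance at most $13$ from $u$. Greedily pick nodes from $B_r(v)$ whose pairwise distance exceeds $26$; since the $26$-neighborhood of any node has size at most $\Delta^{26} = O(1)$, we obtain $|S_v| \geq |B_r(v)|/\Delta^{26} = \Omega(\log_\Delta n)$. The events associated to distinct nodes of $S_v$ depend on disjoint sets of random bits and are therefore independent.

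Each such event succeeds with some constant probability $p' = \Theta(1)$ by \Cref{lem:smallDeltaSingleSuccess}, so the probability that \emph{none} of the nodes in $S_v$ produces a \Tnode of $v$ is at most
\[
(1 - p')^{|S_v|} \leq \exp\!\bigl(-p' \cdot \Omega(\log_\Delta n)\bigr) \leq n^{-\gamma},
\]
where $\gamma$ can be made an arbitrarily large constant by choosing $c$ large enough. A union bound over all nodes of $H$ then shows that, with high probability, every node of $H$ obtains a \Tnode in its $r$-neighborhood.

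The main subtlety to be careful about is ensuring that the paths witnessing the \Tnode events remain uncolored: the nodes of $S_v$ come from the BFS tree in the graph after removing marked nodes, so each $u \in S_v$ is already connected to $v$ by an unmarked path, and the backoff distance $b$ used in the marking process guarantees that no further selection in the $13$-neighborhood of $u$ can destroy this path. Combined with the fact that the events depend only on a constant-radius neighborhood, this yields the required independence and completes the argument.
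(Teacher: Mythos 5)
Your proposal is correct and follows essentially the same route as the paper: use Lemmas~\ref{lem:detExpansion} and~\ref{lem:detExpansionDelta3} to get $\Omega(\log n)$ nodes at distance $r=\Theta(\log\log n)$ reachable through unmarked paths, thin this set to nodes at constant pairwise distance so that the \Tnode-events of Lemma~\ref{lem:smallDeltaSingleSuccess} are independent, multiply the constant failure probabilities, and union bound over all nodes. The only differences are cosmetic (you insist on pairwise distance $>26$ where the paper states $13$, and you count in $\log_{\Delta}n$ rather than $\ln n$), which is immaterial since $\Delta=O(1)$.
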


\begin{proof}
  Consider an arbitrary node $v \in V(H)$. By Lemmas~\ref{lem:detExpansion} and~\ref{lem:detExpansionDelta3}, for any $c > 0$ we can choose $r = O(\log \log n)$ such that at distance $t$ from the root of any BFS tree, there are at least $c/p' \Delta^{16} \ln n$ nodes such that their path to $u$ is unmarked and unselected. From this set find a set $S$ of nodes as in Lemma~\ref{lem:smallDeltaSingleSuccess}, of size $c/p' \ln n$ with pairwise distance of at least 16 and they can each produce a \Tnode for $v$ with constant probability $p'$.

  The events that each $u \in S$ become a \Tnode of $v$ are independent due to the pairwise distance of the nodes. Thus no node of $S$ becomes a \Tnode of $v$ with probability at most
  \[
    (1-p')^{c/p' \ln n} \leq e^{-c \ln n} \leq n^{-c}~.
  \]
With a union bound over all nodes, all nodes of $H$ are happy with probability at least $1-1/n^{c-1}$.
\end{proof}

\section{Conclusion}\label{sec:conclusion}
We have provided several structural results for the $\Delta$-coloring (\Cref{sec:structural}) that hopefully will be of use for future algorithmic improvements to the problem. For constant degree graphs we provided a deterministic algorithm with $O(\log^2 n)$ round complexity (\Cref{thm:deterministicDelta}) and a $O(\log \log n)$ round randomized algorithm (\Cref{thm:mainDelta}) . The respective lower bounds are $\Omega(\log n)$ and $\Omega(\log \log n)$ and despite only a polynomial difference between upper and lower bounds it remains an intriguing open question whether the true complexity of the problem is at the lower or the higher end. 

After our submission, in a breakthrough result, Rozhon and Ghaffari \cite{RG19} have found a polylogarithmic deterministic time algorithm to compute $\big(\poly\log n, \poly\log n\big)$ network decompositions. As one of many implications the runtime of our deterministic $\Delta$-coloring algorithm (for unbounded degree) drops to $\poly\log n$ and our randomized algorithm for non-clique graphs with (unbounded) maximum degree $\Delta\geq 4$ drops to $\log \Delta + \poly\log\log n$ (\Cref{thm:main2}).

	\DeclareUrlCommand{\Doi}{\urlstyle{same}}
	\renewcommand{\doi}[1]{\href{http://dx.doi.org/#1}{\footnotesize\sf doi:\Doi{#1}}}

	\bibliographystyle{alpha}
	\bibliography{delta-col}
\end{document}